\title{Working title?}
\author{Tova Brown
        \and
        Nicholas M. Ercolani
        }
\newtheorem{theorem}{Theorem}
\newtheorem{lemma}{Lemma}
\newtheorem{proposition}{Proposition}
\newtheorem{corollary}{Corollary}
\newcommand{\csch}{\textup{csch}}
\newcommand{\bea}{\begin{eqnarray}}
\newcommand{\eea}{\end{eqnarray}}
\newcommand{\beann}{\begin{eqnarray*}}
\newcommand{\eeann}{\end{eqnarray*}}
\begin{document}
\newpage

\title{Integrable Mappings from a Unified Perspective}

\author{Tova Brown}
\address{}
\email{tlindberg@math.arizona.edu}

\author{Nicholas M. Ercolani}
\address{Department of Mathematics, University of Arizona, 
Tucson, AZ}
\email{ercolani@math.arizona.edu}
\thanks{Supported by NSF grant DMS-1615921}
\thanks{The authors thank Joceline Lega for orignally encouraging us to study the problems considered here from a dynamical systems perspective, both theoretically and numerically, and for many helpful conversations in this regard along the way.}

\begin{abstract}
Two discrete dynamical systems are discussed and analyzed whose trajectories encode significant explicit information about a number of problems in combinatorial probability, including graphical enumeration on Riemann surfaces and random  walks in random environments. The two models are integrable and our analysis uncovers the geometric 
sources of this integrability and uses this to conceptually explain the rigorous existence and structure of elegant closed form expressions for the associated probability distributions. Connections to asymptotic results are also described. The work here brings together ideas from a variety of fields including dynamical systems theory, probability theory, classical analogues of quantum spin systems, addtion laws on elliptic curves, and links between randomness and symmetry.
\end{abstract}



\maketitle
\tableofcontents

\section{Introduction}
The modern subject of dynamical systems can often be described in terms of a number of fundamental dichotomies such as discrete versus continuous or integrable versus chaotic. The diversity that these dichotomies represent often invite comparisons across the divides that actually provide new insights into the respective extremes that can, in turn, lead to new mathematical developments. For example in the case of the former dichotomy, methods stemming from combinatorial analysis or arithmetic can be brought to bear on discrete systems that have no immediate analogue in the continuous setting but may still suggest mathematical themes that transcend the dichotomy. Such was the case, in the middle of the last century, when dynamic mappings on the interval and associated period doubling cascades led to notions of universality and phase transitions that were reminiscent of themes from statistical mechanics. Similarly, studies of chaotic systems or systems of more general ergodic type were spurred on by studying perturbations of {\it nonlinear} integrable dynamical systems, providing many concrete examples of so-called KAM phenomena with suggestions of connections to stochastic analysis.

More recently new bridges across these divides have emerged stimulated by developments in
random matrix theory, analytical combinatorics and related areas of probability theory and mathematical physics. In broad terms the realization that underlies these developments is that random settings in the presence of symmetry leads to new universality classes in the probabilistic or statistical mechanical sense; moreover, the symmetries present here frequently lead to the asymptotic distributions of these universality classes being describable in terms of integrable systems theory. A perhaps by now classical example of this is the asymptotic analysis of the longest increasing sequence in a random permutation \cite{AD99}, a long-standing problem in the area of asymptotic combinatorial probability, first posed by Erd\"os. It was solved in terms of powerful analytical methods of Riemann-Hilbert analysis arising in the theory of integrable PDE and whose universal distribution, due to Tracy and Widom, is expressible in terms of a Painlev\'e transcendent. This celebrated result has had ramifications in the study of random walks on Lie algebras\cite{BJ02}, quantum integrable systems \cite{OConn} and non-equilibrium statistical mechanics \cite{PS00, Jo98}. 

What we will discuss in this paper is certainly less extensive in scope than the results just mentioned; however, it is illustrative of these trends and it does bring together in a novel and unified way themes from combinatorial probability theory, dynamical systems theory, symmetry groups, algebraic geometry and asymptotic analysis. 

\subsection{Outline}
We will begin in Section \ref{background} with a careful description of the two dynamical systems we want to study, together with an indication of the source of their integrabilty and a numerical study of their trajectories, particularly the ones of interest to us. Then, in Section \ref{combprob}, we will reveal the combinatorial and probabilistic significance of the special orbits we have identified, as well as the hierarchy of combinatorial generating functions whose elegant closed form expressions will be the object of our analysis in the remainder of the paper.  
That analysis, in terms of elliptic function theory, its solitonic degenerations, and associated algebro-geometric features of the dynamical phase space, will be detailed in Sections
\ref{e-param1} and \ref{e-param0}. Finally in Section \ref{Conclusions} we will amplify upon the combinatorial and probabilistic relevance of our work and indicate some directions for related future research.

\section{Background} \label{background}

The explicit focus of this paper is on  two two-dimensional autonomous (that is, whose coefficients are $n$-independent) discrete dynamical systems of the form
\begin{subequations}\label{eq: discrete dynamics general c}
\begin{align}
x_{n+1} & = \frac{x_n - 1}{gx_n} - cx_n - y_n \\
y_{n+1} & = x_n,
\end{align}
\end{subequations}
equivalently given by the second-order recurrences
\begin{equation}\label{eq: recursion general c}
x_n = 1 + gx_n\left(x_{n+1} + cx_n + x_{n-1}\right),
\end{equation}
for $c$ equal to either $0$ or $1$, and with system parameter $g$. Both of these systems are integrable in a sense that will be described shortly; but despite that and the fact that superficially they look quite similar, the structure of this integrability looks quite different. However, in the remainder of this subsection we will  focus on the similarities.

An orbit, or trajectory, in either of these systems refers to the infinite sequence of points $\left\{(x_n, y_n)\right\}$ generated from an initial condition $(x_0, y_0)$ according to equations \eqref{eq: discrete dynamics general c}. Everything is considered as a function of the system parameter $g$.

Both systems have the same fixed point structure: two fixed points, one hyperbolic and the other a center. A fixed point is of the form $(x^*, x^*)$ where $x^*$ satisfies
\begin{equation}\label{eq: fixed point equation general c}
x^* = 1 + (2+c)gx^{*2}.
\end{equation}
Therefore, the two fixed points $(x^*_+, x^*_+)$ and $(x^*_-, x^*_-)$ are located at the solutions to this quadratic equation:
\begin{equation} \label{fixedpt}
x^*_{\pm} = \frac{1 \pm \sqrt{1-4(2+c)g}}{2(2+c)g}.
\end{equation}
Classification of the fixed points is determined from the Jacobian evaluated at these points:
\[ J(x^*,x^*) = \begin{pmatrix}
\frac{1}{gx^{*2}} - c & - 1 \\ 1 & 0
\end{pmatrix} = \begin{pmatrix}
\frac{1-2(c+1)gx^*}{gx^*} & -1 \\ 1 & 0
\end{pmatrix}. \]
With $\tau = \mbox{Tr}(J)$ and $\Delta = \mbox{det}(J)$, the eigenvalues of the Jacobian satisfy $|\lambda_+||\lambda_-| = \Delta = 1$ and are given by
\[ \lambda_{\pm} = \frac{\tau \pm \sqrt{\tau^2 - 4\Delta}}{2}. \]
At $(x^*_+, x^*_+)$ the determinant $\tau^2 - 4\Delta < 0$ for all appropriate values of $g$ (that is, $0 < g < \frac{1}{4(2+c)}$, in order that the fixed points are real), so both eigenvalues are complex. Therefore they are complex conjugates of each other, forcing that $|\lambda_+| = |\lambda_-| = 1$, therefore the fixed point $(x^*_+, x^*_+)$ is a center of the system. At $(x^*_-, x^*_-)$, the determinant is positive for all $g$, so the eigenvalues are real and distinct. Their magnitudes are thus not constrained to be equal but are inversely proportional, and so there are attracting and repelling directions and the fixed point $(x^*_-, x^*_-)$ is hyperbolic, a saddle.

Besides their fixed point structure, both systems also share the key property of being integrable, in the sense that each one has an integral of motion, or invariant, $I(x,y)$ satisfying $I(x_n, y_n) = I(x_{n+1}, y_{n+1})$ for all $n$. The invariants have different degrees:
\begin{equation}\label{eq: integral of motion c=1}
I_{c=1}(x,y) = xy\left(x + y - \frac{1}{g}\right) + \frac{1}{g}(x + y) - \frac{1}{g^2}
\end{equation}
and
\begin{equation} \label{eq: integral of motion c=0}
I_{c=0}(x,y) = xy(1-gx)(1-gy) + gxy - x - y + \frac{1}{g}. 
\end{equation}
Level sets $I(x,y) = e$ for constants $e$ we call ``energies'' foliate the plane with curves; an orbit of the system is restricted to lie on one such curve. The level sets are in general the real loci of elliptic curves (degree three and four curves respectively) with no singularities in the affine (finite) plane; there are also special energy values for which the curves become singular in the affine plane. Among these latter we will be primarily interested in those special values for which  the singular degeneration of the curve corresponds to an irreducible curve containing a single separatrix of the dynamical system. In Table \ref{table:DegenerateEnergies} these correspond to the energy value $e_1$. 

One should also note that there are various compactifications of the phase plane which yield an extension of the level sets to infinity where the associated curve can become singular. If one completes to the projective plane $\mathbb{P}^2$ then, in the $c =1$ case, the curve extends to be smooth at infinity. However, in the $c = 0$ case there are two singularities at infinity for all values of the energy. However, if one extends the affine plane to $\mathbb{P}^1 \times \mathbb{P}^1$ then, for $c = 0$, there are no singularities on this completion of the level curve to infinity. This will play a key role in our analysis of the $c = 0$ case with more details provided in Section \ref{e-param0}.

\begin{table}[H] 
\begin{center}
$
\begin{array}{ l | c | c } 
 & c=1 \text{ system} & c=0 \text{ system} \\ [.15em]
\hline
 & & \\
e_0 & 0  & 0  \\ [.75em]
e_1 & \dfrac{1 + 36g + (12g-1)\sqrt{1-12g}}{-54g^3} & \dfrac{1 + 20g - 8g^2 + (8g-1)\sqrt{1-8g}}{32g^2} \\ [1em]
e_2 & \dfrac{1 + 36g - (12g-1)\sqrt{1-12g}}{-54g^3} & \dfrac{1 + 20g - 8g^2 - (8g-1)\sqrt{1-8g}}{32g^2} \\ [1em]
\end{array}
$
\end{center}      
\caption{Energy values giving singular level sets of the invariant. }  \label{table:DegenerateEnergies}                    
\end{table}

\subsection{The QRT Mapping and the Sakai Classification}
There is a systematic construction that one may look to in trying to understand the integrable structure which underlies the types of dynamical systems we consider here. This construction grew out of earlier studies of classical analogues of quantum spin systems of Heisenberg type which, in its current form, is usually attributed to Quispel, Roberts and Thompson from which it derives its name, {\it the QRT mapping}. The mapping provides a construction of integral invariants of the type (\ref{eq: integral of motion c=1}) for a large class of systems of discrete Painlev\'e type. Our case of $c = 1$ is in fact the discrete Painlev\'e I equation (dPI). For full details about the QRT mapping we refer the reader to \cite{QRT89}, but for our purposes here 
we just state the general QRT form for systems of dPI type:
\bea \label{QRT}
x_{n+1} + x_{n-1} &=& - \frac{\beta x_n^2 + \epsilon x_n + \zeta}{\alpha x_n^2 + \beta x_n + \gamma}.
\eea 
The case of $c = 0$ does not fit directly within this scheme, but it does correspond to a degeneration in which the numerator and denominator on the RHS of (\ref{QRT}) have a common factor \cite{RG96}. It is not immediately clear how to find an invariant when the QRT mapping degenerates, but in our case it was possible to do so. 

QRT turns out to be the first step to an even deeper insight into the integrability of the discrete Painlev\'e systems due to Sakai \cite{Dui10} who established a correspondence between the classification of rational and elliptically fibered algebro-geometric surfaces on the one hand and Painlev\'e equations, both discrete and continuous, on the other. This correspondence encodes relations between the dynamics, B\"acklund transformations, and generalized Lie algebras. The $c = 0$ case of our study does not fit immediately into this corresponence either, but just recently a more elaborate treatement of Sakai's point configuration method was put forward \cite{KNY17}  which can be applied to this case. It will be of future interest to relate this extension to the results we present in this paper.

\subsection{Numerical study}
Numerical investigations of the systems (\ref{eq: discrete dynamics general c}) were done in Python, with simulations taking as input a value for the system parameter $g$ lying in the region $0 < g < \frac{1}{4(2+c)}$, an initial condition $(x_0, y_0)$ lying in the finite plane, and the maximum value of $n$ to which the calculation should be run. A sequence of points $\left\{(x_n, y_n)\right\}_{n=0}^{n_{max}}$ is calculated using \eqref{eq: discrete dynamics general c}, and the numbers are output as scatter plots of the orbit. Invariant level sets are viewed via contour plots.

We note that orbits near the hyperbolic fixed point experience a strong pull in the unstable directions, rapidly accumulating numerical error and producing artifacts in the image. In order to combat this source of error, a multiprecision library in Python \cite{mpmath} was used so that computations could all be done to any level of precision desired (100 decimal places of accuracy was sufficient to balance survival of an accurate simulation with computation time, in contrast to the standard approximately 13 digits of accuracy).

In all simulation images, the fixed points are plotted as black dots. The initial condition for every orbit is plotted in yellow, and all other points are plotted as blue dots. Only one orbit is shown per image. Level sets of the integral of motion are plotted as red curves. 

Due to the integrable nature of the systems, the shape of each orbit is constrained to a curve given by the level set $I(x,y) = I(x_0, y_0)$. Initial conditions which are close to one another yield orbit shapes which are close to one another, except where a separatrix lies between them due to a singularity occurring on an intermediate curve.

\subsection{Features of the $c=1$ system, dPI}
Four generic orbits are displayed in Figure \ref{fig: four orbits Rn}. We observe the presence of closed orbits encircling the center, and of three-branched (cubic elliptic curve) orbits in this system. 

\begin{figure}[H]
	\centering
	\includegraphics[width=.49\linewidth]{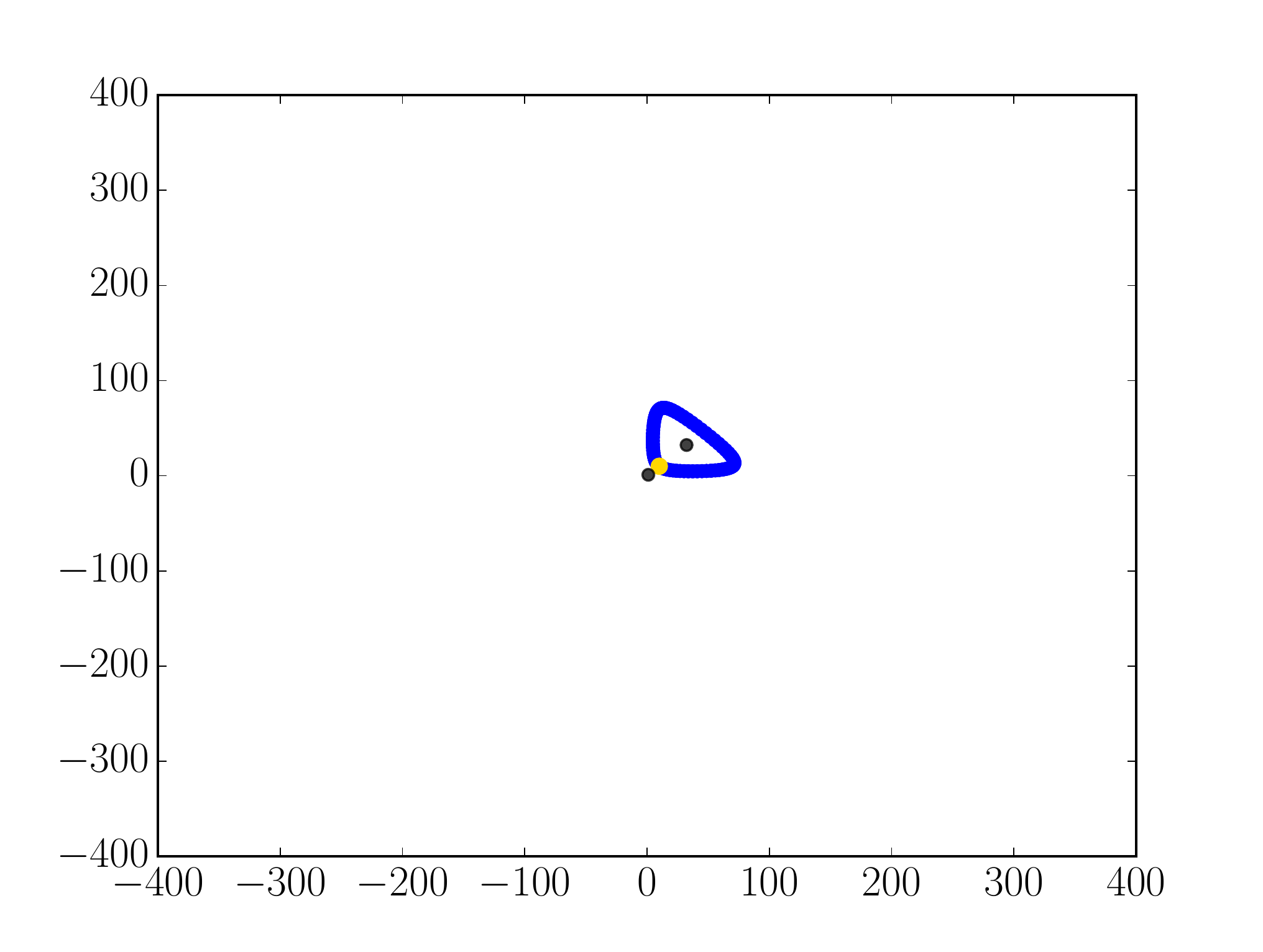}
	\includegraphics[width=.49\linewidth]{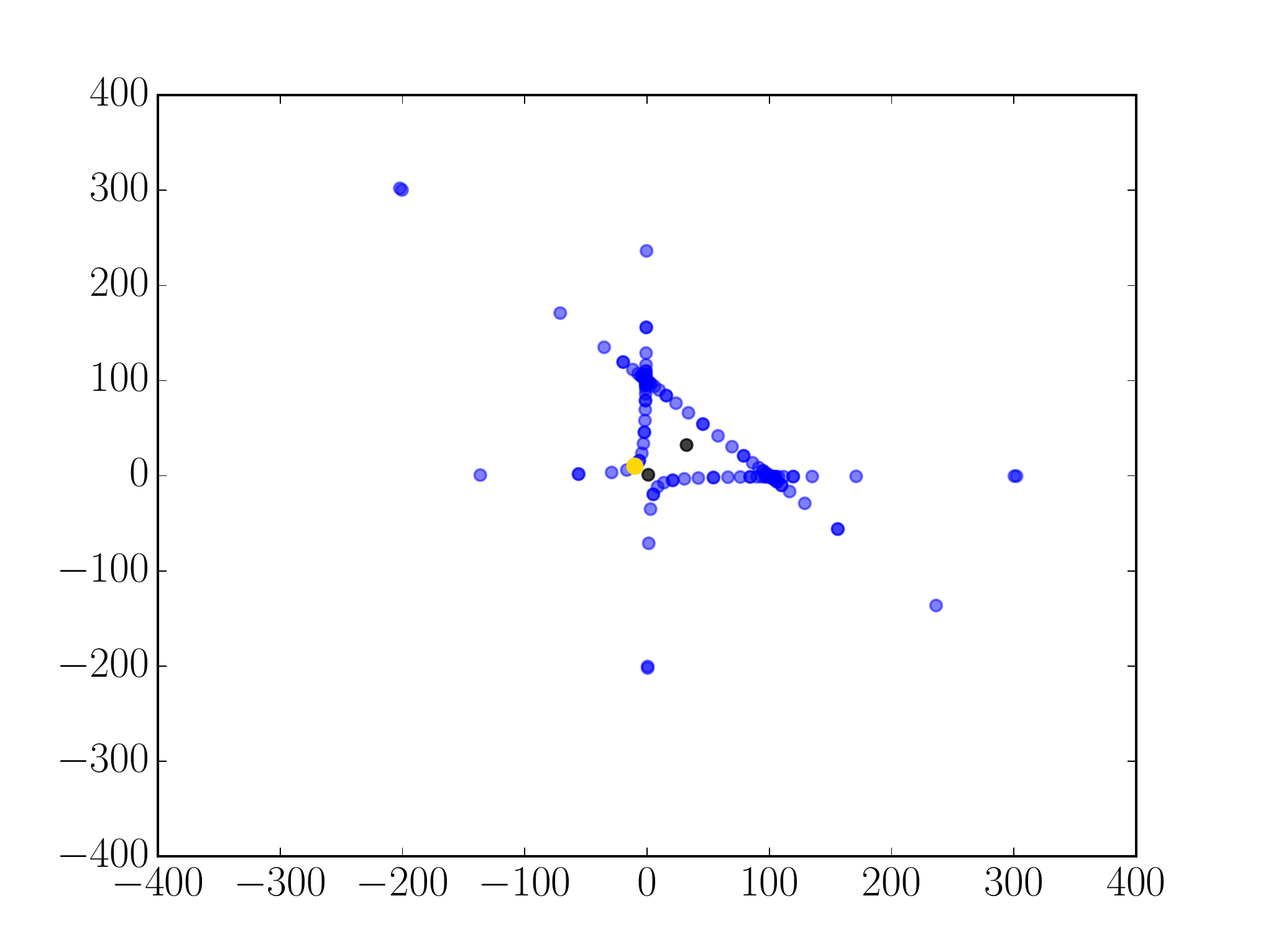} \\
	\includegraphics[width=.49\linewidth]{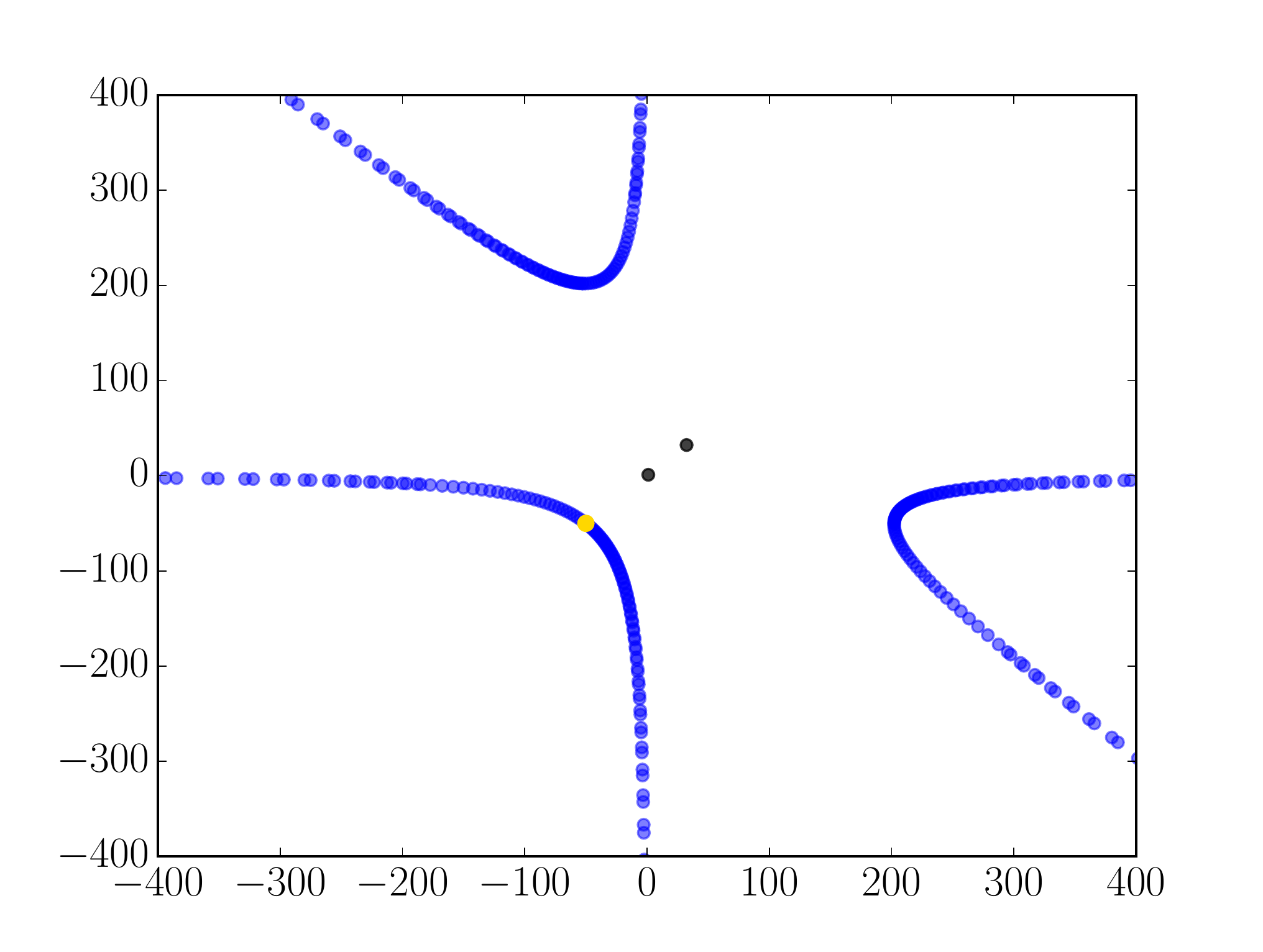}
	\includegraphics[width=.49\linewidth]{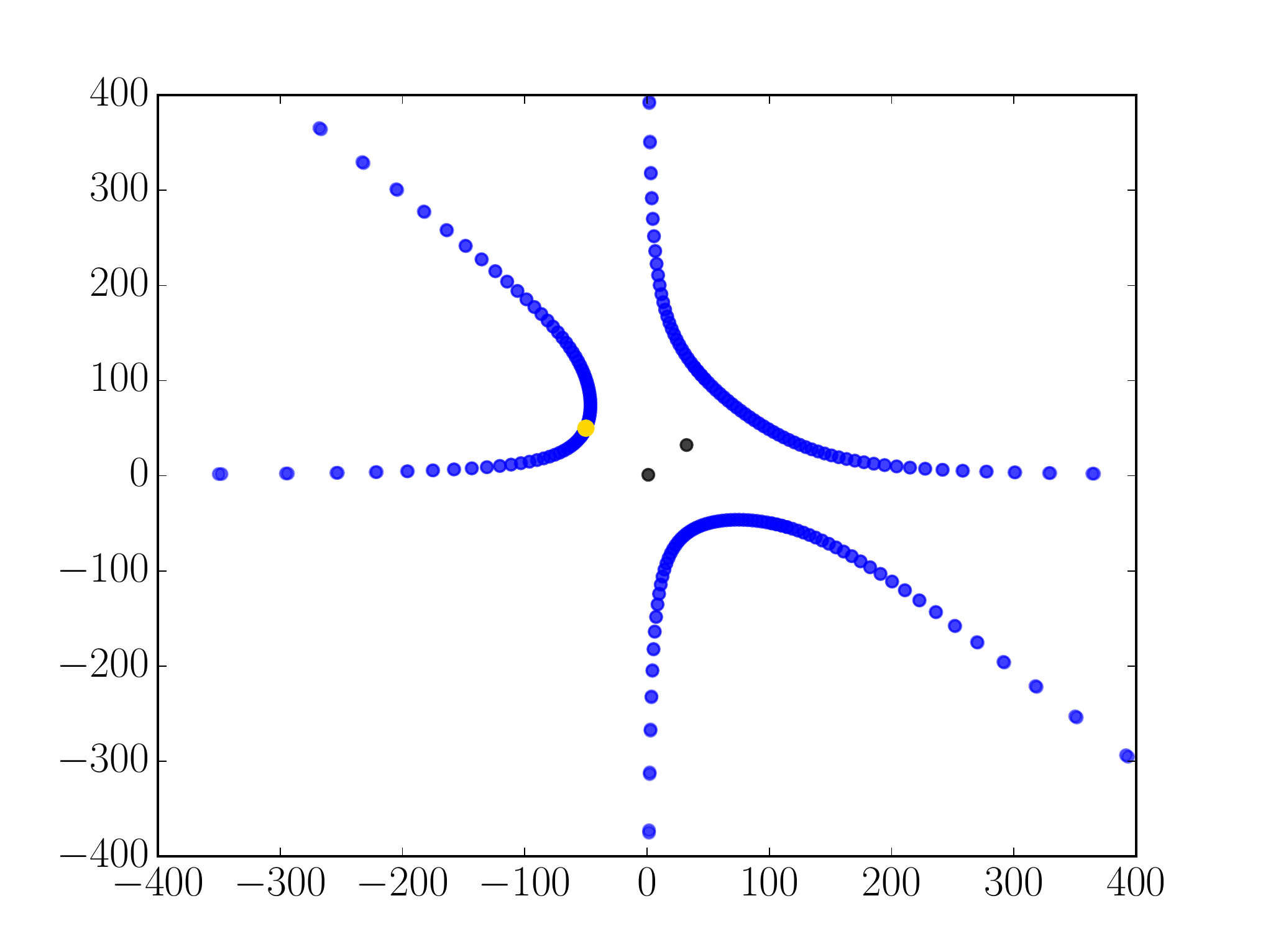}
	\caption{Four orbits of the system (blue), with their fixed points (black), and initial conditions (yellow) at $(10, 10)$, $(-10, 10)$, $(-50, -50)$, and $(-50, 50)$, listed left to right, top to bottom.}
	\label{fig: four orbits Rn}
\end{figure}
The three singular level sets of this system are shown in Figure \ref{fig: separatrices Rn}, separately and superimposed on one set of axes.

\begin{figure}[H]
	\centering
	\includegraphics[width=.49\linewidth]{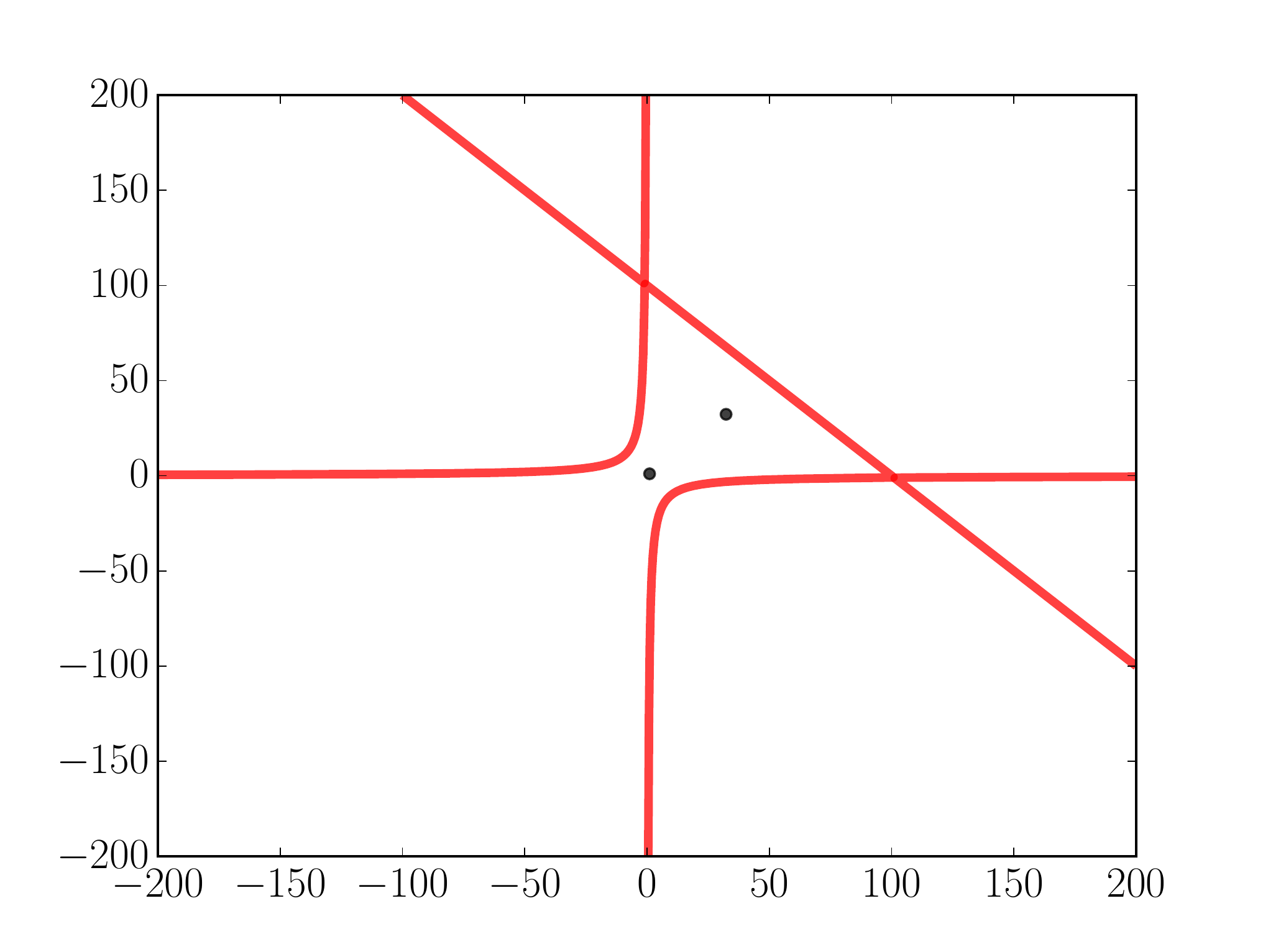}
	\includegraphics[width=.49\linewidth]{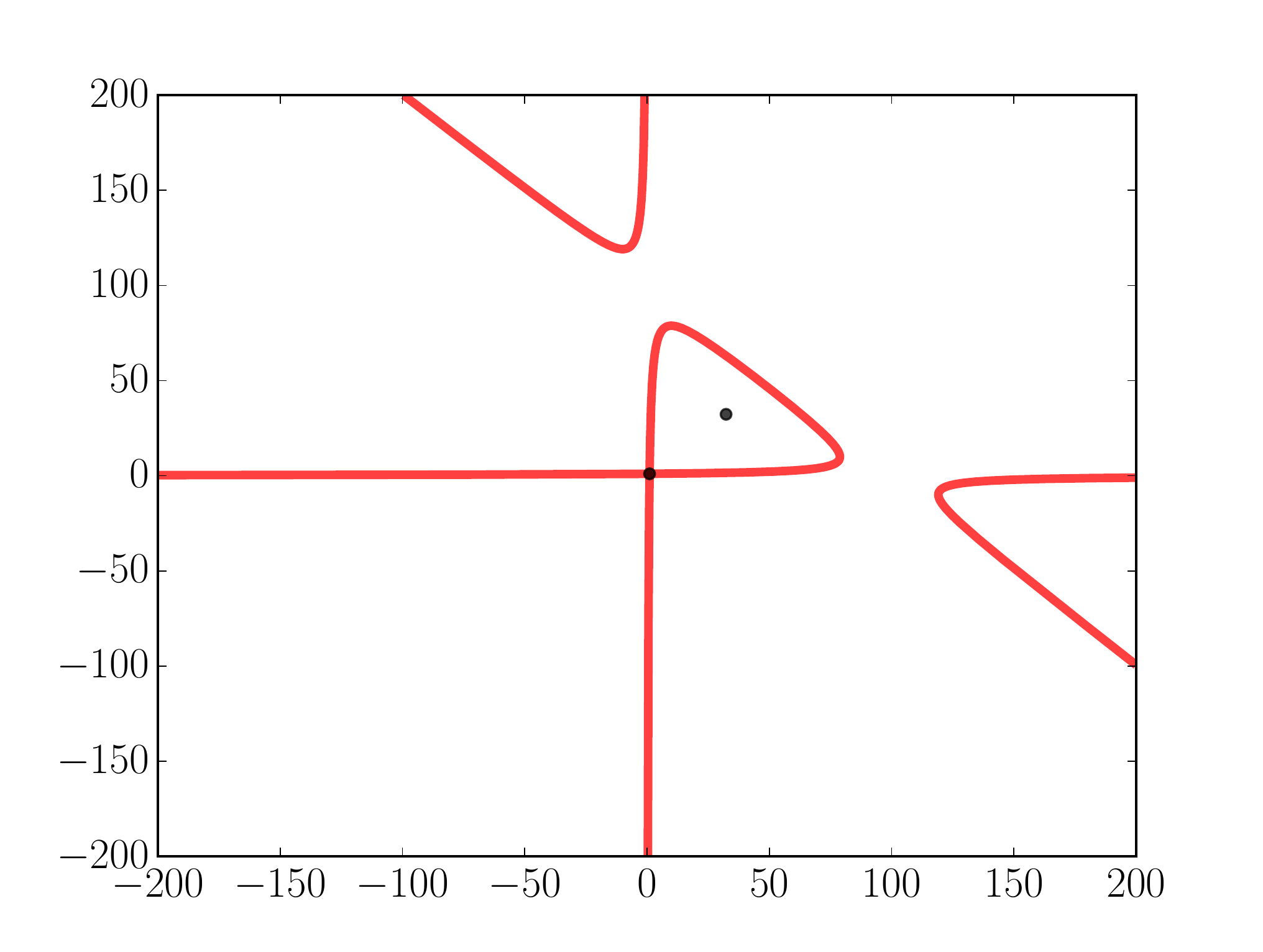} \\
	\includegraphics[width=.49\linewidth]{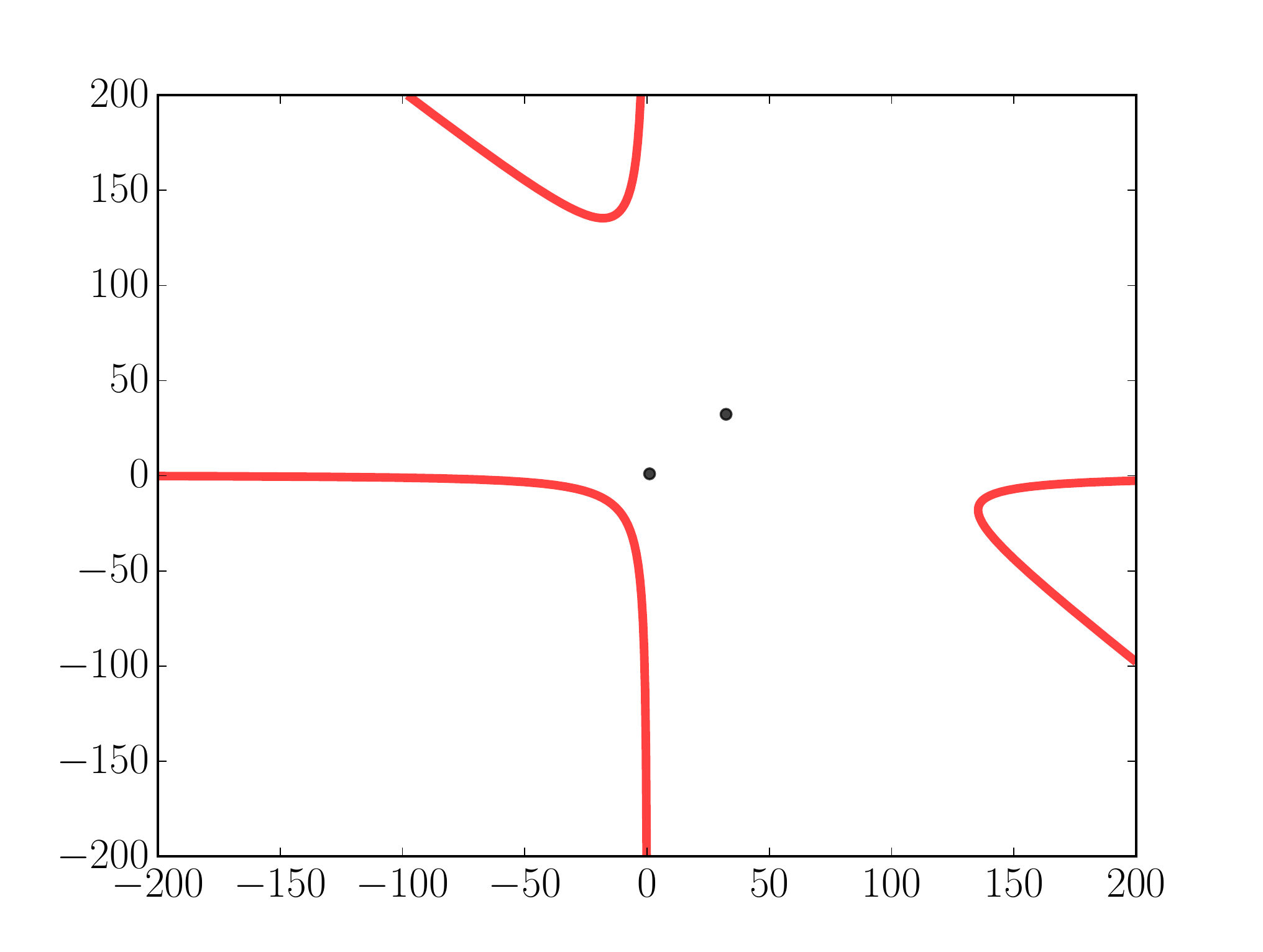}
	\includegraphics[width=.49\linewidth]{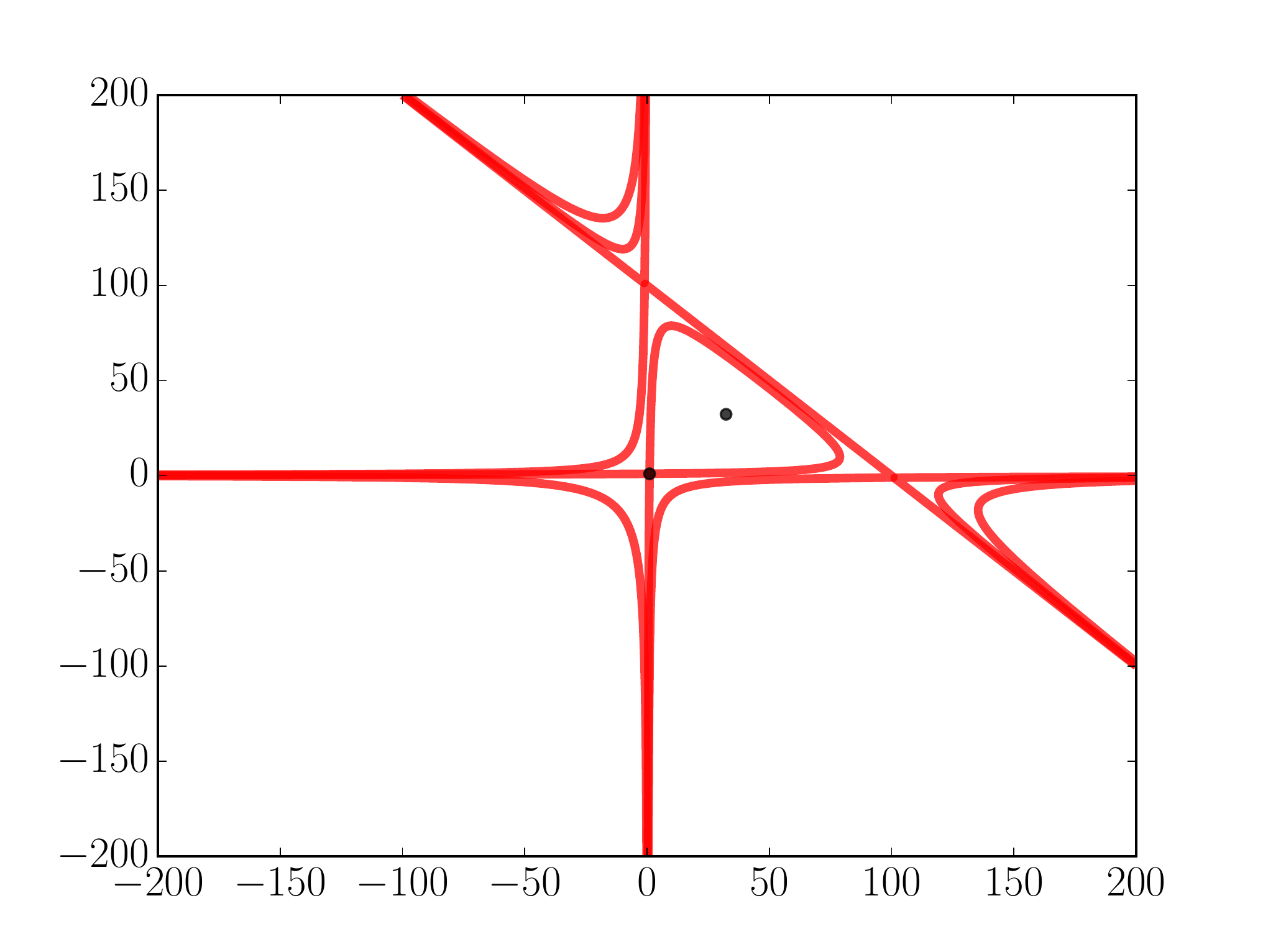}
	\caption{The three level sets $I(x,y) = 0, \approx -9897.9, \mbox{ and } \approx -40472.5$ at the degenerate energies when $g = 0.01$. The fourth image is the superposition of the others.}
	\label{fig: separatrices Rn}
\end{figure}

For sufficiently small positive values of the system parameter $g$, the qualitative shape of the entire system is preserved by continuous deformation. As $g$ approaches a critical value $g_c = \frac{1}{12} = 0.08\bar{3}$, the nearby elliptic and hyperbolic fixed points approach each other. Figure \ref{fig: change with g} illustrates this movement and the changing shape of orbits via plotting of the $e_1$ separatrix.

\begin{figure}[H]
	\centering
	\includegraphics[width=.3\linewidth]{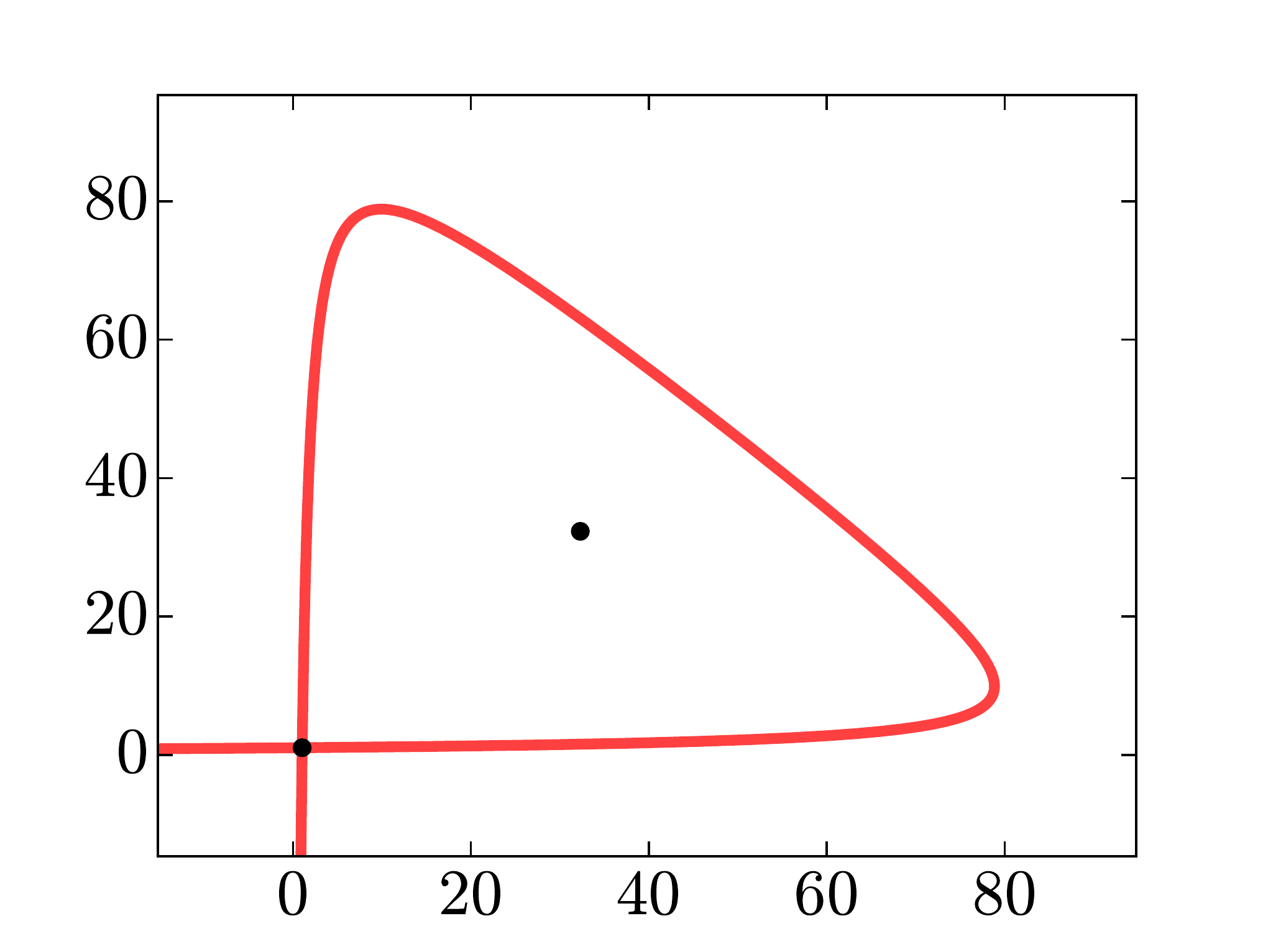}
	\includegraphics[width=.3\linewidth]{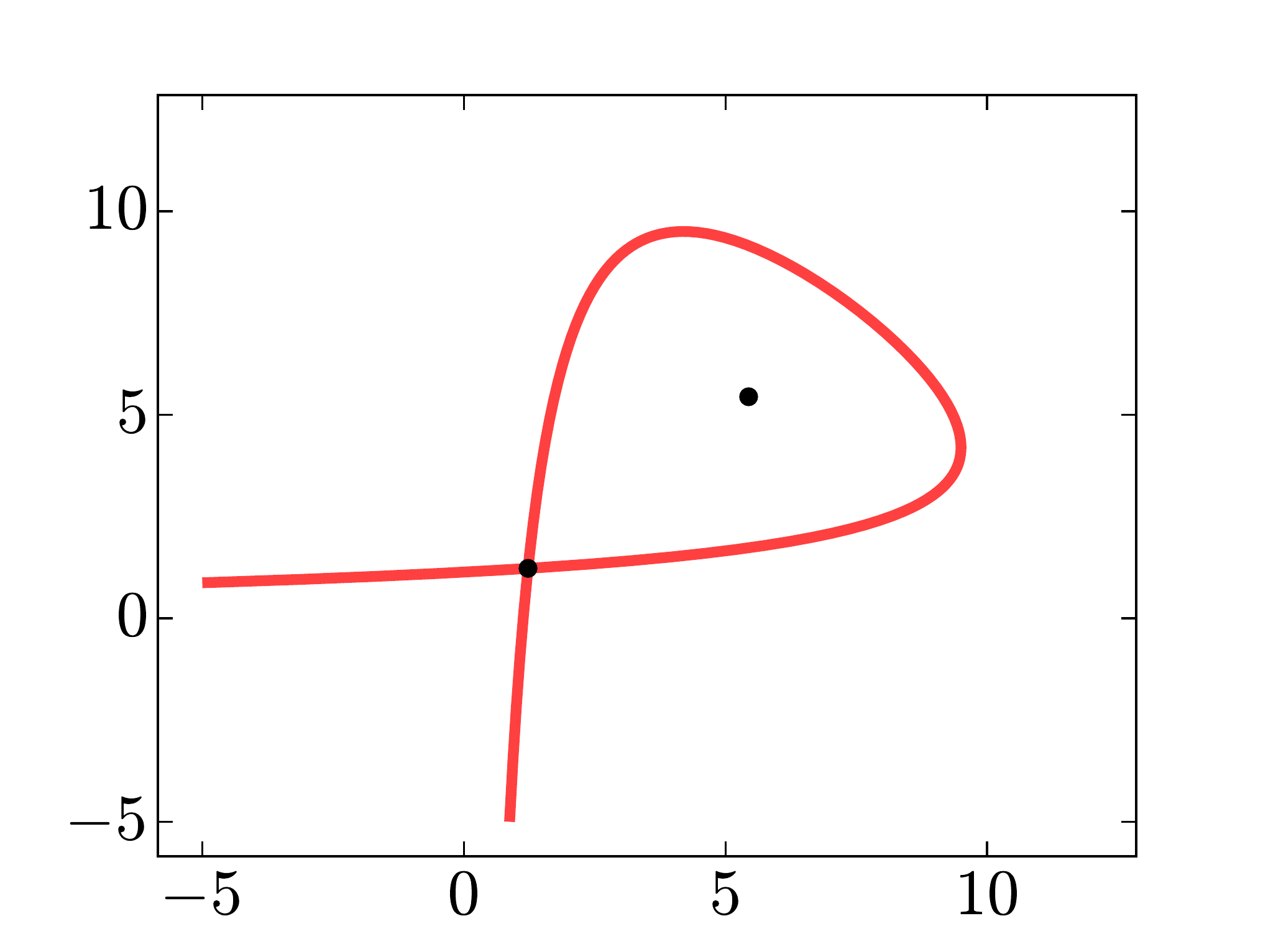}
	\includegraphics[width=.3\linewidth]{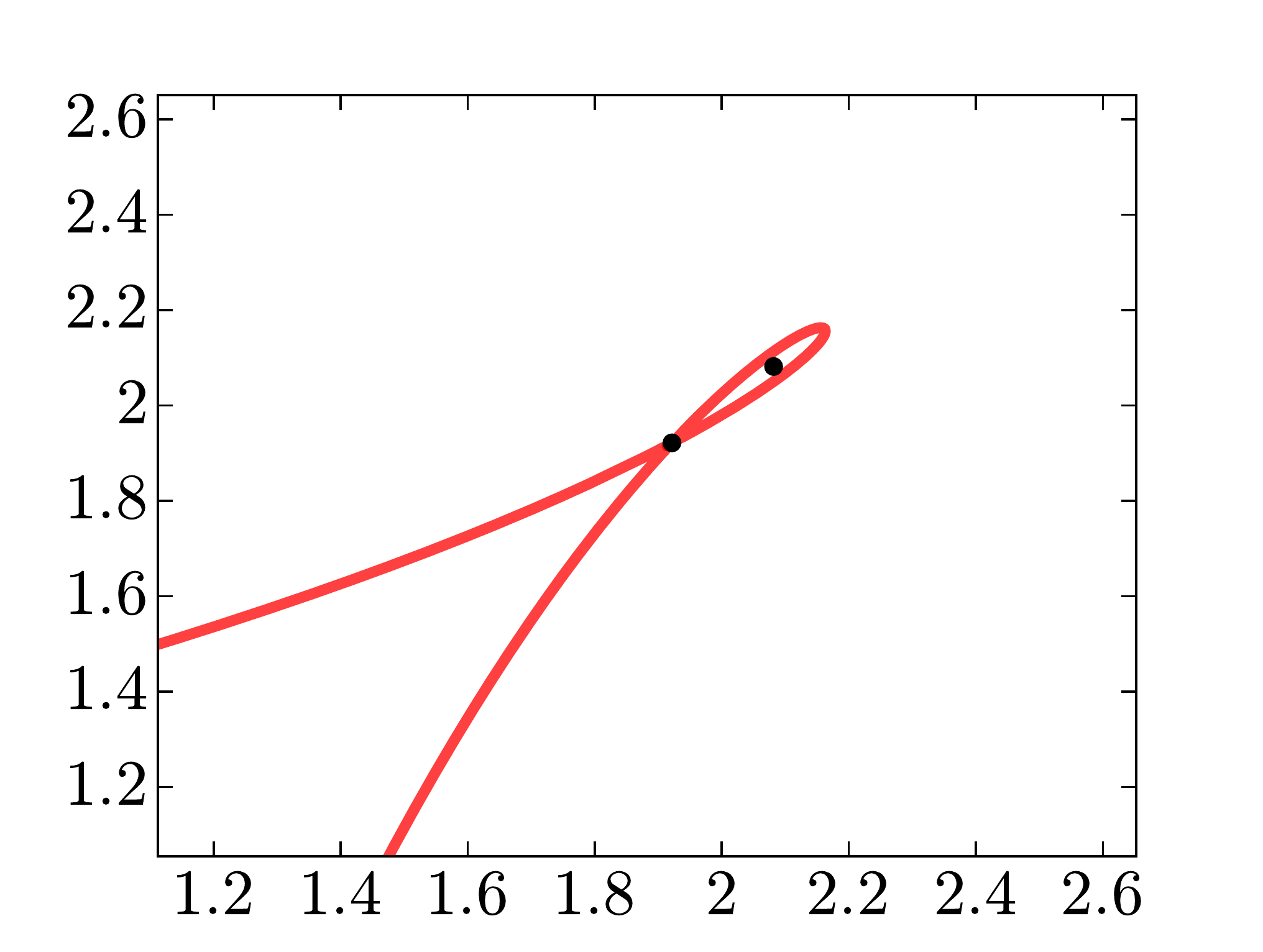}
	\caption{Three images of the $I(x,y) = e_1$ separatrix showing the change in the shape of the system as $g$ approaches $\textstyle g_c = 0.08\bar{3}$. The fixed points move closer to one another as $g \rightarrow g_c$ and coalesce when $g = g_c$. The separatrix is plotted for values $g = 0.01$, $g = 0.05$, and $g = 0.0833$ from left to right.}
	\label{fig: change with g}
\end{figure}

\subsection{Features of the $c=0$ system} In comparing the following figures with those of the previous section, we point out that, in terms of what immediately meets the eye, the respective trajectories show many similarities with the exception of the four-branched nature of this system, in contrast to the three-branched nature of the $c=1$ system.

\begin{figure}[H]
	\centering
	\includegraphics[width=.49\linewidth]{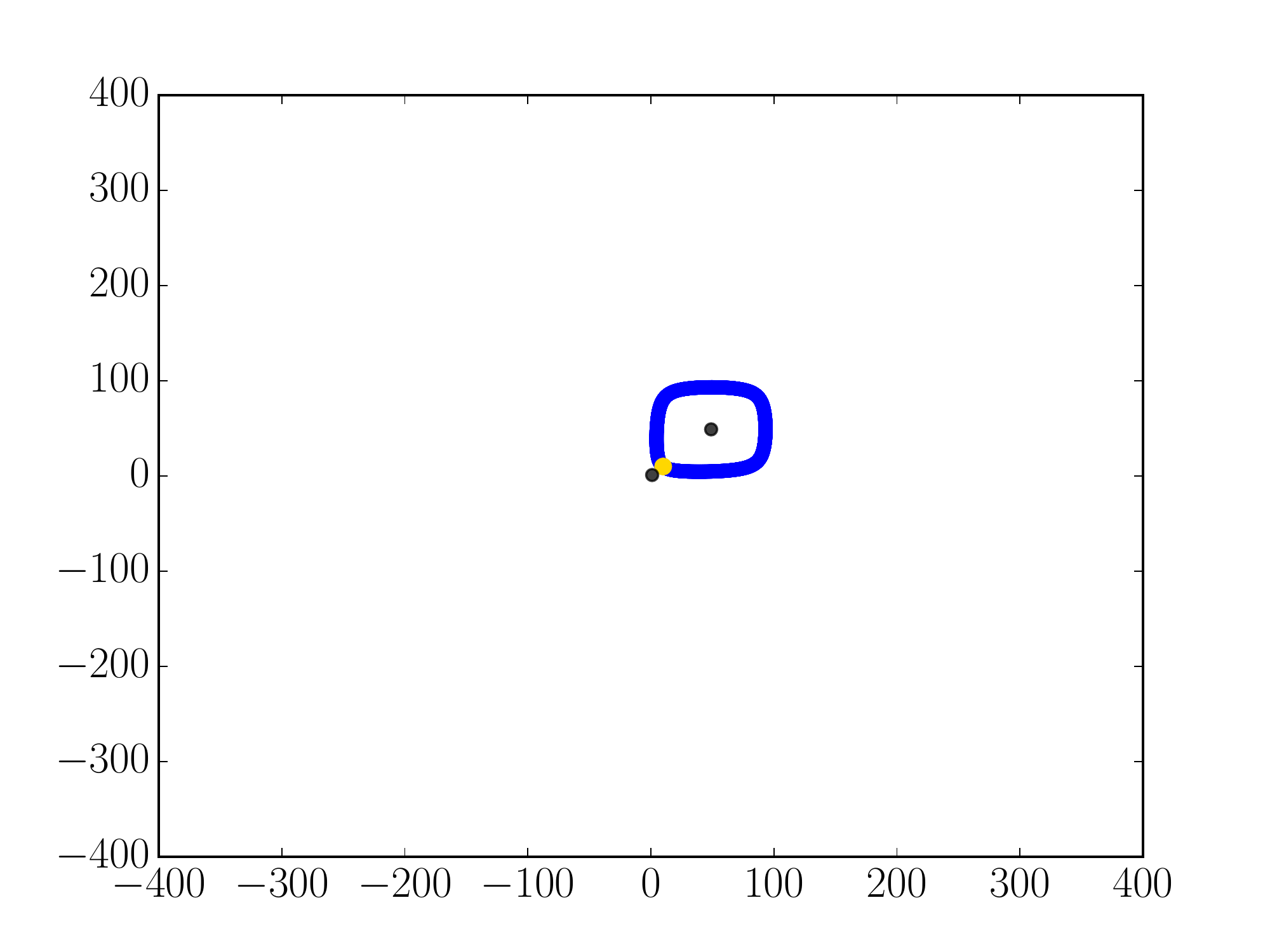}
	\includegraphics[width=.49\linewidth]{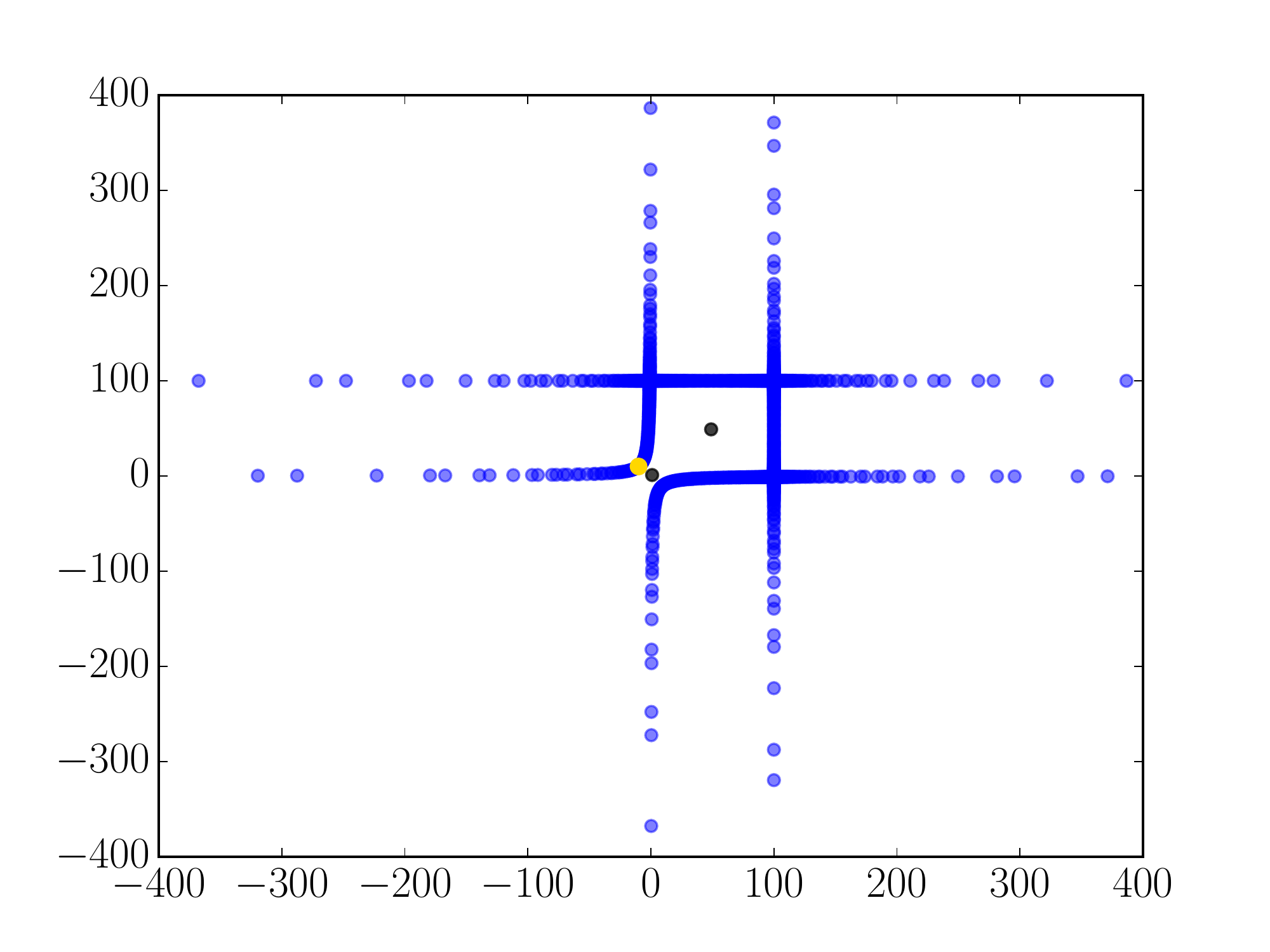} \\
	\includegraphics[width=.49\linewidth]{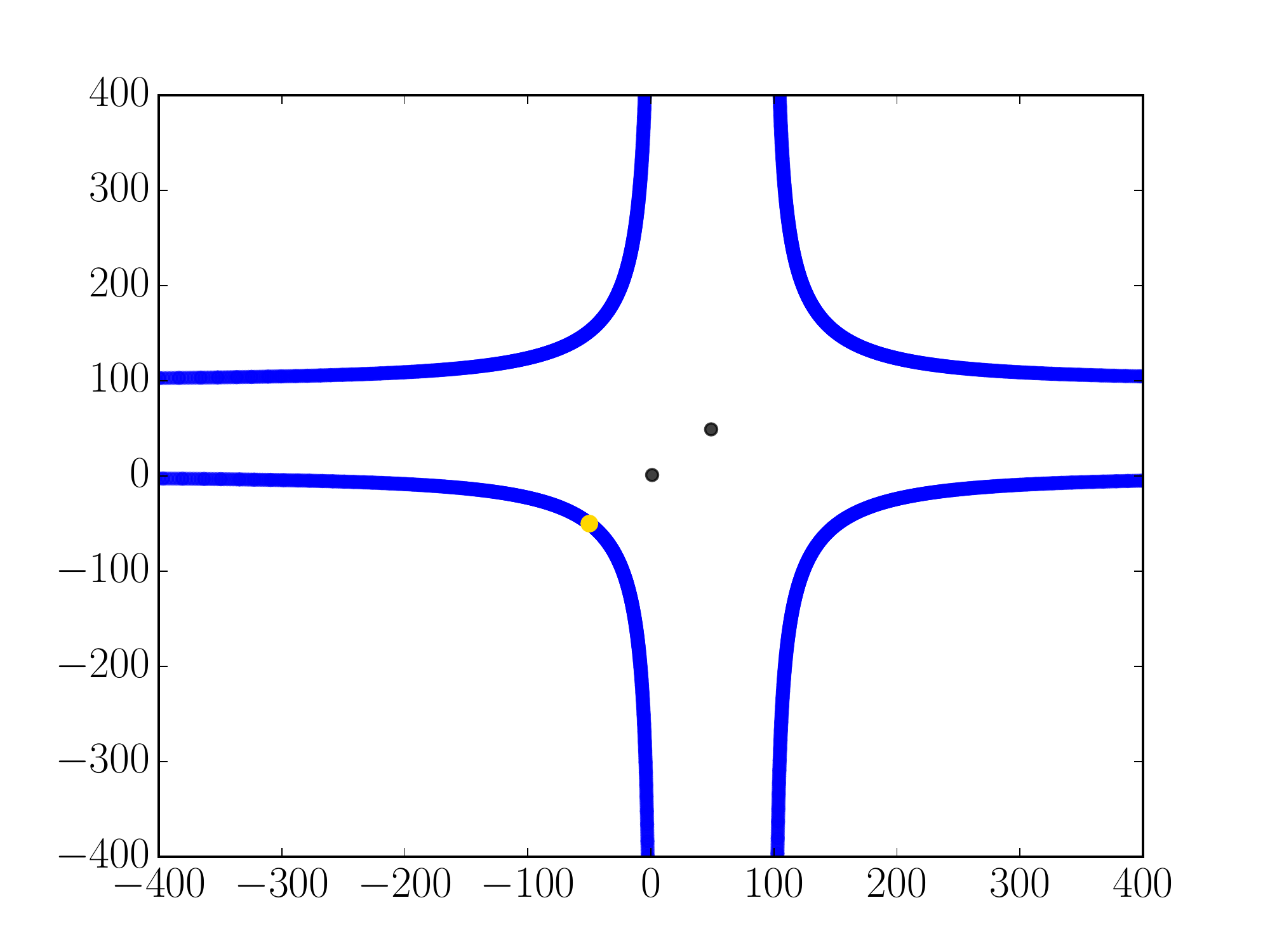}
	\includegraphics[width=.49\linewidth]{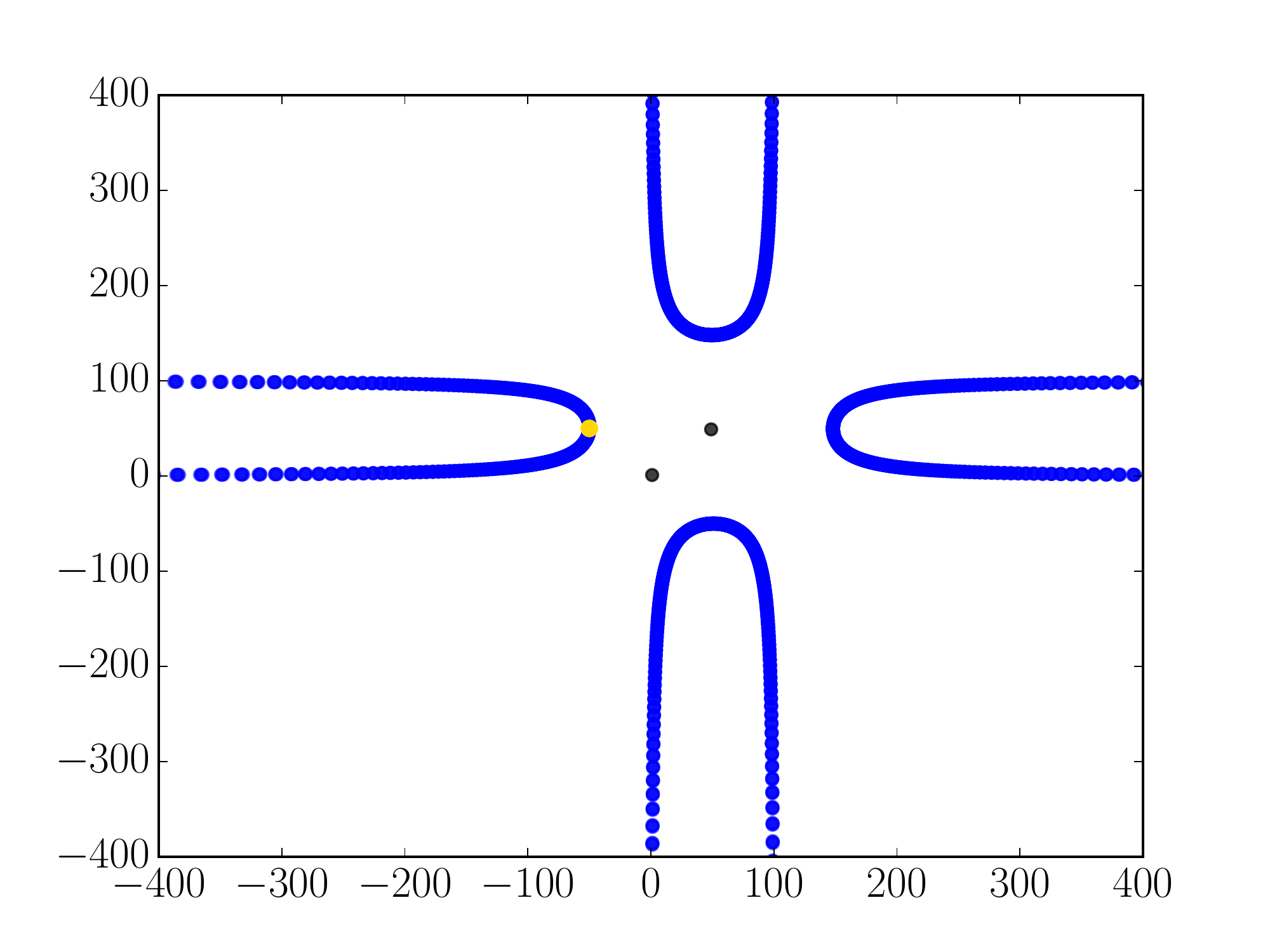}
	\caption{Four orbits of the system (blue), with their fixed points (black), and initial conditions (yellow) at $(10, 10)$, $(-10, 10)$, $(-50, -50)$, and $(-50, 50)$, listed left to right, top to bottom.}
	\label{fig: four orbits Tj}
\end{figure}

\begin{figure}[H]
	\centering
	\includegraphics[width=.49\linewidth]{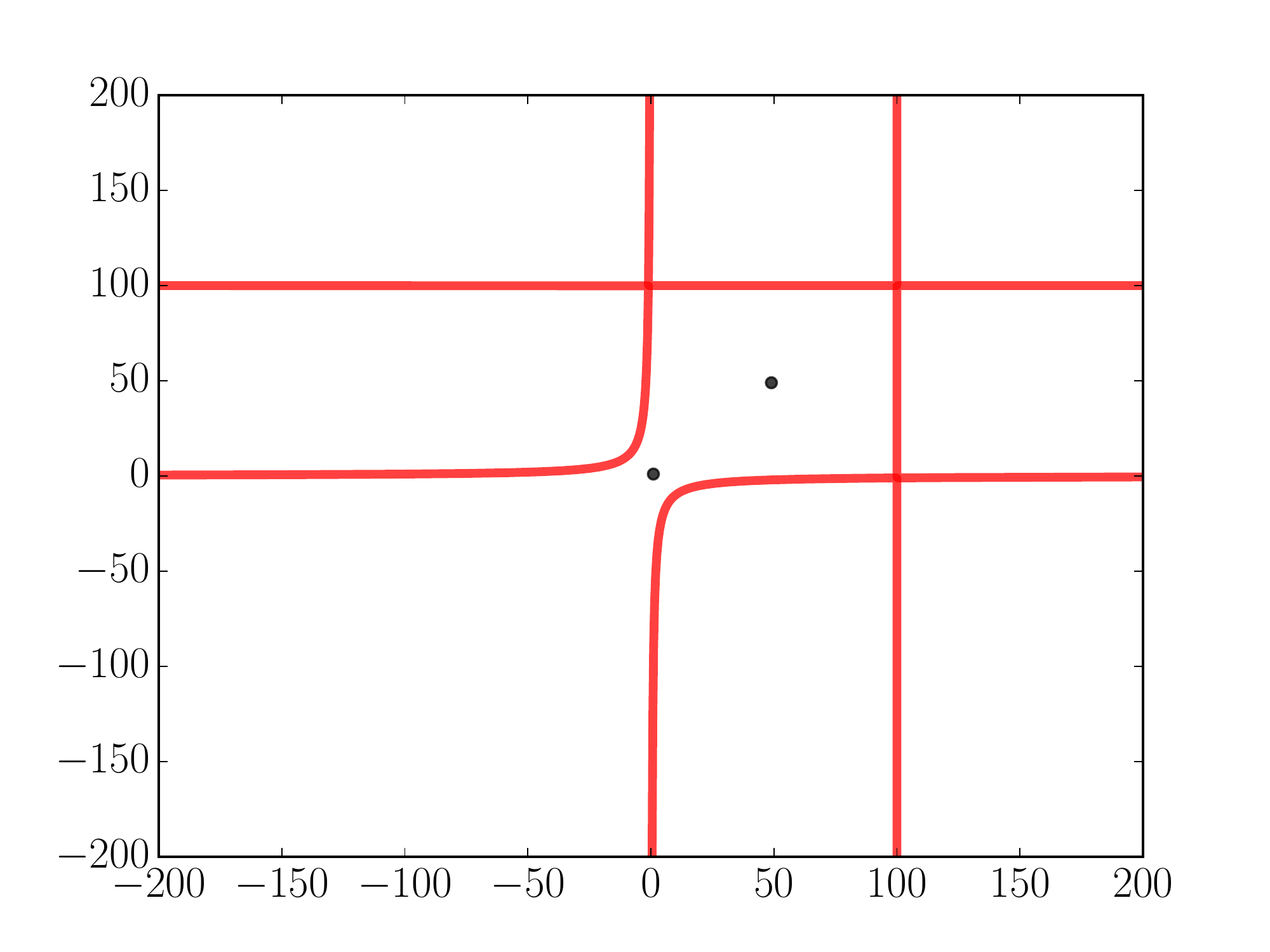}
	\includegraphics[width=.49\linewidth]{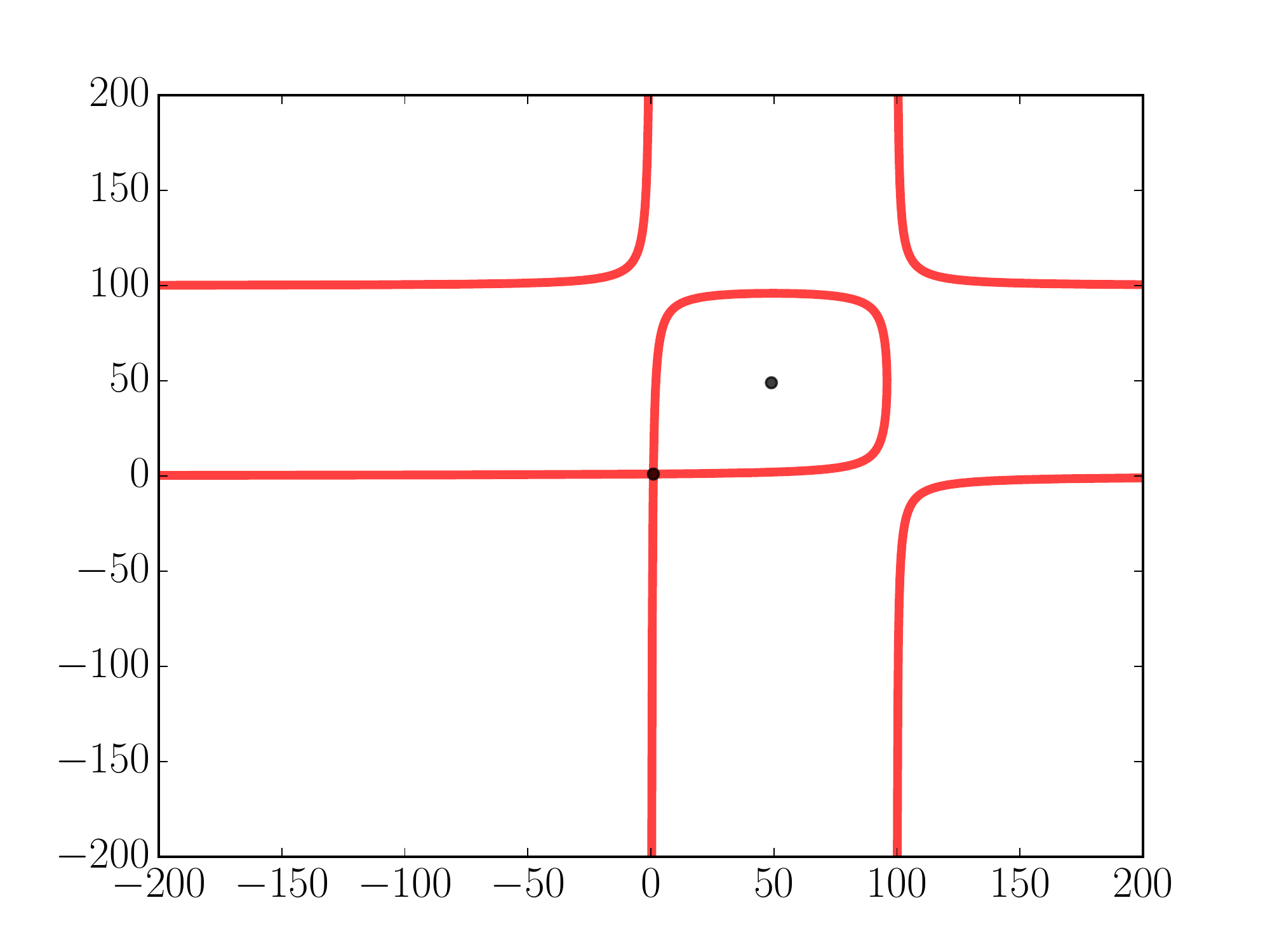} \\
	\includegraphics[width=.49\linewidth]{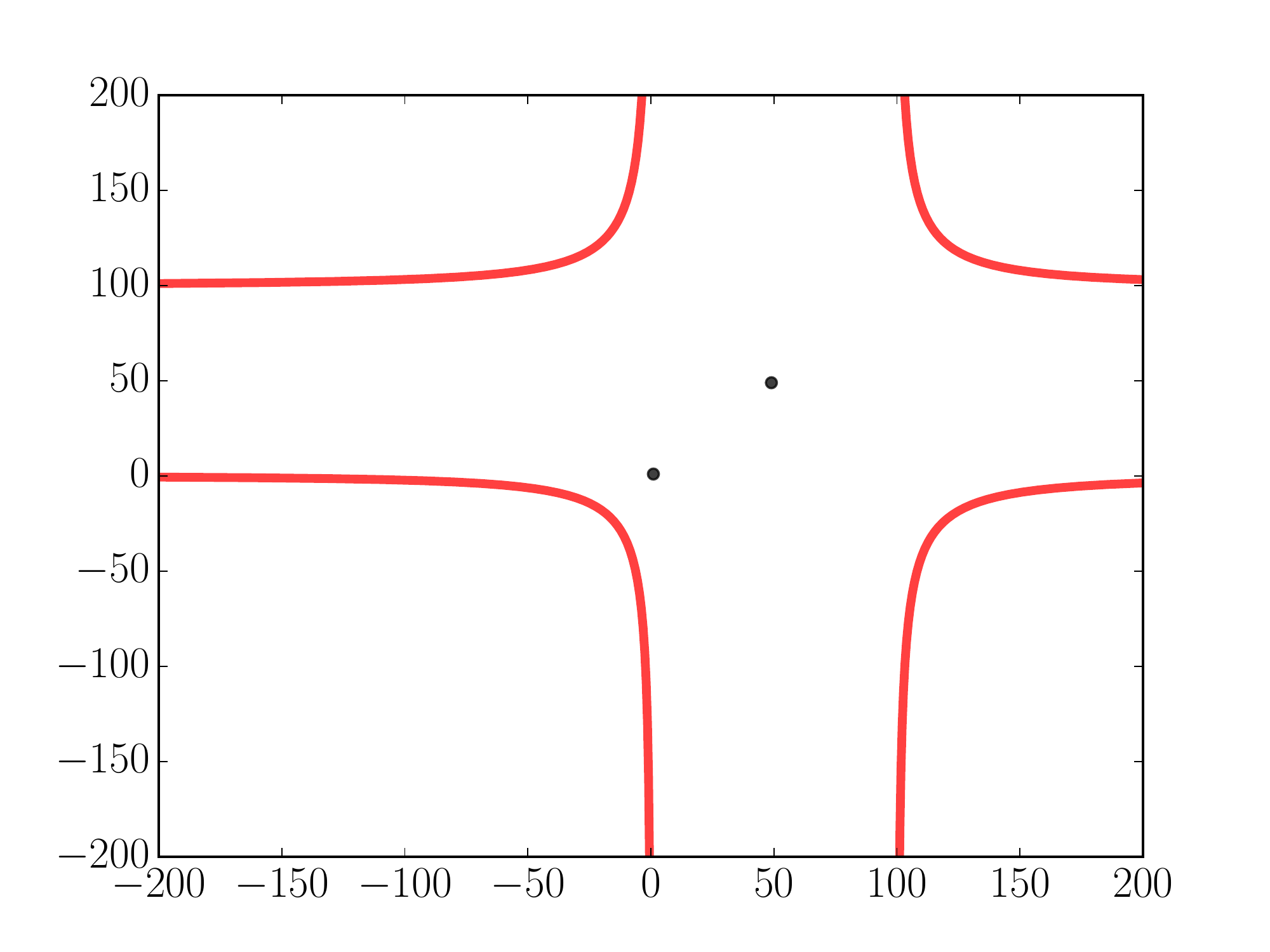}
	\includegraphics[width=.49\linewidth]{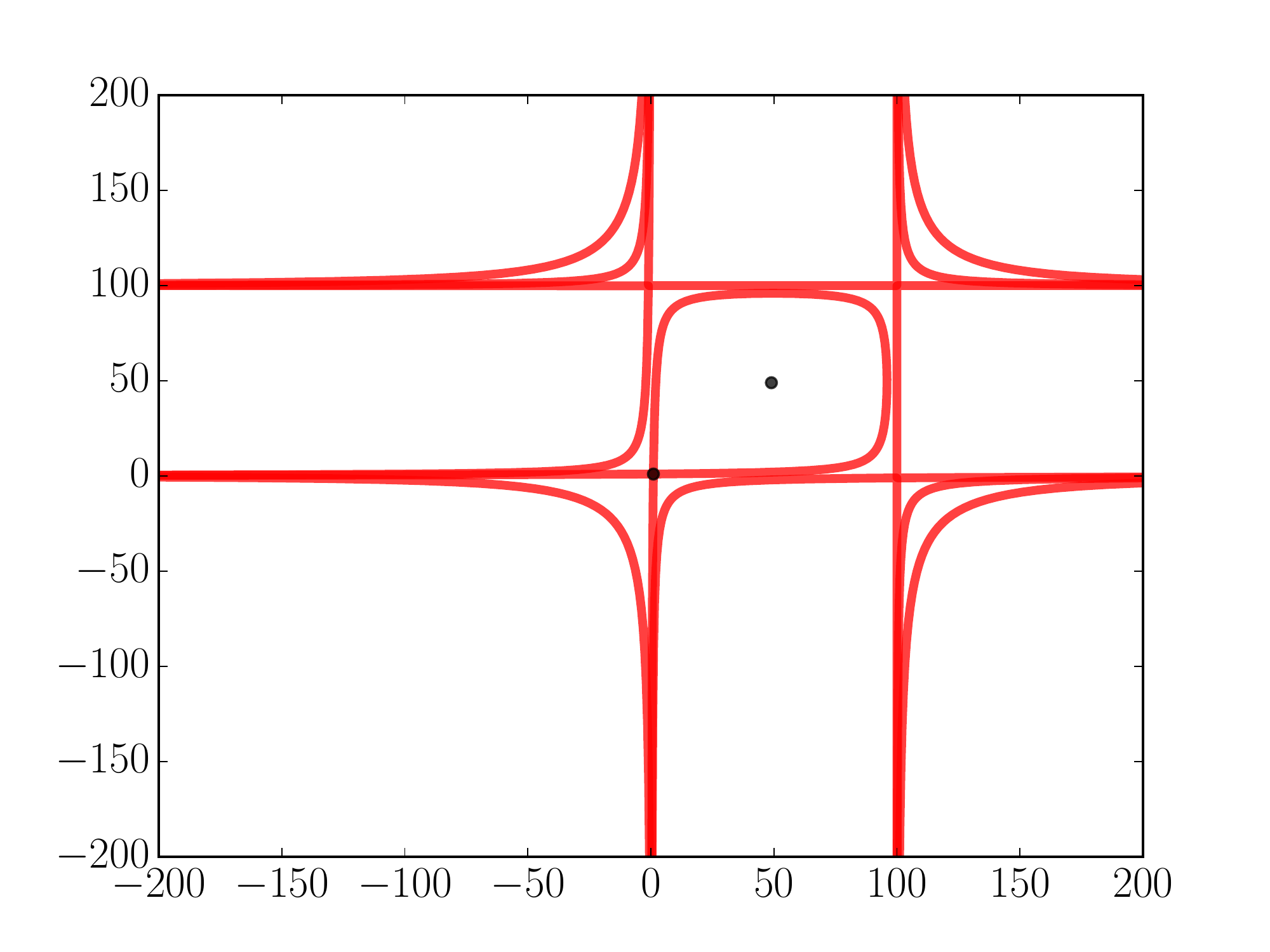}
	\caption{The three level sets $I(x,y) = 0, \approx 98.9897, \mbox{ and } \approx 650.5103$ at the degenerate energies when $g = 0.01$. The fourth image is the superposition of the others.}
	\label{fig: separatrices Tj}
\end{figure}

\begin{figure}[H]
	\centering
	\includegraphics[width=.3\linewidth]{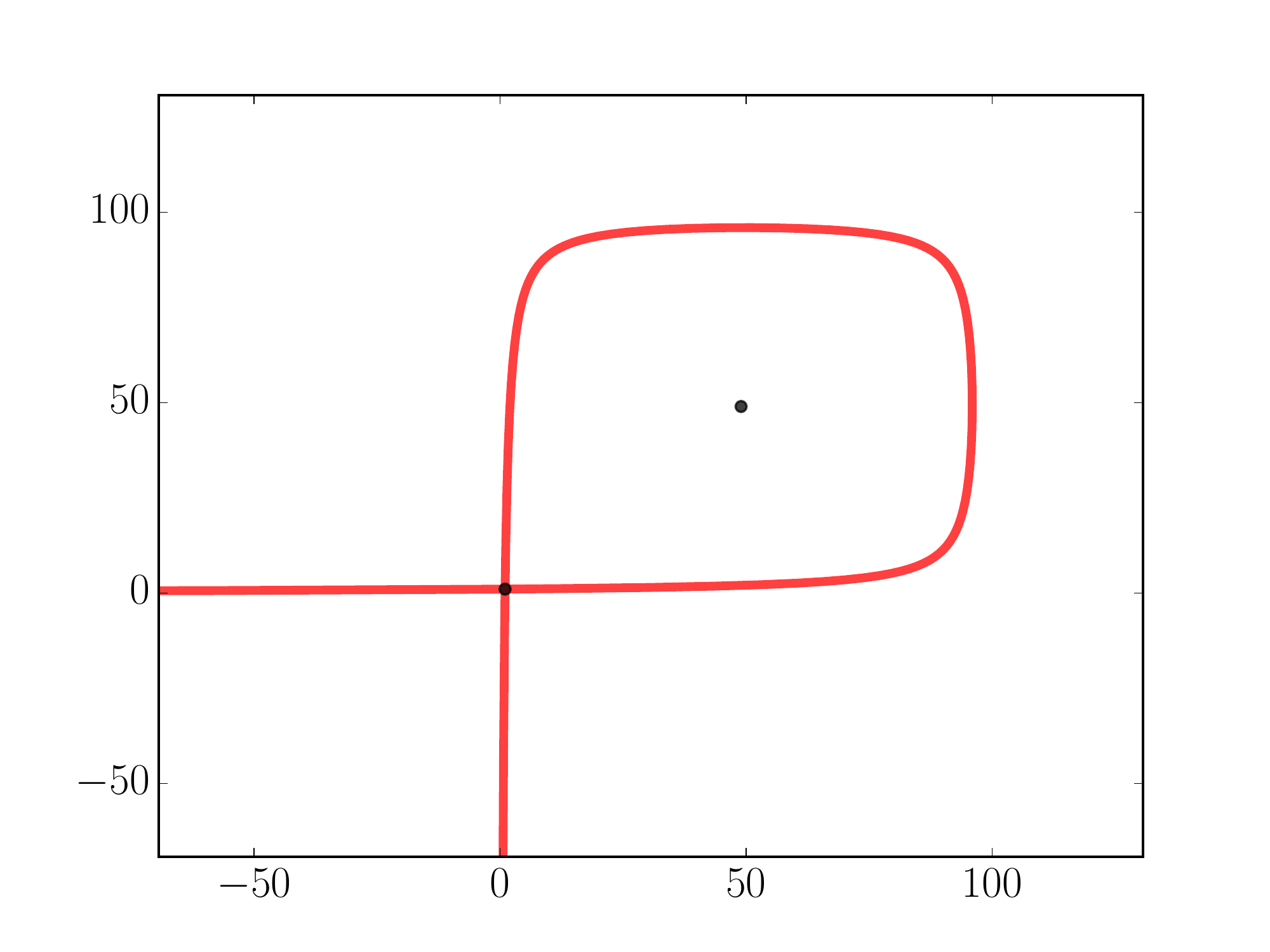}
	\includegraphics[width=.3\linewidth]{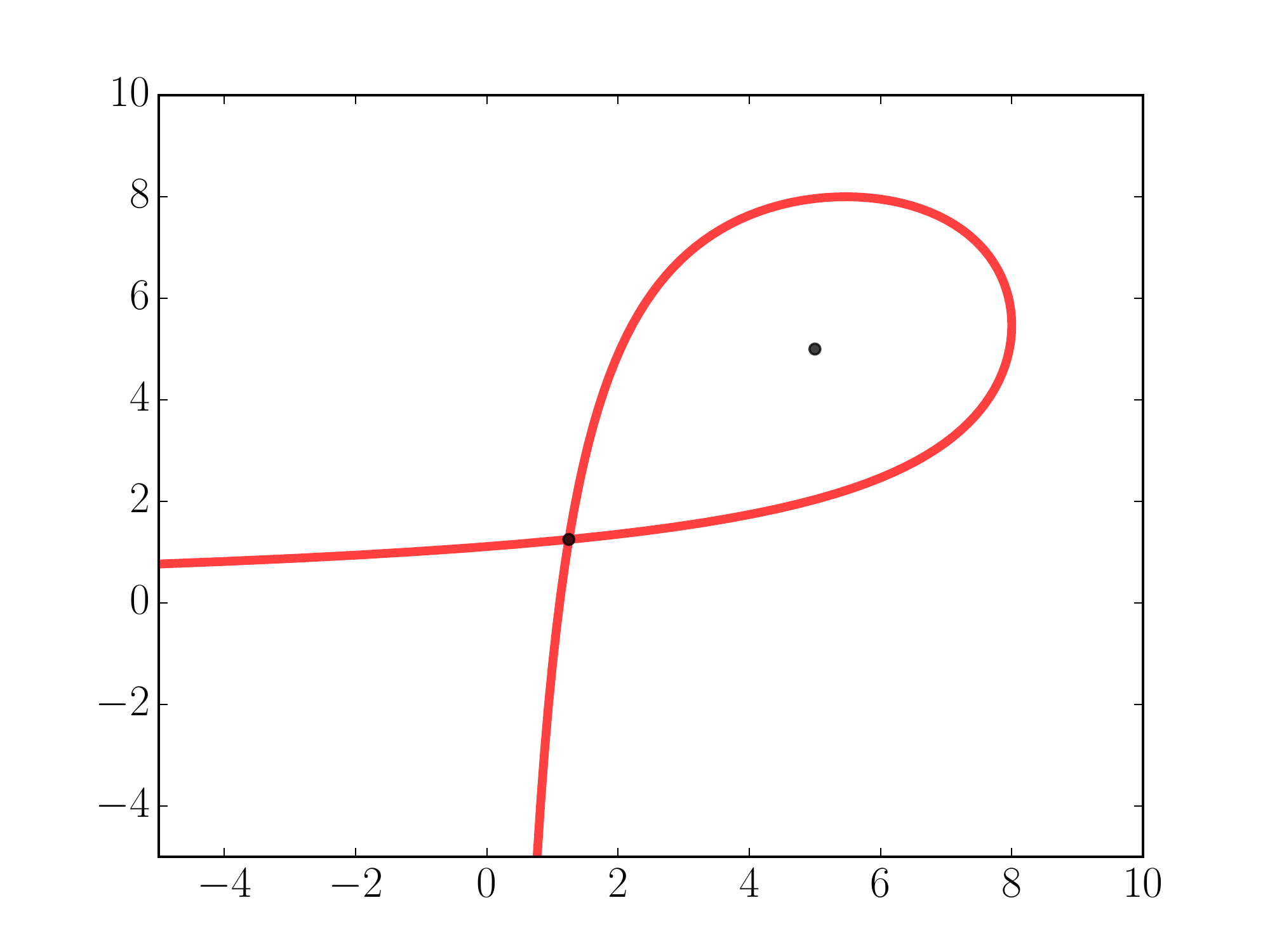}
	\includegraphics[width=.3\linewidth]{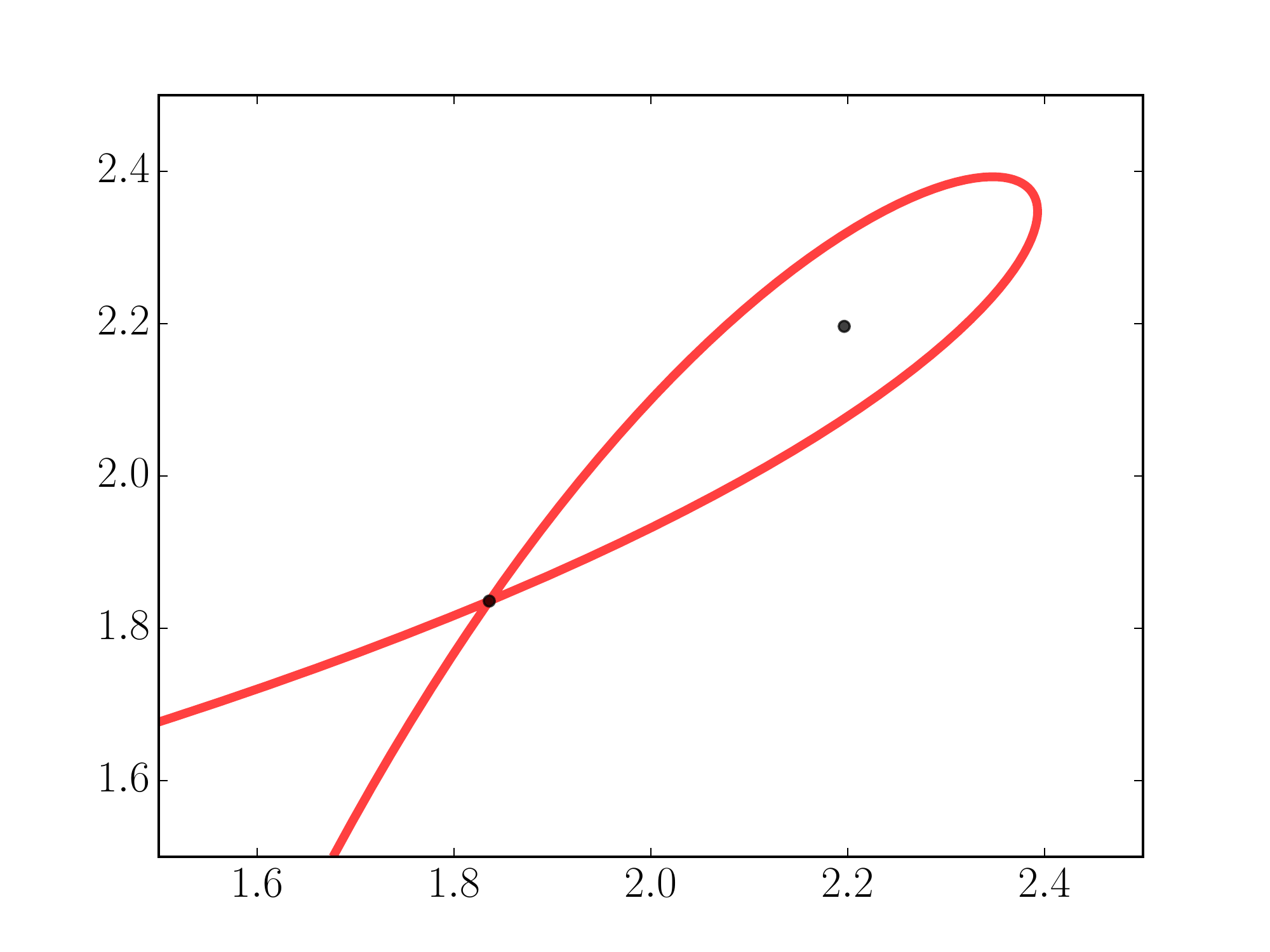}
	\caption{Three images of a separatrix showing the change in the shape of the system as $g$ approaches $\textstyle g_c = 0.125$. The fixed points move closer to one another as $g \rightarrow g_c$ and coalesce when $g = g_c$. The separatrix is plotted for values $g = 0.01$, $g = 0.08$, and $g = 0.124$ from left to right.}
	\label{fig: change with g Tj}
\end{figure}


\section{Combinatorial and probabilistic applications} \label{combprob}
In this section, we introduce a primary motivation for studying the dynamical systems described in the previous sections. The two systems are directly related to two combinatorial problems coming from the world of map enumeration. A \emph{map} is a graph which is embedded into a surface so that (i) the vertices are distinct points in the surface, (ii) the edges are curves on the surface intersecting only at vertices, and (iii) if the surface is cut along the graph, what remains is a disjoint union of connected components, called \emph{faces}, each homeomorphic to an open disk. By \emph{graph} we mean a collection of vertices, edges, and incidence relations, with loops and multi-edges allowed, taken to be connected unless stated otherwise. By \emph{surface} we mean a compact connected complex-analytic manifold of complex dimension one, up to orientation-preserving homeomorphism.

Map enumeration has a history dating back to the early 1960s, which is briefly summarized  in section \ref{sec: comb context}. We begin in section \ref{sec: comb 2 problems} by introducing two problems and solutions appearing in the literature which motivated the present study. These problems are basic in the sense that they treat {\it triangulations} and {\it quadrangulations}, that is, maps of the lowest face degrees. They are interesting because both families of maps have a notion of a distance, and the enumeration knows about this distance, opening up opportunities for questions about random metrics, statistical mechanics on random lattices, random walks in random environments and random combinatorial structures in general. These problems are special in that they have very nice closed form combinatorial solutions which exhibit hints of integrability, something that was not expected during their initial study but is turning out to characterize a whole collection of problems related to map enumeration  (see section \ref{sec: comb context}). 

\subsection{Two combinatorial problems}\label{sec: comb 2 problems}

\subsubsection{Quadrangulations}
The material in this section is from the seminal paper \cite{BDG03}, where this problem was first introduced and a solution was proposed. It concerns the enumeration of 4-valent 2-legged (that is, there are also two 1-valent vertices) planar maps, sorted according to a notion of geodesic distance that will be defined presently. It is equivalent to counting rooted planar quadrangulations (that is, the face degree, rather than the vertex degree, is 4) with an origin vertex, again counting according to an appropriate distance \cite{BDG03a}.

Let $R = R(g)$ denote the generating function for the family of 2-legged 4-valent planar maps:
\[ R(g) = \sum_{k\ge 0} r_k g^k \]
where $r_k$ denotes the number of 2-legged 4-valent planar maps on $k$ vertices. By Schaeffer's bijection \cite{Sch97}, this is also the generating function for 4-valent blossom trees. Blossom trees are rooted trees whose external vertices (called leaves) are either black or white, and in which a certain number of leaves at every internal vertex are black, depending on the valence of the trees (1 in a 4-valent blossom tree). The subtree structure, shown in Figure \ref{fig: environment of the root}, is that each internal vertex of a 4-valent $k$-vertex blossom tree has exactly one descendent black leaf and two descendent subtrees on $k-1$ vertices, where a single white leaf is a subtree with zero vertices. The number of ways to choose two such subtrees and arrange them in the ordered tree is
\[ r_k = 3 \sum_{i=0}^{k-1} r_i r_{k-1-i}. \]
Therefore, the generating function $R(g)$ satisfies
\begin{equation}\label{eq: R functional equation ch2}
R = 1 + 3gR^2
\end{equation}
with analytic solution
\begin{equation}
R(g) = \frac{1 - \sqrt{1-12g}}{6g}.
\end{equation}

\begin{figure}[h]
\centering
\begin{subfigure}{.33\textwidth}
  \centering
  \includegraphics[width=.5\linewidth]{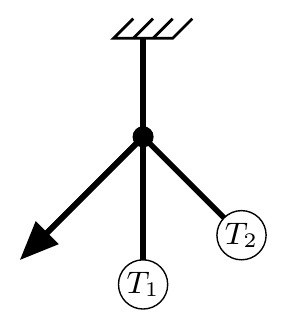}
\end{subfigure}%
\begin{subfigure}{.33\textwidth}
  \centering
  \includegraphics[width=.5\linewidth]{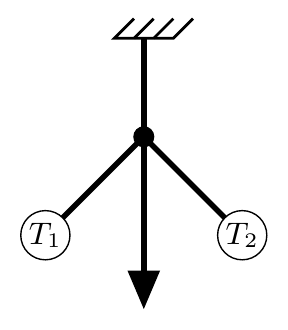}
\end{subfigure}
\begin{subfigure}{.33\textwidth}
  \centering
  \includegraphics[width=.5\linewidth]{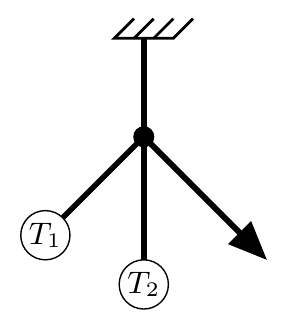}
\end{subfigure}
\caption{Decomposition of a 4-valent blossom tree at its root: the single black leaf has three possible positions, and the other two descendants are blossom sub-trees.}
\label{fig: environment of the root}
\end{figure}

The \emph{geodesic distance} of a 2-legged map is the minimal number of edges crossed by a path on the surface connecting the two legs. Schaeffer's bijection was extended in \cite{BDG03} to keep track of the geodesic distance, preserved through the bijection and called \emph{root depth} on the tree side, of even-valent blossom trees. The subtree structure decomposition in Figure \ref{fig: environment of the root} was again applied to recursively build a blossom tree from its subtrees, and the root depth of the pieces and the whole kept track of using contour walks; the resulting generating function is for maps with an upper bound on the geodesic distance. Furthermore, defining
\[ R_n(g) := \sum_{k=0}^{\infty} r_{n,k} g^k \]
for $r_{n,k} = \#\{\mbox{2-leg 4-valent $k$-vertex planar maps with goedesic distance $\le n$}\}$
the derivation yields a second-order three-term recurrence relation for the generating functions instead of the quadratic functional relation that determined $R(g)$:
\begin{equation}\label{eq: R_n DiF recursion}
R_n = 1 + gR_n\left( R_{n-1} + R_n + R_{n+1} \right).
\end{equation}
The recurrence relation has two boundary conditions: 
\begin{subequations}\label{eq: Rn rec derivation ic}
\begin{align}
R_{-1} & = 0\label{eq: Rn rec derivation ic R-1} \\
\lim_{n\rightarrow\infty}\limits R_n & = R\label{eq: Rn rec derivation ic lim},
\end{align}
\end{subequations}
the initial condition stating that there are no maps with negative geodesic distance, and the limit condition stating that the large $n$ limit amounts to removing the upper bound on geodesic distance.

The recurrence relation (\ref{eq: R_n DiF recursion}) with boundary conditions (\ref{eq: Rn rec derivation ic}) was solved in \cite{BDG03}. The closed-form solution to the recursion is given by
\begin{equation}\label{eq: R_n cf DiF}
R_n = R \frac{(1-x^{n+1})(1-x^{n+4})}{(1-x^{n+2})(1-x^{n+3})}.
\end{equation}
where $x$ is the solution with modulus less than 1 to the characteristic equation
\begin{equation}\label{eq: x functional equation}
x + \frac{1}{x} = \frac{1-4gR}{gR}.
\end{equation}
The original paper did not contain a proof of this result; it can be directly checked that the closed-form solves the recursion, and the authors provided the formal means by which they discovered this amazing formula. The first proof of this closed-form appeared in \cite{BG12}, by continued fraction expansions and exploiting a remarkable connection to the problem of enumerating planar maps with a boundary. This leads to an expression for the generating functions in terms of symplectic Schur functions. Our approach provides an alternative proof that 
emphasizes the connection to integrable dynamical systems and natural extensions to Painlev\'e equations and Riemann-Hilbert analysis (see section \ref{sec: comb context}); however, it will be very interesting to study relations between these two approaches.

\subsubsection{Triangulations} \label{triangulations}
The second problem comes from the paper \cite{Bou06a} and is an enumeration problem of labeled plane trees which are bijectively equivalent to certain triangulations of the sphere (see section \ref{conclusions2}). A \emph{plane tree} is a planar map with only one face, and the enumeration is of the family of labeled plane trees, rooted with the root labeled 0 and with the labels of two adjacent vertices differing by $\pm 1$. The valence of these trees is unrestricted. Figure \ref{fig: bmtrees} shows a tree in this family, with its root labeled in boldface. 
Such trees are also called \emph{embedded}, because trees labeled in this way can be naturally embedded into a one-dimensional integer lattice, with the labeling of each vertex dictating its location in the lattice. 
The maximal label, or equivalently the largest point in the lattice having positive mass, is called the \emph{label height}. The label height is the appropriate notion of distance for this enumeration problem, corresponding to the geodesic distance of the previous problem.

\begin{figure}[h]
  \centering
  \includegraphics[width=.25\textwidth]{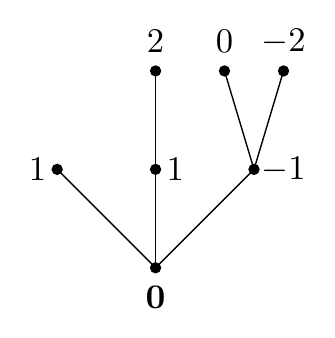}
  \caption{A labeled plane tree rooted at 0; reproduction of an example in \cite{Bou06a}.}
  \label{fig: bmtrees}
\end{figure}

Define the generating functions
\[ T(g) = \sum_{k=0}^{\infty} t_k g^k \hspace{.15in} \mbox{and} \hspace{.15in} T_j(g) = \sum_{k=0}^{\infty} t_{j,k} g^k \]
where $t_k = \#\{\mbox{trees in the family with $k$ edges}\}$ and $t_{j,k} = \# \{ \mbox{trees in the family} \linebreak \mbox{having $k$ edges and label height $\le j$} \}$. 
A rooted tree either has zero edges (and is simply the root), or it has at least one edge; in the latter case, it can be decomposed (see Figure \ref{fig: BMdecompT}) into two subtrees connected by an edge incident to the root, and the sum of the edges in the two subtrees will be one less than the number of edges in the overall tree. The subtree which doesn't contain the root is still in the family, by shifting all its labels by $+1$ or $-1$.
\begin{figure}[H]
\begin{center}
\includegraphics[width=.5\textwidth]{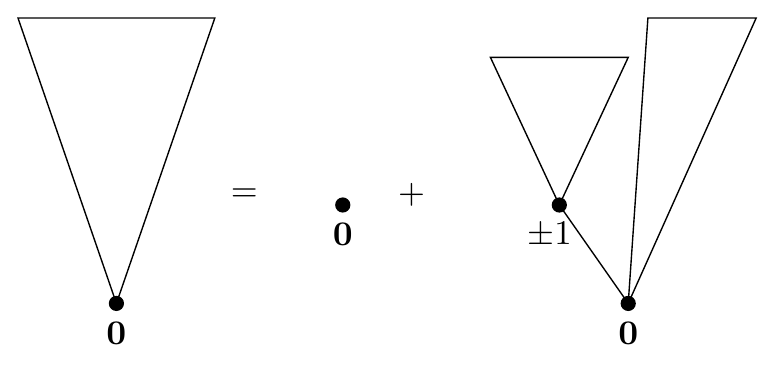}
\end{center}
\caption{Recursive breakdown at the root of a labeled plane tree: a tree is either simply a root or it can be decomposed as discussed above.}
\label{fig: BMdecompT}
\end{figure}
\noindent The recurrence relation for $t_k$, 
\[ t_k = 2\sum_{j=0}^{k-1} t_j t_{k-1-j}, \]
is summed to see that the generating function for this family of trees is also given by a quadratic equation:
\begin{equation}
T = 1 + 2gT^2.
\end{equation}

Keeping track of label height through this enumeration, the recursion relation obeyed by $T_j$ is 
\begin{equation}\label{eq: T_j recurrence}
T_j = 1 + gT_j \left( T_{j-1} + T_{j+1} \right)
\end{equation}
with boundary conditions
\begin{subequations}\label{eq: Tj rec derivation ic}
\begin{align}
& T_{-1} = 0\label{eq: Tj rec derivation ic T-1} \\
& \lim_{j\rightarrow\infty}\limits T_j = T\label{eq: Tj rec derivation ic lim}.
\end{align}
\end{subequations}

Then \cite{Bou06a} contains the following closed-form solution (formally derived, and verifiable by insertion into the recursion) for this family of generating functions: 
\begin{equation}\label{eq: T_j closed form}
T_j = T\frac{(1-Z^{j+1})(1-Z^{j+5})}{(1-Z^{j+2})(1-Z^{j+4})},
\end{equation}
where $Z \equiv Z(g)$ is implicitly given by
\bea\label{Zeqn}
x_* = \frac{(1+Z)^2}{1+Z^2},
\eea
where $x_* = x^*_-$ as defined by (\ref{fixedpt}) which also specifies its $g$-dependence.

A discrete integral of motion
\[ f(x,y) = xy(1-gx)(1-gy) + gxy - x - y \]
satisfying $f(T_{n-1}, T_n) = f(T_n, T_{n+1})$ with convergence of $T_n$ to its limit $T$ assured by $f(T_n, T_{n+1}) = f(T, T)$ was given in \cite{BDG03b}.

\subsection{Combinatorics context and applications}\label{sec: comb context}
Systematic enumeration of families of maps began with Tutte in \cite{Tut62a, Tut62b, Tut62c, Tut63}. He studied rooted maps (maps having a distinguished directed edge) and, by decomposing maps at their root, derived functional equations for generating functions which are often multi-variate due to the combinatorial complexity of the maps. Tutte's Quadratic Method along with Lagrange Inversion gives the asymptotic behavior of many families of maps (see \cite{BR86} for a summary). This foundation has made the study of map enumeration a signature component of the modern theory of random combinatorial structures. Further research has followed two methodological streams, one algebraic and the other analytic. In the later category 
one has the subject of analytic combinatorics amply illustrated in \cite{FS09} on the one hand 
and methods of Riemann-Hilbert Analysis coming from integrable systems theory that solved the Erd\"os problem on longest increasing sequences in permutations mentioned earlier. However we believe the most promising developments will come from a combination of both methods. The problems discussed in this paper are a perfect example of that. On the algebraic side was the breakthrough result of Schaeffer \cite{Sch97}, already mentioned, which built upon pioneering work by Cori and Vacquelin \cite{CV81} to establish a bijection between the families of planar quadrangulations/triangulations and appropriate families of trees. On the analytic side was the asymptotic analysis of families of orthogonal polynomials which motivated the postulation in  \cite{BDG03} of formulas like \eqref{eq: R_n cf DiF}. The rigorous asymptotic analysis of the Riemann-Hilbert problem for orthogonal polynomials with exponential weights that has been carried out in \cite{EM03, EMP08, EP12, Er11} provides a rigorous background for the analysis and extension of \eqref{eq: R_n cf DiF}. Indeed our original motivation was to understand {\it non-autonomous} extensions of  \eqref{eq: discrete dynamics general c} that arose from the analysis in the above cited papers. More will be said about that in section \ref{conclusions1}. Here we just close by noting that besides providing a natural derivation of the formulas  \eqref{eq: R_n cf DiF} and \eqref{eq: T_j closed form}, the results in this paper provide for a rigorous global analysis of these generating functions as well as the determination of their maximal domains of holomorphy. 

\section{Elliptic Parametrization, c=1} \label{e-param1}
\subsection{The Generic Orbit}
In this section we show how the combinatorial problem, corresponding to $c = 1$, embeds into a more general integrable dynamical system and provide an elliptic parametrization of the general orbits of this system. This is a discrete analogue of action-angle variables of continuous integrable systems. The approach used here is primarily that of classical function theory using canonical coordinates based on the Weierstrass $\wp$ function. In this perspective the separation of variables provided by  \eqref{HJ} may be viewed as a discrete analogue of the Hamilton-Jacobi equation. In the next theorem we will make use of such an elliptic parametrization for generic orbits of the dPI system that has already been developed by \cite{BTR01}. Our main results concern the degeneration of this parametrization and are presented in the next subsection.

\begin{theorem}\label{thm: elliptic parametrization dPI}
The discrete dynamical system
\begin{subequations}
\begin{align}
x_{n+1} & = \frac{1}{g} - \frac{1}{gx_n} - x_n - y_n\\
y_{n+1} & = x_n
\end{align}
\end{subequations}
with invariant
\[ I(x,y) = xy\left(x+y-\frac{1}{g} \right) + \frac{1}{g}(x+y) - \frac{1}{g^2} \]
satisfying $I(x_{n+1}, x_n) = I(x_n, x_{n-1})$ has an elliptic parametrization through the Weierstrass $\wp$-function as
\begin{subequations}\label{eq: x, y parametrized through wp}
\begin{align}
x(\xi) & = \frac{1}{2} \left( \frac{1}{g} + \frac{\wp'(v) - \wp'(\xi)}{\wp(v) - \wp(\xi)} \right)\label{eq: x parametrized through wp} \\
y(\xi) & = \frac{1}{2} \left( \frac{1}{g} + \frac{\wp'(v) + \wp'(\xi)}{\wp(v) - \wp(\xi)} \right)\label{eq: y parametrized through wp}
\end{align}
\end{subequations}
where $v$ is such that $\wp(v) = \frac{4g+1}{12g^2}$. The discrete dynamics are recovered in this continuous parametrization via addition of $v$.
\end{theorem}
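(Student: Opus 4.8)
The plan is to verify directly that \eqref{eq: x parametrized through wp}--\eqref{eq: y parametrized through wp} parametrize the level sets of $I$ and conjugate the dynamics to the shift $\xi \mapsto \xi + v$; the uniformizing elliptic curve itself is the one produced in \cite{BTR01}, so the work is verification rather than construction. \emph{Step 1 (pass to $\zeta$-form).} Using the addition identity $\frac12\frac{\wp'(a)-\wp'(b)}{\wp(a)-\wp(b)} = \zeta(a+b)-\zeta(a)-\zeta(b)$ and the parities of $\wp,\wp',\zeta$, the parametrization becomes
\[ x(\xi) = \frac{1}{2g} + \zeta(\xi+v) - \zeta(\xi) - \zeta(v), \qquad y(\xi) = \frac{1}{2g} + \zeta(v-\xi) + \zeta(\xi) - \zeta(v), \]
from which $y(\xi+v) = x(\xi)$ is immediate. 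Writing $\xi_n = \xi_0 + nv$ and $x_n := x(\xi_n)$, $y_n := y(\xi_n)$, this is exactly the second equation $y_{n+1}=x_n$, and it also gives $y_n = x(\xi_n-v)$; hence the first equation of the system is equivalent to the single elliptic-function identity
\[ x(\xi-v) + x(\xi) + x(\xi+v) = \frac{1}{g} - \frac{1}{g\,x(\xi)}, \]
which I will call $(\star)$. So everything reduces to proving $(\star)$.

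\emph{Step 2 (the parametrized curve is a level set of $I$).} Adding and subtracting the two formulas gives $x+y-\frac1g = \frac{\wp'(v)}{\wp(v)-\wp(\xi)}$ and $x-y = \frac{\wp'(\xi)}{\wp(\xi)-\wp(v)}$, so $xy = \frac14[(x+y)^2-(x-y)^2]$ is, after using $\wp'(\xi)^2 = 4\wp(\xi)^3 - g_2\wp(\xi) - g_3$, a rational function of $\wp(\xi)$ alone. Substituting into $I(x,y) = xy(x+y-\frac1g) + \frac1g(x+y) - \frac1{g^2}$ and expanding in powers of $(\wp(v)-\wp(\xi))^{-1}$, one finds $I(x(\xi),y(\xi))$ is a constant plus a multiple of $(\wp(v)-\wp(\xi))^{-1}$ plus a multiple of $(\wp(v)-\wp(\xi))^{-2}$; for generic orbits the two multipliers are, respectively, nonzero multiples of $\wp(v) - \frac{4g+1}{12g^2}$ and of $\wp''(v) + \frac1g\wp'(v)$. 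The first vanishes by the hypothesis on $v$, and the second vanishes because the Weierstrass invariants of the uniformizing lattice are $g_2 = \frac{2e}{g} + \frac{(4g+1)^2}{12g^4}$ and $g_3 = 4\wp(v)^3 - g_2\wp(v) - e^2$, with $e := \wp'(v)$ the energy of the orbit — exactly the normalization of the model built in \cite{BTR01}. (Equivalently, one takes these two relations as the definition of $v$ and of the lattice and then checks a posteriori that $\xi \mapsto (x(\xi),y(\xi))$ is an isomorphism onto $\{I=e\}$: this map has degree one since $\xi \mapsto x(\xi)$ is two-to-one onto $\mathbb{P}^1$ while $(x,y)\mapsto x$ is two-to-one on the cubic.) Thus $I(x(\xi),y(\xi)) \equiv e$, and by the $x\leftrightarrow y$ symmetry of $I$ also $I(y(\xi),x(\xi)) \equiv e$.

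\emph{Step 3 (the recurrence).} For fixed $x$, $I(x,y) = x\,y^2 + (x^2 - \frac{x}{g} + \frac1g)y + (\frac{x}{g} - \frac1{g^2})$ is quadratic in $y$. Applying Step 2 at $\xi$ and at $\xi+v$, together with the symmetry of $I$, shows that both $y = x(\xi-v)$ and $y = x(\xi+v)$ solve $I(x(\xi),y) = e$; for generic $\xi$ these values are distinct (they agree only on a discrete set, since $x$ is non-constant and $v$ is not a lattice point), hence are the two roots, and Vieta's formula gives $x(\xi-v)+x(\xi+v) = -x(\xi) + \frac1g - \frac1{g\,x(\xi)}$, which is $(\star)$; it extends to all $\xi$ by analytic continuation. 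Therefore one step of the dynamics $(x_n,y_n)\mapsto(x_{n+1},y_{n+1})$ is realized by $\xi\mapsto\xi+v$, which completes the verification.

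I expect the principal obstacle to be Step 2: executing the substitution into $I$ and recognizing that the two conditions forcing constancy along the parametrized curve are precisely the stated value of $\wp(v)$ and a normalization of $g_2$ (i.e.\ a choice of Weierstrass model), together with the input — taken from \cite{BTR01} — that this normalized model really uniformizes $\{I=e\}$. An alternative to Step 3 that bypasses the constancy computation is a pole analysis of $(\star)$: from the $\zeta$-form, $x$ has exactly two simple poles per period parallelogram, at $\xi\equiv0$ with principal part $-1/\xi$ and at $\xi\equiv-v$ with principal part $1/(\xi+v)$; in $x(\xi-v)+x(\xi)+x(\xi+v)$ the poles at $\xi\equiv0$ and $\xi\equiv-v$ cancel in pairs, and the remaining simple poles at $\xi\equiv v$ and $\xi\equiv-2v$ are cancelled by $-1/(g\,x(\xi))$ once one checks — via the duplication formula and the same two normalizations — that $x(v)=x(-2v)=0$ with $x'(v) = -x'(-2v) = 1/g$; the difference of the two sides of $(\star)$ is then a pole-free, hence constant, elliptic function, and the constant is $0$ by comparing constant terms at one point. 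Finally, one should note that the real orbits of the system correspond to restricting $\xi$ to the appropriate real, or half-period-shifted real, locus of the period parallelogram, where $x(\xi)$ and $y(\xi)$ are real-valued.
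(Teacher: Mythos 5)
Your argument is correct, but it runs the paper's logic in reverse and replaces its key computation with a different mechanism, so a comparison is worthwhile. The paper \emph{constructs} the parametrization: it factors the invariant as $(x+y-\tfrac1g)(xy+\tfrac1g)=e$, separates variables with $t=xy+\tfrac1g$, recovers $x,y$ as the roots of $X^2-(x+y)X+xy=0$, and translates the resulting cubic in $t$ into Weierstrass normal form, so that $\wp(v)=\tfrac{4g+1}{12g^2}$, $\wp'(v)=e$ and the values of $g_2,g_3$ fall out of the construction. You instead \emph{verify} that the stated formulas land on a level set, and you correctly identify the two obstructions to constancy of $I(x(\xi),y(\xi))$ as the stated value of $\wp(v)$ and the normalization $g_2=12\wp(v)^2+\tfrac{2e}{g}$ (equivalently $\wp''(v)+\tfrac1g\wp'(v)=0$), which agree exactly with the paper's Weierstrass invariants; the cost is that you must import from \cite{BTR01}, or from your degree argument, the fact that this normalized model actually uniformizes $\{I=e\}$, which the paper's construction yields for free. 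For the dynamics, both you and Lemma \ref{lem: dPI dynamics recovered} pass to the $\zeta$-form to get $y(\xi+v)=x(\xi)$; but where the paper then grinds through the addition law \eqref{eq: wp addition law} and the product formula for $x(\xi)y(\xi)$, you observe that $x(\xi-v)$ and $x(\xi+v)$ are the two roots of the quadratic $I(x(\xi),y)=e$ in $y$ and read off their sum from Vieta. This is the QRT root-switching mechanism in miniature: shorter, more conceptual, and closer in spirit to the geometric treatment of the $c=0$ case in Section \ref{e-param0}, at the price of the genericity/distinctness check that you correctly supply. Your alternative pole-counting proof of $(\star)$ also checks out, since $x(v)=\tfrac1{2g}+\tfrac{\wp''(v)}{2\wp'(v)}$ vanishes precisely because of the $g_2$ normalization.
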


Remark: $x_n = x(\xi + nv)$.

\begin{proof}
On a level set $I(x,y) = e$ of the invariant, we have the factorization
\begin{align*}
I(x,y) & = xy\left(x+y-\frac{1}{g}\right) + \frac{1}{g}(x+y) -\frac{1}{g^2} \\
 & = \left(x+y-\frac{1}{g}\right)\left(xy+\frac{1}{g}\right) \\
 & = e
\end{align*}
into which can be introduced a new parameter, $t$, so that
\begin{align*}
t & = xy + \frac{1}{g} \\*
\frac{e}{t} & = x+y - \frac{1}{g}.
\end{align*}
Then
\begin{align*}
xy & = t - \frac{1}{g} \\
x + y & = \frac{e}{t} + \frac{1}{g},
\end{align*}
so $x$ and $y$ can be recovered as the roots of the quadratic with coefficients the elementary symmetric polynomials $xy$ and $x+y$:
\bea \label{HJ}
 X^2 - (x+y) X + xy = 0. 
\eea
Thus
\begin{align*}
x, y & = \frac{x+y \mp \sqrt{(x+y)^2 - 4xy}}{2} \\
 & = \frac{\frac{e}{t} + \frac{1}{g} \mp \sqrt{\left(\frac{e}{t} + \frac{1}{g}\right)^2 - 4\left(t - \frac{1}{g}\right)}}{2} \\
 & = \frac{e + \frac{t}{g} \mp \sqrt{\left(e+\frac{t}{g}\right)^2 - 4t^2\left(t-\frac{1}{g}\right)}}{2t}.
\end{align*}
Set
\begin{align*}
w^2 & = \left(e+\frac{t}{g}\right)^2 - 4t^2\left(t-\frac{1}{g}\right) \\
 & = - 4 t^3 + \frac{1}{g^2}(4g+1)t^2 + \frac{2e}{g}t + e^2.
\end{align*}
This curve is put into Weierstrass normal form,
\begin{equation*}
w^2 = -4z^3 + g_2 z - g_3,
\end{equation*}
by the translation $t = z + \alpha$ for $\alpha = \frac{1}{12g^2}(4g+1)$, giving the following for Weierstrass invariants:
\begin{subequations}\label{eq: Weierstrass invariants in alpha}
\begin{align}
g_2 & = 12\alpha^2 + \frac{2e}{g} \\
g_3 & = -8\alpha^3 - \frac{2e}{g}\alpha - e^2.
\end{align}
\end{subequations}
The Weierstrass normal form of an elliptic curve is equivalent to the differential equation 
\[ (\wp')^2 = 4\wp^3 - g_2\wp - g_3 \]
defining the Weierstrass $\wp$-function and its derivative, under the identification $z = -\wp(\xi)$ and $w = \wp'(\xi)$. Make the change of variables
\begin{subequations}
\begin{align}
t & = z + \alpha = -\wp(\xi) + \alpha \\
w & = \wp'(\xi) \\
\alpha & = \wp(v)\label{eq: definition of alpha} \\
e & = \wp'(v)\label{eq: calculation of wp'(v)}
\end{align}
\end{subequations}
where \eqref{eq: definition of alpha} is a definition of $v$ and \eqref{eq: calculation of wp'(v)} follows from the calculation
\begin{align*}
\wp'(v) & = \sqrt{4\wp^3(v) - g_2\wp(v) - g_3} \\
 & = \sqrt{4\alpha^3 - \left(12\alpha^2 + \frac{2e}{g}\right)\alpha - \left( -8\alpha^3 - \frac{2e}{g}\alpha - e^2 \right)} \\
 & = e.
\end{align*}
Under this change of variables, points $(x,y) = \left(x(\xi), y(\xi)\right)$ on the curve $I(x,y) = e$ are given by the parametrization
\begin{equation}
x, y = \frac{\wp'(v) + \frac{\wp(v) - \wp(\xi)}{g} \mp \wp'(\xi)}{2(\wp(v) - \wp(\xi))},
\end{equation}
equivalent to equation \eqref{eq: x, y parametrized through wp}. We separate out the proof of discrete dynamics recovery into Lemma \ref{lem: dPI dynamics recovered}.
\end{proof}

The following corollary is for computational use later on, and is simply a reformulation of the elliptic parametrization in terms of a different function in the Weierstrass function family.

\begin{corollary}\label{cor: parametrization through sigma}
The function $x(\xi)$ of Theorem \ref{thm: elliptic parametrization dPI} can be expressed entirely in terms of the Weierstrass $\sigma$-function:
\begin{equation}
x(\xi) = \frac{1}{2} \left[ \frac{1}{g} - \frac{\sigma(2v)\sigma^4(\xi) - \sigma(2\xi)\sigma^4(v)}{\sigma^2(v)\sigma^2(\xi)\sigma(\xi-v)\sigma(\xi+v)} \right].
\end{equation}
\end{corollary}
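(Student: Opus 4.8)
The plan is to convert the already-proved parametrization \eqref{eq: x parametrized through wp} into $\sigma$-function form by recognizing the quantity $\frac{\wp'(v)-\wp'(\xi)}{\wp(v)-\wp(\xi)}$ as (minus) the logarithmic-derivative combination that appears in the classical addition theorem for $\zeta$. Specifically, I would start from the standard identity
\[
\zeta(\xi+v) - \zeta(\xi) - \zeta(v) = \frac{1}{2}\,\frac{\wp'(\xi)-\wp'(v)}{\wp(\xi)-\wp(v)},
\]
so that $x(\xi) = \frac{1}{2}\left(\frac{1}{g}\right) - \bigl(\zeta(\xi+v)-\zeta(\xi)-\zeta(v)\bigr)$ and, symmetrically (using $\wp'$ odd), $y(\xi) = \frac{1}{2}\left(\frac{1}{g}\right) - \bigl(\zeta(\xi-v)-\zeta(\xi)+\zeta(v)\bigr)$. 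Then I would exponentiate: since $\zeta = (\log\sigma)'$, the combination $\zeta(\xi+v)-\zeta(\xi)-\zeta(v)$ is the logarithmic derivative of $\sigma(\xi+v)/(\sigma(\xi)\sigma(v))$, which is the classical route to writing these differences rationally in $\sigma$.

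The second, and computationally heavier, step is to assemble the single fraction claimed in the statement. Here I would use the fundamental two-term $\sigma$ identity
\[
\sigma(\xi+v)\,\sigma(\xi-v) = -\,\sigma^2(\xi)\,\sigma^2(v)\,\bigl(\wp(\xi)-\wp(v)\bigr),
\]
which rewrites the denominator $\wp(v)-\wp(\xi)$ in \eqref{eq: x parametrized through wp} directly as $\sigma(\xi+v)\sigma(\xi-v)/\bigl(\sigma^2(\xi)\sigma^2(v)\bigr)$, matching the denominator $\sigma^2(v)\sigma^2(\xi)\sigma(\xi-v)\sigma(\xi+v)$ once one clears an extra factor of $\sigma(\xi-v)\sigma(\xi+v)$. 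For the numerator I would need to express $\wp'(v)-\wp'(\xi)$ (times $\wp(v)-\wp(\xi)$, to clear the denominator once more) in terms of $\sigma$. Using $\wp'(\xi) = -\sigma(2\xi)/\sigma^4(\xi)$ (the standard formula for $\wp'$ in terms of $\sigma$), the quantity $(\wp(v)-\wp(\xi))\cdot(\wp'(v)-\wp'(\xi))$ becomes, after substituting the two-term identity for $\wp(v)-\wp(\xi)$ and the $\sigma(2\cdot)/\sigma^4$ formulas for both $\wp'$'s, a rational expression whose numerator is $\sigma(2v)\sigma^4(\xi) - \sigma(2\xi)\sigma^4(v)$ up to the bookkeeping of signs and common factors. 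Carrying the signs correctly through $\wp'(\xi) = -\sigma(2\xi)/\sigma^4(\xi)$ and through the $-1$ in the two-term identity is exactly the place where an error is easiest to make, so I would track them explicitly.

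Concretely, the key steps in order are: (1) rewrite $x(\xi)$ from Theorem~\ref{thm: elliptic parametrization dPI} using the $\zeta$ addition formula; (2) replace $\wp(v)-\wp(\xi)$ by the two-term $\sigma$ identity; (3) replace each $\wp'$ by $-\sigma(2\cdot)/\sigma^4(\cdot)$; (4) combine over a common denominator and cancel the redundant $\sigma(\xi-v)\sigma(\xi+v)$ factor that appears in both numerator and denominator, arriving at the displayed formula. I expect the main obstacle to be purely the sign/normalization bookkeeping in steps (2)–(4) — there is no conceptual difficulty, but the combination of the minus sign in $\wp'(\xi) = -\sigma(2\xi)/\sigma^4(\xi)$, the minus sign in $\sigma(\xi+v)\sigma(\xi-v) = -\sigma^2(\xi)\sigma^2(v)(\wp(\xi)-\wp(v))$, and the $\mp$ structure distinguishing $x$ from $y$ must all be reconciled to land on exactly $\sigma(2v)\sigma^4(\xi) - \sigma(2\xi)\sigma^4(v)$ with the stated overall sign. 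A useful consistency check at the end is to verify the formula at a symmetric point or in a degenerate limit (e.g. comparing leading behavior as $\xi\to 0$) to confirm the constant $\frac{1}{2g}$ and the sign of the bracketed term.
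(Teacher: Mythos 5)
Your steps (2)--(4) are exactly the paper's proof: substitute $\wp(v)-\wp(\xi) = -\sigma(v-\xi)\sigma(v+\xi)/\bigl(\sigma^2(v)\sigma^2(\xi)\bigr)$ and $\wp'(\cdot) = -\sigma(2\,\cdot)/\sigma^4(\cdot)$ into \eqref{eq: x parametrized through wp} and simplify, using oddness of $\sigma$ to fix the final sign. Step (1) is superfluous for this corollary (it reappears only in the proof of Lemma \ref{lem: dPI dynamics recovered}) and as written carries a sign slip --- the correct relation is $x(\xi) = \tfrac{1}{2g} + \zeta(\xi+v)-\zeta(\xi)-\zeta(v)$ --- but since your computation proceeds from the $\wp$-form this does not affect the argument.
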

\begin{proof}
We make use of the following two identities (\cite{Mar77}, equation III.5.52 and exercise III.5.16) which relate the Weierstrass $\sigma$-function to the $\wp$-function:
\begin{equation}\label{eq: wp and sigma identity}
\wp(z) - \wp(Z) = - \frac{\sigma(z-Z)\sigma(z+Z)}{\sigma^2(z)\sigma^2(Z)}
\end{equation}
if $Z$ is not a period of $\wp(z)$ and
\begin{equation}
\wp'(z) = - \frac{\sigma(2z)}{\sigma^4(z)}.
\end{equation}
Then
\begin{align*}
x(\xi) & = \frac{1}{2} \left(\frac{1}{g} + \frac{\wp'(v) - \wp'(\xi)}{\wp(v) - \wp(\xi)} \right) \\
 & = \frac{1}{2} \left[ \frac{1}{g} - \left( \frac{\sigma(2v)}{\sigma^4(v)} - \frac{\sigma(2\xi)}{\sigma^4(\xi)} \right)\left( - \frac{\sigma^2(v) \sigma^2(\xi)}{\sigma(v-\xi)\sigma(v+\xi)} \right) \right] \\
 & = \frac{1}{2} \left[ \frac{1}{g} + \left( \frac{\sigma(2v)\sigma^4(\xi) - \sigma(2\xi)\sigma^4(v)}{\sigma^4(v)\sigma^4(\xi)} \right)\left( \frac{\sigma^2(v) \sigma^2(\xi)}{\sigma(v-\xi)\sigma(v+\xi)} \right) \right] \\
 & = \frac{1}{2} \left[ \frac{1}{g} + \left( \frac{1}{\sigma^2(v)\sigma^2(\xi)} \right)\left( \frac{\sigma(2v)\sigma^4(\xi) - \sigma(2\xi)\sigma^4(v)}{\sigma(v-\xi)\sigma(v+\xi)} \right) \right] \\
 & = \frac{1}{2} \left[ \frac{1}{g} - \frac{\sigma(2v)\sigma^4(\xi) - \sigma(2\xi)\sigma^4(v)}{\sigma^2(v)\sigma^2(\xi)\sigma(\xi-v)\sigma(\xi+v)} \right], \\
\end{align*}
the last line since $\sigma$ is an odd function.
\end{proof}

\begin{lemma}\label{lem: dPI dynamics recovered}
Under the parametrization of Theorem \ref{thm: elliptic parametrization dPI}, the discrete dynamics of the system are recovered via addition of $v$. That is,
\begin{subequations}\label{eq: x, y dynamics recovered}
\begin{align}
x(\xi + v) & = \frac{1}{g} - \frac{1}{gx(\xi)} - x(\xi) - y(\xi)\label{eq: x dynamics recovered} \\
y(\xi + v) & = x(\xi)\label{eq: y dynamics recovered}
\end{align}
\end{subequations}
where $v$ is such that $\wp(v) = \frac{4g+1}{12g^2}$.
\end{lemma}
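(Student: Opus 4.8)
\emph{Proof plan.} The plan is to verify the two component identities \eqref{eq: x dynamics recovered} and \eqref{eq: y dynamics recovered} separately, as identities in $\xi$, using the classical Weierstrass addition formulae together with the symmetric-function relations already extracted in the proof of Theorem~\ref{thm: elliptic parametrization dPI} and the defining relation $\wp(v)=\frac{4g+1}{12g^{2}}$.

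I would begin with \eqref{eq: y dynamics recovered}, $y(\xi+v)=x(\xi)$, which is really the addition theorem for $\wp$ in disguise. On the curve $Y^{2}=4X^{3}-g_{2}X-g_{3}$ with $X=\wp(u)$, $Y=\wp'(u)$, the three points $P(\xi)$, $P(v)$, $P(-(\xi+v))=(\wp(\xi+v),-\wp'(\xi+v))$ sum to the identity of the group law and so are collinear; equating the slope of the chord through $P(\xi),P(v)$ with the slope of the chord through $P(v),P(-(\xi+v))$ and clearing signs gives
\[
\frac{\wp'(v)-\wp'(\xi)}{\wp(v)-\wp(\xi)}=\frac{\wp'(v)+\wp'(\xi+v)}{\wp(v)-\wp(\xi+v)}.
\]
Adding $\tfrac1g$ and dividing by $2$ turns the left-hand side into $x(\xi)$ and the right-hand side into $y(\xi+v)$ by \eqref{eq: x parametrized through wp}--\eqref{eq: y parametrized through wp}, which is exactly \eqref{eq: y dynamics recovered}.

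For \eqref{eq: x dynamics recovered} I would exploit the relations produced in the proof of Theorem~\ref{thm: elliptic parametrization dPI}: with $t(\eta):=\wp(v)-\wp(\eta)$ and $e:=\wp'(v)$ one has, for every $\eta$,
\[
x(\eta)y(\eta)=t(\eta)-\tfrac1g,\qquad x(\eta)+y(\eta)=\frac{e}{t(\eta)}+\tfrac1g,\qquad \wp'(\eta)^{2}=-4t(\eta)^{3}+\frac{4g+1}{g^{2}}t(\eta)^{2}+\frac{2e}{g}t(\eta)+e^{2}.
\]
Applying the middle relation at $\eta=\xi+v$ and substituting $y(\xi+v)=x(\xi)$ from the previous step, the target identity \eqref{eq: x dynamics recovered} becomes, after cancelling $\tfrac1g$ and using $1+g\,x(\xi)y(\xi)=g\,t(\xi)$, equivalent to the single scalar identity
\[
t(\xi)\,t(\xi+v)=-e\,x(\xi).
\]
Now I would expand $t(\xi+v)=\wp(v)-\wp(\xi+v)$ via the addition formula $\wp(\xi+v)=\frac14\left(\frac{\wp'(\xi)-\wp'(v)}{\wp(\xi)-\wp(v)}\right)^{2}-\wp(\xi)-\wp(v)$, rewrite $\wp(\xi)=\wp(v)-t(\xi)$ and $\wp(\xi)-\wp(v)=-t(\xi)$, insert the formula for $x(\xi)$ from \eqref{eq: x parametrized through wp}, multiply through by $4t(\xi)$, and finally eliminate $\wp'(\xi)^{2}$ using the cubic relation above. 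Every term involving $\wp'(\xi)$ and every term involving $e$ then cancels, and what survives is precisely $12\,\wp(v)\,t(\xi)^{2}=\frac{4g+1}{g^{2}}\,t(\xi)^{2}$, i.e. $\wp(v)=\frac{4g+1}{12g^{2}}$ --- the defining property of $v$. Reading the chain of equivalences backwards then yields \eqref{eq: x dynamics recovered}.

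The main obstacle is purely the bookkeeping in this last reduction: one must carry $t(\xi)$, $\wp'(\xi)$ and $e$ as independent symbols and be confident that exactly the right cancellations occur so that only the $\wp(v)$-identity remains. There is nothing conceptual left once the reduction to $t(\xi)t(\xi+v)=-e\,x(\xi)$ is in place, but a sign slip or a dropped monomial would be easy, so I would organize it as (i) pass to the variables $t(\xi),\wp'(\xi),e,\wp(v)$; (ii) clear denominators once, by multiplying by $4t(\xi)$; (iii) substitute $\wp'(\xi)^{2}$ from the cubic; and (iv) collect terms and read off $\wp(v)=\frac{4g+1}{12g^{2}}$. Throughout, $\xi$ is taken generic (so that $\wp(\xi)\neq\wp(v)$ and $x(\xi)\neq0$, the cases needed for the addition formula and for the division), and the resulting identities extend to all $\xi$ by analyticity.
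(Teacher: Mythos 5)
Your proposal is correct, and its overall architecture coincides with the paper's: prove \eqref{eq: y dynamics recovered} first, then combine it with the sum relation $x(\xi+v)+y(\xi+v)=\frac{e}{t(\xi+v)}+\frac{1}{g}$ and the $\wp$-addition formula to obtain \eqref{eq: x dynamics recovered}. The two places where you deviate are worth noting. For \eqref{eq: y dynamics recovered} the paper passes through the Weierstrass $\zeta$-function, first deriving $x(\xi),y(\xi)=\frac{1}{2g}+\zeta(\pm\xi+v)\mp\zeta(\xi)-\zeta(v)$ and then evaluating at $\xi+v$; you instead read the same fact off the collinearity of $P(\xi)$, $P(v)$, $P(-(\xi+v))$ on the Weierstrass cubic, which is more elementary (no $\zeta$-identities needed) and makes the group-law content transparent. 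For \eqref{eq: x dynamics recovered} your reduction to the single scalar identity $t(\xi)\,t(\xi+v)=-e\,x(\xi)$ is a clean repackaging of the paper's three displayed intermediate identities (its computation of $x(\xi)y(\xi)$, of $\wp(v)-\wp(\xi+v)$, and the sum relation at $\xi+v$); I verified that after clearing the factor $4t(\xi)$ and substituting $\wp'(\xi)^2=-4t^3+\frac{4g+1}{g^2}t^2+\frac{2e}{g}t+e^2$ the identity does collapse to $12\,\wp(v)\,t^2=\frac{4g+1}{g^2}\,t^2$, exactly as you predict, so the only input beyond the addition theorem is the defining value $\wp(v)=\frac{4g+1}{12g^2}$ — the same place it enters the paper's argument, where it is hidden in the formula $g_2=12\alpha^2+\frac{2e}{g}$. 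Both routes are sound; yours isolates more cleanly where the special choice of $v$ is used.
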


\begin{proof}
The lemma rests solely on the use of certain properties of the Weierstrass functions, including the Weierstrass $\zeta$-function given by $\zeta'(u) = - \wp(u)$: the fact that $\zeta$ is an odd function, 
the identity (\cite{Mar77} equation (5.59))
\begin{equation}\label{eq: wp and zeta identity from Ahlfors}
\frac{\wp'(u)}{\wp(u) - \wp(v)} = \zeta(u+v) + \zeta(u-v) - 2\zeta(u),
\end{equation}
and the addition law for $\wp$ (\cite{Mar77} Theorem 5.15)
\begin{equation}\label{eq: wp addition law}
\wp(u+v) + \wp(u) + \wp(v) = \frac{1}{4} \left( \frac{\wp'(u) - \wp'(v)}{\wp(u) - \wp(v)} \right)^2.
\end{equation}
We first calculate that
\begin{equation}\label{eq: lemma1 eq1}
x(\xi), y(\xi) = \frac{1}{2g} + \zeta(\pm \xi + v) \mp \zeta(\xi) - \zeta(v),
\end{equation}
starting from \eqref{eq: x, y parametrized through wp}, expanding, and using \eqref{eq: wp and zeta identity from Ahlfors}:
\begin{align*}
x(\xi), y(\xi) & = \frac{1}{2g} + \frac{1}{2} \left( \zeta(v+\xi) + \zeta(v-\xi) - 2\zeta(v)\right) \pm \frac{1}{2}\left( \zeta(\xi+v) + \zeta(\xi-v) - 2\zeta(\xi)\right) \\
 & = \frac{1}{2g} + \frac{1}{2} \left( \zeta(\xi+v) \pm \zeta(\xi+v)\right) + \frac{1}{2}\left( \zeta(-\xi+v) \mp \zeta(-\xi+v) \right) - \left(\zeta(v) \pm \zeta(\xi)\right) \\
 & = \frac{1}{2g} + \zeta(\pm\xi+v) \mp\zeta(\xi) - \zeta(v).
\end{align*}

Demonstration of the result \eqref{eq: y dynamics recovered} is now direct via evaluation at $\xi+v$:
\begin{align*}
y(\xi+v) & = \frac{1}{2g} + \zeta(-(\xi+v)+v) + \zeta(\xi+v) - \zeta(v) \\
 & = \frac{1}{2g} + \zeta(-\xi) + \zeta(\xi+v) - \zeta(v) \\
 & = \frac{1}{2g} - \zeta(\xi) + \zeta(\xi+v) - \zeta(v) \\
 & = x(\xi).
\end{align*}

Demonstration of the other part of the result proceeds in a few more steps. First compute the product $x(\xi)y(\xi)$ using the forms in \eqref{eq: x, y parametrized through wp}, expanding along the way $\wp'(v)^2 - \wp'(\xi)^2 = -4(2\wp(v) + \wp(\xi))(\wp(v)-\wp(\xi))^2 - \frac{2\wp'(v)}{g}(\wp(v) - \wp(\xi))$ using the differential equation; thus
\begin{equation}\label{eq: lemma eq1}
x(\xi)y(\xi) = \frac{1}{4g^2} - \left( 2\wp(v) + \wp(\xi) \right).
\end{equation}
Starting with the left-hand side of the following equality, expanding $\wp(\xi+v)$ using the addition formula \eqref{eq: wp addition law}, and then making a substitution based on \eqref{eq: x parametrized through wp} give that
\begin{equation}\label{eq: lemma eq2}
\wp(v) - \wp(\xi+v) = 2\wp(v) + \wp(\xi) - \left( x(\xi) - \frac{1}{2g} \right)^2,
\end{equation}
which is inserted into 
\begin{equation}\label{eq: lemma eq3}
x(\xi+v) + x(\xi) = \frac{1}{g} + \frac{\wp'(v)}{\wp(v) - \wp(\xi+v)},
\end{equation}
the result of summing $x$ and $y$ from \eqref{eq: x, y parametrized through wp} evaluated at $\xi + v$, along with \eqref{eq: y dynamics recovered}. Inserting \eqref{eq: lemma eq1} into \eqref{eq: lemma eq2} yields $\wp(v) - \wp(\xi+v) = -x(\xi)\left(x(\xi) + y(\xi) - \frac{1}{g} \right)$, and recall that 
\[ x + y - \frac{1}{g} = \frac{\wp'(v)}{\wp(v) - \wp(\xi)} \hspace{.15in} \mbox{and} \hspace{.15in} xy + \frac{1}{g} = \wp(v) - \wp(\xi). \]
Putting all of these together into \eqref{eq: lemma eq3} the remaining result \eqref{eq: x dynamics recovered} is concluded:
\begin{align*}
x(\xi+v) & = -x(\xi) + \frac{1}{g} + \frac{\wp'(v)}{-x(\xi)\left( x(\xi) + y(\xi) - \frac{1}{g} \right)} \\
 & = -x(\xi) + \frac{1}{g} + \frac{\wp'(v)}{-x(\xi) \frac{\wp'(v)}{\wp(v) - \wp(\xi)}} \\
 & = -x(\xi) + \frac{1}{g} - \frac{x(\xi)y(\xi) + \frac{1}{g}}{x(\xi)} \\
 & = \frac{1}{g} -\frac{1}{gx(\xi)} - x(\xi) - y(\xi).
\end{align*}
\end{proof}

\subsection{Separatrix Degeneration to the Combinatorial Orbit}
In the previous section, elliptic parametrizations were given for the $c = 1$ system, relying on the invariant $I$. Orbits of the system lie on level sets $I(x,y) =e$, and we have so far treated the generic case in which $I(x,y) = e$ is a nonsingular curve. It is in this generic setting that results of the previous section hold. In this section, we study certain cases in which $I(x,y) = e$ is singular, and we extend the parametrizations to certain such curves, in particular, the ones of combinatorial interest. Recall that Table \ref{table:DegenerateEnergies} contains all the energy values resulting in singular level sets for the two invariants. We will show at the appropriate time that the energy $e_1$ in the $c=1$ case is the one corresponding to geodesic distance for quadrangulations. However, this can be observed in Figure \ref{fig: separatrices Rn} as it is the $e_1$ separatrix which contains the hyperbolic fixed point, with coordinates given by $(x^*_-, x^*_-)$ and in the combinatorial problems the limit $x^* = \lim_{n\rightarrow\infty} x_n$ of the generating functions is the analytic function given by the root $x^*_-$.

\begin{theorem}\label{thm: main theorem dPI}
The discrete integrable recurrence
\begin{equation}
x_n = 1 + gx_n(x_{n-1} + x_n + x_{n+1})
\end{equation}
with invariant
\begin{equation}
I(x_{n}, x_{n-1}) = x_nx_{n-1}\left(x_n+x_{n-1}-\frac{1}{g} \right) + \frac{1}{g}(x_n+x_{n-1}) - \frac{1}{g^2}
\end{equation}
and limit $\lim_{n\rightarrow\infty} x_n = x_-^*$ satisfying $x_-^* = 1 + 3gx_-^{*2}$ possesses a solution
\begin{equation}
x_n = x(\xi + n\nu) = x_-^* \cdot \frac{\left( 1 - e^{2\xi\sqrt{-3r_2}} \chi^{n-1} \right)\left( 1 - e^{2\xi\sqrt{-3r_2}} \chi^{n+2} \right)}{\left( 1 - e^{2\xi\sqrt{-3r_2}} \chi^{n} \right)\left( 1 - e^{2\xi\sqrt{-3r_2}} \chi^{n+1} \right)},
\end{equation} 
where $\nu$ is such that $\sinh(\nu\sqrt{-3r_2}) = \sqrt{\frac{-3r_2}{\alpha+r_2}}$ and $\chi = \exp\{-2\nu\sqrt{-3r_2}\}$.

In particular, the combinatorial generating functions $R_n(g) = x_n$ and $R(g) = x_-^*$ specialized by the additional boundary condition $R_{-1} = 0$ have the closed form
\begin{equation}
R_n = R \frac{(1-\chi^{n+1})(1-\chi^{n+4})}{(1-\chi^{n+2})(1-\chi^{n+3})}
\end{equation}
for $\chi$ satisfying $\chi + \frac{1}{\chi} = \frac{1-4gR}{gR}$, by specializing $\xi = -2\nu$.
\end{theorem}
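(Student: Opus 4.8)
The plan is to derive the closed form as the solitonic (rational) degeneration of the elliptic parametrization of Theorem~\ref{thm: elliptic parametrization dPI} and Corollary~\ref{cor: parametrization through sigma}, restricted to the one level set on which a forward orbit can possibly converge. For generic energy $e$ the curve $I(x,y)=e$ is smooth, and its real orbits are either closed loops around the center or escape to infinity, so $\lim_n x_n$ does not exist; hence a convergent forward orbit must lie on the homoclinic loop through the hyperbolic fixed point $(x_-^*,x_-^*)$, i.e.\ on the level set $e=I(x_-^*,x_-^*)$, which a short computation identifies with the value $e_1$ of Table~\ref{table:DegenerateEnergies} (consistently with the $e_1$ separatrix of Figure~\ref{fig: separatrices Rn} being the one containing the hyperbolic fixed point). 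Equivalently, $e_1$ is the value at which the cubic $w^2=-4t^3+\frac{4g+1}{g^2}t^2+\frac{2e}{g}t+e^2$ from the proof of Theorem~\ref{thm: elliptic parametrization dPI} acquires a double root lying on the orbit; in the Weierstrass normal form variable $z=-\wp(\xi)$, let $r_2$ denote this double root, and set $C:=\sqrt{-3r_2}$.

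Second, I would pass to the degeneration of the Weierstrass functions at $e=e_1$. When the discriminant vanishes with double root $r_2$ one has
\[ \wp(u) \longrightarrow \frac{C^2}{3}+\frac{C^2}{\sinh^2(Cu)}, \qquad \zeta(u) \longrightarrow -\frac{C^2}{3}\,u+C\coth(Cu), \qquad \sigma(u) \longrightarrow \frac{1}{C}\,e^{-C^2u^2/6}\sinh(Cu), \]
and the defining relation $\wp(v)=\alpha=\frac{4g+1}{12g^2}$ becomes precisely $\sinh^2(\nu\sqrt{-3r_2})=\frac{-3r_2}{\alpha+r_2}$, so $\nu$ plays the role of $v$ in this limit. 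Substituting the degenerate $\sigma$ into Corollary~\ref{cor: parametrization through sigma} (or, equivalently, the degenerate $\zeta$ into \eqref{eq: lemma1 eq1}, which reads $x(\xi)=\frac{1}{2g}+\zeta(\xi+\nu)-\zeta(\xi)-\zeta(\nu)$), the Gaussian and linear pieces cancel --- the $\sigma$-prefactor exponents in numerator and denominator coincide, exactly as in the generic calculation --- leaving $x(\xi)$ as an explicit ratio of products of $\sinh(C\,\cdot)$. Writing each $\sinh$ through exponentials, replacing $\xi$ by $\xi+n\nu$, cancelling the single spurious common factor that then appears (the drop of genus to zero), and regrouping by powers of $\chi=e^{-2\nu\sqrt{-3r_2}}$ produces the stated rational expression in $e^{2\xi\sqrt{-3r_2}}$ and $\chi$ (after an inessential sign normalization of $\xi$). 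That the resulting sequence solves $x_n=1+gx_n(x_{n-1}+x_n+x_{n+1})$ is inherited from Lemma~\ref{lem: dPI dynamics recovered} by taking the same limit, the recurrence being a closed condition preserved as $e\to e_1$.

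Third, I would match constants. Reading $\lim_n x_n$ off the closed form gives $\frac{1}{2g}-C\coth(\nu C)$; since the node of the degenerate curve sits at the fixed point one has $t^*:=\alpha+r_2=(x_-^*)^2+1/g$, and using $\coth^2(\nu C)=1+\sinh^{-2}(\nu C)$ together with $x_-^*=1+3g(x_-^*)^2$ identifies this limit with $x_-^*$ (equivalently $x_-^*=\frac{e_1+t^*/g}{2t^*}$ from \eqref{eq: x, y parametrized through wp} evaluated at the branch point $w=0$). For the characteristic equation, $\chi+\chi^{-1}=2\cosh(2\nu\sqrt{-3r_2})$, and expressing $\cosh$ through $\sinh^2(\nu\sqrt{-3r_2})=\frac{-3r_2}{\alpha+r_2}$ and substituting $\alpha=\frac{4g+1}{12g^2}$, $\alpha+r_2=R^2+1/g$, $R=1+3gR^2$ (with $R=x_-^*$) collapses it to $\chi+\chi^{-1}=\frac{1-4gR}{gR}$, i.e.\ \eqref{eq: x functional equation}; more conceptually, $\chi$ is the stable multiplier at the hyperbolic fixed point, hence a root of the indicial equation of the linearization of the recurrence at $x_n=R$, which is this equation. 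Finally, $R_n$ is the unique power-series solution of the recurrence with $R_{-1}=0$ and $\lim_n R_n=R$, so it must occur within the one-parameter family $\{x(\xi+n\nu)\}_\xi$; imposing $R_{-1}=0$ forces $e^{2\xi\sqrt{-3r_2}}=\chi^2$ (equivalently $\xi=-2\nu$), which makes the numerator factor $1-\chi^2\chi^{-2}$ vanish at $n=-1$, and substituting $e^{2\xi\sqrt{-3r_2}}=\chi^2$ throughout yields $R_n=R\,\frac{(1-\chi^{n+1})(1-\chi^{n+4})}{(1-\chi^{n+2})(1-\chi^{n+3})}$.

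The main obstacle is the bookkeeping of the second step: carrying the Weierstrass degeneration cleanly through the $\sigma$-quotient of Corollary~\ref{cor: parametrization through sigma}, correctly identifying and cancelling the spurious common factor so that what remains is genuinely a ratio of two quadratics, and then reconciling the constants $\alpha$ and $r_2$ coming from the discriminant locus $e=e_1$ with the combinatorial quantities $R$ and $\chi$. The attendant conceptual points requiring care are that $e=e_1$ is the only energy supporting a convergent orbit and that $\xi=-2\nu$ is the unique parameter value compatible with the boundary condition $R_{-1}=0$.
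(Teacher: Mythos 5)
Your proposal is correct and follows essentially the same route as the paper: identify the separatrix energy $e_1=I(x_-^*,x_-^*)$, degenerate the Weierstrass functions to their solitonic ($\sinh$/$\coth$) forms with $C=\sqrt{-3r_2}$ (your limiting formulas agree exactly with the paper's $\wp_d,\zeta_d,\sigma_d$), substitute into the parametrization, resum to the rational closed form, and fix $\xi=-2\nu$ from $R_{-1}=0$. The only notable variation is one of execution — you work from the $\zeta$-form \eqref{eq: lemma1 eq1} and the relation $\alpha+r_2=R^2+1/g$ to get $\chi+\chi^{-1}=\tfrac{1-4gR}{gR}$ directly, which is somewhat slicker than the paper's $\sigma$-quotient expansion and brute-force reduction to $\gamma=\sqrt{1-12g}$, but it is the same argument in substance.
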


An orbit $\{(x_n, y_n)\}$ satisfying $\lim_{n\rightarrow\infty} x_n = x^{*}_-$ limits to the hyperbolic fixed point $(x^*_-, x^*_-)$, and thus the energy of the invariant level set on which this orbit lies can be calculated by $e = I(x^*_-, x^*_-) = I(R, R)$. 
\begin{equation}\label{eq: I(R,R) = e1}
\begin{aligned}
I(R, R) & = \left( 2R - \frac{1}{g} \right) \left( R^2 + \frac{1}{g} \right) \\
 & = \left[ 2\left(\frac{1 - \sqrt{1-12g}}{6g}\right) - \frac{1}{g} \right] \left[ \left( \frac{1 - \sqrt{1-12g}}{6g}\right)^2 + \frac{1}{g} \right] \\
 & = \left( \frac{- 2 - \sqrt{1-12g}}{3g} \right) \left( \frac{1 + 12g - \sqrt{1-12g}}{18g^2} \right) \\
 & = -\frac{1 + 36g + (12g-1)\sqrt{1-12g}}{54g^3} \\
 & = e_1
\end{aligned}
\end{equation}
In the following Proposition, the elliptic parametrization is extended to orbits on the level set $I(x,y) = e_1$, and the proof of the Theorem is deferred until after this degeneration of the parametrization is established.

\begin{proposition}\label{prop: PI degeneration}
The elliptic parametrization of Theorem \ref{thm: elliptic parametrization dPI} and Corollary \ref{cor: parametrization through sigma} extends to a parametrization of the singular curve 
\[ I(x,y) = e_1 = -\frac{1 + 36g + (12g-1)\sqrt{1-12g}}{54g^3} \]
as
\begin{equation}\label{eq: x via degenerate sigma}
x(\xi+n\nu) = \frac{1}{2} \left[ \frac{1}{g} - \frac{\sigma_d(2\nu)\sigma_d^4(\xi+n\nu) - \sigma_d(2(\xi+n\nu))\sigma_d^4(\nu)}{\sigma_d^2(\nu)\sigma_d^2(\xi+n\nu)\sigma_d(\xi+(n-1)\nu)\sigma_d(\xi+(n+1)\nu)} \right],
\end{equation}
where the function $\sigma_d$ is a degeneration of the Weierstrass $\sigma$-function, given explicitly by the formula
\begin{equation}
\sigma_d(\xi) = \exp \left\{\frac{r_2}{2} \xi^2 \right\} \frac{\sinh(\xi\sqrt{-3r_2})}{\sqrt{-3r_2}},
\end{equation}
where 
\[ r_2 = \frac{-4(1-12g + 2\sqrt{1-12g})}{12^2g^2}. \]
The related functions in this degeneration of the family of Weierstrass functions are given explicitly by
\begin{align*}
\wp_d(\xi) & = - r_2 -3r_2\csch^2(\xi\sqrt{-3r_2}) \\
\wp_d'(\xi) & = -2(-3r_2)^{3/2} \csch^2(\xi\sqrt{-3r_2})\coth(\xi\sqrt{-3r_2}) \\
\zeta_d(\xi) & = r_2\xi + \sqrt{-3r_2} \coth(\xi\sqrt{-3r_2}).
\end{align*}
\end{proposition}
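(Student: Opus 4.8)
The plan is to show that at the energy $e=e_1$ the elliptic parametrization of Theorem~\ref{thm: elliptic parametrization dPI} degenerates in a controlled way, and then simply to re-run the proofs of Theorem~\ref{thm: elliptic parametrization dPI}, Corollary~\ref{cor: parametrization through sigma}, and Lemma~\ref{lem: dPI dynamics recovered} verbatim with $(\wp,\wp',\zeta,\sigma,v)$ replaced by $(\wp_d,\wp_d',\zeta_d,\sigma_d,\nu)$. The point is that those three arguments use \emph{only} the differential equation $(\wp')^2=4\wp^3-g_2\wp-g_3$, the identities \eqref{eq: wp and zeta identity from Ahlfors}, \eqref{eq: wp addition law}, \eqref{eq: wp and sigma identity}, the relation $\wp'(z)=-\sigma(2z)/\sigma^4(z)$, and the parity of $\zeta,\sigma$; every one of these has an exact hyperbolic analogue for $\wp_d,\zeta_d,\sigma_d$, so once the degeneration is set up correctly the formula \eqref{eq: x via degenerate sigma} follows with no new ideas. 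Thus the work splits into (i) locating the degeneration and pinning down $r_2$, (ii) verifying the hyperbolic identities, and (iii) assembling the parametrization and choosing $\nu$.

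For (i): in the proof of Theorem~\ref{thm: elliptic parametrization dPI} the level set $I(x,y)=e$ was written in the coordinate $t=xy+\tfrac1g$ as the cubic $w^2=C_e(t):=-4t^3+\tfrac{1}{g^2}(4g+1)t^2+\tfrac{2e}{g}t+e^2$. The affine curve $\{I=e_1\}$ is singular precisely at the hyperbolic fixed point $(x^*_-,x^*_-)$ of \eqref{fixedpt}, and a node of the curve forces a branch point of the projection to the $t$-line, hence a repeated root of $C_{e_1}$ at $t^\ast:=(x^*_-)^2+\tfrac1g$. So I would first check directly that $C_{e_1}(t^\ast)=C_{e_1}'(t^\ast)=0$ — equivalently that the discriminant of $C_e$, a polynomial in $e$, vanishes exactly at the three energies of Table~\ref{table:DegenerateEnergies} — and then set $r_2:=t^\ast-\alpha$, with $\alpha=\tfrac{1}{12g^2}(4g+1)$ the translation of the proof of Theorem~\ref{thm: elliptic parametrization dPI}. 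A short computation using $x^*_-=\tfrac{1-\sqrt{1-12g}}{6g}$ then reproduces the claimed closed form $r_2=\tfrac{-4(1-12g+2\sqrt{1-12g})}{144g^2}$, and shows $C_{e_1}(t)=-4(t-t^\ast)^2\bigl(t-(\alpha-2r_2)\bigr)$, so that after $z=t-\alpha$ the curve $w^2=-4z^3+g_2z-g_3$ has the degenerate Weierstrass data $g_2=12r_2^2$, $g_3=8r_2^3$ (so $g_2^3=27g_3^2$), with double root $-r_2$ and simple root $2r_2$.

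For (ii): I would take the stated $\sigma_d(\xi)=\exp\{\tfrac{r_2}{2}\xi^2\}\,\tfrac{\sinh(\xi\sqrt{-3r_2})}{\sqrt{-3r_2}}$ as the definition and verify, by elementary hyperbolic manipulations, that $\zeta_d=\sigma_d'/\sigma_d$, $\wp_d=-\zeta_d'$, $(\wp_d')^2=4\wp_d^3-g_2\wp_d-g_3$ with the $e_1$-values above, that $\wp_d$ has principal part $\xi^{-2}$ at the origin, and that the analogues of \eqref{eq: wp and zeta identity from Ahlfors}, \eqref{eq: wp addition law}, \eqref{eq: wp and sigma identity}, and $\wp_d'(z)=-\sigma_d(2z)/\sigma_d^4(z)$ all hold; each of these is either a direct computation or follows by recognizing $\wp_d,\zeta_d,\sigma_d$ as the locally uniform limits of $\wp,\zeta,\sigma$ along $\{I=e\}$ as $e\to e_1$ (one period staying finite, the other tending to infinity). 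I would present the direct verification as the primary argument, with the limiting picture as motivation only.

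For (iii): since $r_2<0$, the function $\wp_d$ decreases from $+\infty$ to $-r_2$ on $(0,\infty)$, and since $\alpha>-r_2$ (a one-line inequality in $g$) there is a unique real $\nu>0$ with $\wp_d(\nu)=\alpha$; solving gives $\sinh(\nu\sqrt{-3r_2})=\sqrt{\tfrac{-3r_2}{\alpha+r_2}}$, the relation defining $\nu$ in the statement, and the same identity used in \eqref{eq: calculation of wp'(v)} gives $\wp_d'(\nu)^2=4\alpha^3-g_2\alpha-g_3=e_1^2$, with $\wp_d'(\nu)=e_1$ once the sign is fixed by $\nu>0$ and $e_1<0$. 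Then $t=\alpha-\wp_d(\xi)$, $w=\wp_d'(\xi)$ parametrizes $w^2=C_{e_1}(t)$, hence $(x(\xi),y(\xi))=\bigl(\tfrac{e_1+t/g-w}{2t},\,\tfrac{e_1+t/g+w}{2t}\bigr)$ parametrizes $\{I=e_1\}$; this is exactly \eqref{eq: x parametrized through wp} with $(\wp,v)\to(\wp_d,\nu)$, and rewriting it through the $\sigma_d$-identities as in the proof of Corollary~\ref{cor: parametrization through sigma} gives \eqref{eq: x via degenerate sigma}. Finally, the computation of Lemma~\ref{lem: dPI dynamics recovered} goes through unchanged with $\wp_d,\zeta_d$ in place of $\wp,\zeta$, so the dynamics is recovered by $\xi\mapsto\xi+\nu$ and $x(\xi+n\nu)$ is the orbit.

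The main obstacle is part (i): there is genuine algebra in confirming that $C_e$ degenerates at precisely the tabulated energies and in matching $r_2=t^\ast-\alpha$ to the stated radical expression (together with the parallel identification of $g_2,g_3$). Conceptually, the subtle point lies in (ii): one must be sure that $\wp_d,\zeta_d,\sigma_d$ are the \emph{correct} degenerate functions, i.e.\ that the $\sigma_d$-parametrization sweeps out the irreducible nodal-cubic component of $\{I=e_1\}$ that carries the separatrix rather than some other branch. I would settle this through the direct verification in (ii)--(iii) rather than by appeal to a limiting argument, since a separatrix is not literally a limit of nearby regular orbits, but it is the place where care is required.
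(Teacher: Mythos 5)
Your proposal is correct and follows essentially the same route as the paper's proof: identify the double root of the cubic $w^2=-4z^3+g_2z-g_3$ at $e=e_1$ (giving $r_1=-2r_2$, $g_2=12r_2^2$, $g_3=8r_2^3$ and the stated $r_2$), write down the resulting hyperbolic degenerations of $\wp,\zeta,\sigma$, and then observe that the identities \eqref{eq: wp and sigma identity}, \eqref{eq: wp and zeta identity from Ahlfors}, \eqref{eq: wp addition law} persist so that Theorem \ref{thm: elliptic parametrization dPI}, Corollary \ref{cor: parametrization through sigma}, and Lemma \ref{lem: dPI dynamics recovered} carry over verbatim. The only (cosmetic) differences are that you locate the double root geometrically as $t^\ast=(x^*_-)^2+\tfrac1g$ at the hyperbolic fixed point while the paper matches coefficients in the factorization $-4(z-r_1)(z-r_2)^2$, and that you define $\sigma_d$ and differentiate down whereas the paper computes $\wp_d$ from the now-elementary integral and integrates up.
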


\begin{proof}
With the energy $e$ specialized to $e_1(g)$, the two variables $g$ and $e$ in the system are tied together, and we pursue further calculations after expressing everything in sight in terms of a new fundamental variable, $\gamma$, which suppresses the square root appearing in $e_1$:
\begin{align*}
\gamma & = \sqrt{1-12g} \\
g & = \frac{1-\gamma^2}{12} \\
e_1 & = \frac{-32(\gamma-1)(\gamma+2)^2}{(\gamma+1)^3(\gamma-1)^3} \\
g_2 & = 12\frac{4^2 \gamma^2 (\gamma+2)^2}{(\gamma+1)^4 (\gamma-1)^4} \\
g_3 & = -8\frac{4^3 \gamma^3 (\gamma+2)^3}{(\gamma+1)^6 (\gamma-1)^6} \\
\alpha & = \frac{-4(\gamma+2)(\gamma-2)}{(\gamma+1)^2(\gamma-1)^2} \\
R & = \frac{2}{\gamma+1}.
\end{align*}

On the singular curve $I(x,y) = e_1$, at least two roots of $-4z^3 + g_2z - g_3$ coalesce. We therefore seek a factorization
\[ w^2 = -4z^3 + g_2z - g_3 = -4(z-r_1)(z-r_2)^2, \]
where $r_1$ and $r_2$ are not assumed to be distinct. Equating coefficients in the above equation gives that $r_1 + 2r_2 = 0$, $g_2 = - 4(2r_1r_2 + r_2^2)$, and $g_3 = - 4r_1r_2^2$. Therefore, $r_1 = - 2r_2$, $g_2 = 12r_2^2$, and $g_3 = 8r_2^3$. Thus,
\begin{align*}
r_2 & = \frac{-4\gamma(\gamma+2)}{(\gamma+1)^2(\gamma-1)^2} = \frac{-4(1-12g + 2\sqrt{1-12g})}{12^2g^2} \\
r_1 & = \frac{8\gamma(\gamma+2)}{(\gamma+1)^2(\gamma-1)^2} = \frac{8(1-12g + 2\sqrt{1-12g})}{12^2g^2}.
\end{align*}

The identification $z = - \wp_d(\xi)$ and $w = \wp_d'(\xi)$ is made just as before, to a function $\wp_d(\xi)$ and its derivative satisfying the same differential equation as the Weierstrass $\wp$-function,
\[ (\wp_d')^2 = 4\wp_d^3 - g_2 \wp_d - g_3 = -4(-\wp_d(\xi)-r_1)(-\wp_d(\xi)-r_2)^2. \]
The $\wp$-function is given as the inverse of an elliptic integral (\cite{Mar77}, Eq. (5.28))
\[ \xi = \int_{\infty}^{\wp(\xi)} \frac{dt}{\sqrt{4t^3 - g_2t - g_3}}, \]
and the related, degenerate, function $\wp_d(\xi)$ is calculated directly by the same integral:
\[ \xi = \int_{\infty}^{\wp_d(\xi)} \frac{dt}{-2(t+r_2)\sqrt{t+r_1}}. \]
The result of the integration is
\[ \xi = \left. -\frac{1}{\sqrt{r_1-r_2}} \tanh^{-1} \left( \sqrt{\frac{t + r_1}{r_1-r_2}} \right) \right|_{\infty}^{\wp_d(\xi)}, \]
and inverting the equation gives an explicit formula for this degeneration of the Weierstrass elliptic function:
\begin{equation}
\wp_d(\xi) = - r_2 -3r_2\csch^2(\xi\sqrt{-3r_2}).
\end{equation}
The derivative (with respect to $\xi$) is given by
\begin{equation}
\wp_d'(\xi) = -2(-3r_2)^{3/2} \csch^2(\xi\sqrt{-3r_2})\coth(\xi\sqrt{-3r_2}).
\end{equation}
The Weierstrass $\zeta$-function, an anti-derivative of $\wp$, is defined by (\cite{Mar77} 5.40)
\[ \frac{d}{d\xi} \zeta(\xi) = -\wp(\xi) \hspace{.15in} \mbox{and} \hspace{.15in} \lim_{\xi\rightarrow 0} \left[ \zeta(\xi) - \frac{1}{\xi} \right] = 0. \]
Integrating $\wp_d(\xi)$ and pinning down the integration constant by the limit condition using the first few terms of a Taylor expansion for $\coth(y)$ about $y=0$, we have the degeneration
\begin{equation}\label{eq: zeta degenerate}
\zeta_d(\xi) = r_2\xi + \sqrt{-3r_2} \coth(\xi\sqrt{-3r_2}).
\end{equation}
The Weierstrass $\sigma$-function is a logarithmic antiderivative of $\zeta$ (\cite{Mar77} 5.44) given by
\[ \frac{d}{d\xi} \log \sigma(\xi) = \frac{\sigma'(\xi)}{\sigma(\xi)} = \zeta(\xi) \hspace{.15in} \mbox{and} \hspace{.15in} \lim_{\xi\rightarrow 0} \frac{\sigma(\xi)}{\xi} = 1, \]
which can be calculated by the integral (\cite{Mar77} 5.45)
\[ \sigma(\xi) = \xi\exp\left\{ \int_0^\xi \left[ \zeta(t) - \frac{1}{t} \right] dt \right\}. \]
Thus by integrating the degenerate $\zeta_d$,
\[ \sigma_d(\xi) = \xi \exp \left\{ C + \frac{r_2}{2} \xi^2 + \log \left( \frac{\sinh(\xi\sqrt{-3r_2})}{\xi} \right) \right\} \]
with integration constant $C = \log\left( \frac{1}{\sqrt{-3r_2}} \right)$, and the degenerate $\sigma_d(\xi)$ is given by
\begin{equation}\label{eq: sigma degenerate}
\sigma_d(\xi) = \exp \left\{\frac{r_2}{2} \xi^2 \right\} \frac{\sinh(\xi\sqrt{-3r_2})}{\sqrt{-3r_2}}.
\end{equation}

Theorem \ref{thm: elliptic parametrization dPI} now goes through with all of the Weierstrass functions replaced by their appropriate degenerations. The proofs of Corollary \ref{cor: parametrization through sigma} and Lemma \ref{lem: dPI dynamics recovered} rely on three identities of the Weierstrass functions: equations \eqref{eq: wp and sigma identity}, \eqref{eq: wp and zeta identity from Ahlfors}, \eqref{eq: wp addition law}, and the oddness of $\zeta$, which can all be verified to hold in the degenerate case as well. The proofs are just algebra along with hyperbolic trigonometric identities; one could, for example, use the Pythagorean Theorem to write all hyperbolic trigonometric functions in terms of the hyperbolic cotangent, and then use simply the addition laws for the hyperbolic cotangent. Once these identities are established, all of the general parametrization and dynamics machinery extend directly to the $I(x,y) = e_1$ separatrix and Equation \eqref{eq: x via degenerate sigma} is immediate.
\end{proof}

\begin{proof}[Proof of Theorem \ref{thm: main theorem dPI}]
The results of Proposition \ref{prop: PI degeneration} give the degeneration of the elliptic parametrization appropriate for the Theorem, as established by the calculation of the energy $I(x^*_-, x^*_-) = e_1$ in equation \eqref{eq: I(R,R) = e1}. Inserting the explicit formula for $\sigma_d$ given in equation \eqref{eq: sigma degenerate} into the parametrization of equation \eqref{eq: x via degenerate sigma}, observe that all of the exponentials cancel out and an overall factor of $\sqrt{-3r_2}$ rests in the numerator of the large fraction. Then, scaling the variables temporarily to suppress factors of $\sqrt{-3r_2}$ appearing in every argument, via
\begin{align*}
\phi & = \xi\sqrt{-3r_2} \\
\eta & = \nu\sqrt{-3r_2},
\end{align*}
and simplifying using the double-angle formula for the hyperbolic sine, we have
\begin{align*}
x & \left( \xi + nv \right) \\
 & = \frac{1}{2g} - \frac{\sqrt{-3r_2}}{2} \left( \frac{\sinh(2\eta)\sinh^3(\phi+n\eta)}{\sinh^2(\eta)\sinh(\phi+n\eta)\sinh(\phi+(n-1)\eta)\sinh(\phi+(n+1)\eta)} \right.\\
 & \phantom{ = a bunch of room } \left. + \frac{\sinh(2(\phi+n\eta))\sinh^3(\eta)}{\sinh(\eta)\sinh^2(\phi+n\eta)\sinh(\phi+(n-1)\eta)\sinh(\phi+(n+1)\eta)} \right) \\
 & = \frac{1}{2g} - \sqrt{-3r_2} \left( \frac{\cosh(\eta)\sinh^3(\phi+n\eta)}{\sinh(\eta)\sinh(\phi+n\eta)\sinh(\phi+(n-1)\eta)\sinh(\phi+(n+1)\eta)} \right. \\
 & \phantom{ = a bunch of room } \left. + \frac{\cosh(\phi+n\eta)\sinh^3(\eta)}{\sinh(\eta)\sinh(\phi+n\eta)\sinh(\phi+(n-1)\eta)\sinh(\phi+(n+1)\eta)} \right) \\
 & = \frac{1}{2g} - \sqrt{-3r_2} \left( \coth(\eta)\frac{\sinh^3(\phi+n\eta)}{\sinh(\phi+n\eta)\sinh(\phi+(n-1)\eta)\sinh(\phi+(n+1)\eta)} \right. \\
 & \phantom{ = a bunch of room } \left. + \coth(\phi+n\eta)\frac{\sinh^3(\eta)}{\sinh(\eta)\sinh(\phi+(n-1)\eta)\sinh(\phi+(n+1)\eta)} \right).
\end{align*}
By definition of each hyperbolic trigonometric function in terms of exponentials, and defining
\[ \chi = e^{-2\eta} = e^{-2v\sqrt{-3r_2}}, \]
then
\begin{align*}
x\left( \xi + nv \right) & = \frac{1}{2g} - \sqrt{-3r_2} \left[ \left( \frac{\chi+1}{\chi-1} \right) \cdot \frac{(1-e^{2\phi}\chi^n)^2}{(1-e^{2\phi}\chi^{n-1})(1-e^{2\phi}\chi^{n+1})} \right. \\
 & \left. \phantom{ = stuff } \hspace{2cm} + \left( \frac{1+e^{2\phi}\chi^n}{1-e^{2\phi}\chi^n} \right) \cdot \frac{e^{2\phi}\chi^n(1-\chi)^2}{\chi(1-e^{2\phi}\chi^{n-1})(1-e^{2\phi}\chi^{n+1})} \right].
\end{align*}

We Taylor expand for large $n$ (note that $\chi^n$ is small). First, the various pieces:
\begin{align*}
\frac{(1-e^{2\phi}\chi^n)^2}{(1-e^{2\phi}\chi^{n-1})(1-e^{2\phi}\chi^{n+1})} & = (1-e^{2\phi}\chi^n)^2 \sum_{j=0}^{\infty} \left(e^{2\phi}\chi^{n-1}\right)^j \sum_{k=0}^{\infty} \left(e^{2\phi}\chi^{n+1}\right)^k \\
 & = 1 + \sum_{k=1}^{\infty} e^{2\phi k} \chi^{nk} \left( \sum_{j=0}^{k-1} \left( \chi^{k-2j} - 2\chi^{k-1-2j} + \chi^{k-2-2j} \right) \right) \\
\left( \frac{1+e^{2\phi}\chi^n}{1-e^{2\phi}\chi^n} \right) & = 1 + \sum_{k=1}^{\infty} 2e^{2\phi k} \chi^{nk} \\
\frac{e^{2\phi}\chi^n(1-x)^2}{\chi(1-e^{2\phi}\chi^{n-1})(1-e^{2\phi}\chi^{n+1})} & = e^{2\phi}\chi^{n}\left(\chi-2+\frac{1}{\chi}\right) \sum_{k=0}^{\infty} e^{2\phi k}\chi^{nk} \left( \sum_{j=0}^{k} \chi^{k-2j} \right) \\
 & = \left(\chi-2+\frac{1}{\chi}\right) \sum_{k=1}^{\infty} e^{2\phi k}\chi^{nk} \left( \sum_{j=0}^{k-1} \chi^{k-1-2j} \right) \\
 & = \sum_{k=1}^{\infty} e^{2\phi k}\chi^{nk} \left( \sum_{j=0}^{k-1} \chi^{k-2j} - 2\chi^{k-1-2j} + \chi^{k-2-2j} \right). \\
\end{align*}
The product of the last two simplifies as
\begin{align*}
\left( \frac{1+e^{2\phi}\chi^n}{1-e^{2\phi}\chi^n} \right)& \cdot \frac{e^{2\phi}\chi^n(1-x)^2}{\chi(1-e^{2\phi}\chi^{n-1})(1-e^{2\phi}\chi^{n+1})} \\
& = \sum_{k=1}^{\infty} e^{2\phi k} \chi^{nk} \sum_{l=0}^{k-1} \left( \chi^{k-2l} - 2\chi^{k-1-2l} + \chi^{k-2-2l} \right) \\
& \quad+ \sum_{k=2}^{\infty} \sum_{m=1}^{k-1} e^{2\phi m}\chi^{nm} \sum_{j=0}^{m-1} \left( \chi^{m-2j} - 2\chi^{m-1-2j} + \chi^{m-2-2j} \right) 2e^{2\phi(k-m)} \chi^{n(k-m)} \\
& = \sum_{k=1}^{\infty} e^{2\phi k} \chi^{nk} \sum_{l=0}^{k-1} \left( \chi^{k-2l} - 2\chi^{k-1-2l} + \chi^{k-2-2l} \right) \\
& \phantom{ =  } + \sum_{k=2}^{\infty} e^{2\phi k} \chi^{nk} \sum_{m=1}^{k-1} \sum_{j=0}^{m-1} 2\left( \chi^{m-2j} - 2\chi^{m-1-2j} + \chi^{m-2-2j} \right) \\
& = \sum_{k=1}^{\infty} e^{2\phi k} \chi^{n} \left( \chi^k - 2 + \chi^{-k} \right)
\end{align*}
since for $k\ge 2$ the following sum holds by telescoping the series:
\begin{align*} 
  &\sum_{l=0}^{k-1} \left( \chi^{k-2l} - 2\chi^{k-1-2l} + \chi^{k-2-2l} \right) + \sum_{m=1}^{k-1} \sum_{j=0}^{m-1} 2\left( \chi^{m-2j} - 2\chi^{m-1-2j} + \chi^{m-2-2j} \right)\\ 
  &\quad\quad= \chi^k - 2 + \chi^{-k}.
\end{align*}
Now putting all these pieces together, we have
\begin{align*}
&x(\xi + nv) \\
& = \frac{1}{2g} - \sqrt{-3r_2} \left[ \frac{\chi+1}{\chi-1} + (\chi+1)\sum_{k=1}^{\infty} e^{2\phi k} \chi^{nk} \sum_{j=0}^{k-1} \frac{ \chi^{k-2j} - 2\chi^{k-1-2j} + \chi^{k-2-2j}}{\chi-1} \right. \\
 & \phantom{ = and more space } \left. +\sum_{k=1}^{\infty} e^{2\phi k} \chi^{n} \left( \chi^k - 2 + \chi^{-k} \right) \right] \\
 & = \frac{1}{2g} - \sqrt{-3r_2} \left[ \frac{\chi+1}{\chi-1} + (\chi+1) \sum_{k=1}^{\infty} e^{2\phi k} \chi^{nk} \sum_{j=0}^{k-1} \left( \chi^{k-2j-1} - \chi^{k-2-2j} \right) \right. \\
 & \phantom{ = and more space } \left. 
+ \sum_{k=1}^{\infty} e^{2\phi k} \chi^{n} \left( \chi^k - 2 + \chi^{-k} \right) \right] \\
 & = \frac{1}{2g} - \sqrt{-3r_2} \left[ \frac{\chi+1}{\chi-1} + \sum_{k=1}^{\infty} e^{2\phi k} \chi^{nk} \sum_{j=0}^{k-1} \left( \chi^{k-2j} - \chi^{k-2-2j} \right) \right. \\
 & \phantom{ = and more space } \left. + \sum_{k=1}^{\infty} e^{2\phi k} \chi^{n} \left( \chi^k - 2 + \chi^{-k} \right) \right] \\
 & = \frac{1}{2g} - \sqrt{-3r_2} \left[ \frac{\chi+1}{\chi-1} + \sum_{k=1}^{\infty} e^{2\phi k} \chi^{nk} \left( \chi^k - \chi^{-k} \right) + \sum_{k=1}^{\infty} e^{2\phi k} \chi^{n} \left( \chi^k - 2 + \chi^{-k} \right) \right] \\
 & = \frac{1}{2g} - \sqrt{-3r_2} \left[ \frac{\chi+1}{\chi-1} + \sum_{k=1}^{\infty} e^{2\phi k} \chi^{nk} 2\left(\chi^k - 1\right) \right].
\end{align*}
The constants out front are simplified by first observing that
\[ \sqrt{-3r_2} = R \cdot \frac{\chi^2-1}{2\chi}, \]
by writing the fraction in $\chi$ as a product of the hyperbolic sine and cosine and squaring both sides of the equation, so that everything can be reduced to its expression in terms of $\gamma$. Next, we claim that 
\[ \chi + \frac{1}{\chi} = \frac{1-4gR}{gR}, \]
which is easily proved by expressing $\sinh\left(v\sqrt{-3r_2}\right)$ in terms of $\chi$ and then reducing everything in sight to its expression as a function of $\gamma$. Using this second identity, it is also easily proved that
\[ \frac{1}{2gR} - \frac{(\chi+1)^2}{2\chi} = 1, \]
from which the expression for $x(\xi + nv)$ is finally re-summable:
\begin{align*}
x&(\xi+nv) \\
& = R \left[ 1 - \frac{\chi^2-1}{\chi} \sum_{k=1}^{\infty} e^{2\phi k} \left( \chi^{(n+1)k} - \chi^{nk} \right) \right] \\
 & = R \left[ 1 - \frac{\chi^2-1}{\chi} \left( \frac{1}{1-e^{2\phi}\chi^{n+1}} - \frac{1}{1-e^{2\phi}\chi^n} \right) \right] \\
 & = R \cdot \frac{\chi(1-e^{2\phi}\chi^{n+1})(1-e^{2\phi}\chi^{n}) - (\chi^2-1)(1-e^{2\phi}\chi^{n}) + (\chi^2-1)(1-e^{2\phi}\chi^{n+1})}{\chi(1-e^{2\phi}\chi^{n})(1-e^{2\phi}\chi^{n+1})} \\
 & = R \cdot \frac{1-e^{2\phi}\chi^n - e^{2\phi}\chi^{n+1} + e^{4\phi}\chi^{2n+1} + e^{2\phi}\chi^{n+1} - e^{2\phi}\chi^{n-1} - e^{2\phi}\chi^{n+2} + e^{2\phi}\chi^{n}}{(1-e^{2\phi}\chi^{n})(1-e^{2\phi}\chi^{n+1})} \\
 & = R \cdot \frac{1 + e^{4\phi}\chi^{2n+1} - e^{2\phi}\chi^{n-1} - e^{2\phi}\chi^{n+2}}{(1-e^{2\phi}\chi^{n})(1-e^{2\phi}\chi^{n+1})}  \\
 & = R \cdot \frac{(1-e^{2\phi}\chi^{n-1})(1-e^{2\phi}\chi^{n+2})}{(1-e^{2\phi}\chi^{n})(1-e^{2\phi}\chi^{n+1})}.
\end{align*}
\end{proof}


\section{Elliptic Parametrization, c=0} \label{e-param0}
\subsection{The Generic Orbit}

The discrete system (\ref{eq: discrete dynamics general c}) with $c=0$  has the invariant, 
(\ref{eq: integral of motion c=0}), 
\begin{eqnarray} \label{T-invariant}
e &=& (1+g)xy - y(1 + g x^2) - x(1 + g y^2) + g^2 x^2 y^2 + \frac1g,
\end{eqnarray}
which is quartic, or in fact bi-quadratic. This does not lend itself naturally to a canonical Weierstrass parametrization as in the case of $c=1$ where the invariant was cubic. 
Therefore we will be taking a more geometric approach to the analysis of the $c=0$ case.

For notational convenience we will sometimes pass to the translated energy variable
\beann
E &=& e - \frac1g.
\eeann

To understand the elliptic parametrization of a generic level set of \eqref{eq: integral of motion c=0} we pass to a projective completion of the $x-y$ plane as $\mathbb{P}^1 \times \mathbb{P}^1$ in which the dynamical system becomes
\bea \label{TMap}
\left[\bar{x}_0 : \bar{x}_1\right] &=& \left[g x_1 y_0 : x_1 y_0 - x_0 y_0 - g x_1 y_1 \right] \\ \nonumber
\left[\bar{y}_0 : \bar{y}_1 \right] &=& [{x_0}: {x_1} ]
\eea
This stems from its more usual form in affine coordinates:
\bea \label{projDS1}
\bar{x} & = & \frac{x - 1}{g x} - y \\ \label{projDS2}
\bar{y} & = & x
\eea
This mapping is reversible with inverse given by
\beann
x &=& \bar{y} \\
y &=& \frac{\bar{y} - 1}{g \bar{y}} - \bar{x}
\eeann

The invariant (\ref{T-invariant}) for this system can also be expressed in homogeneous coordinates:
\begin{eqnarray} \label{TInv}
&& I([\bar{x_0}: \bar{x_1}], [\bar{y_0}: \bar{y_1}]) \\ \nonumber
&=& 
(1+g) x_0 x_1 y_0 y_1 -  y_0 y_1 (x_0^2 + g x_1^2 )  - x_0 x_1 (y_0^2 + g y_1^2) + g^2 x_1^2 y_1^2 - E x_0^2 y_0^2 = 0 
\end{eqnarray}
One may check directly that the level sets of this invariant are generically smooth and, in particular, they are all smooth along the lines at infinity: $x_0 = 0$ and $y_0 = 0$. 

Now we may consider the Segre embedding of $\mathbb{P}^1 \times \mathbb{P}^1$ into 
$\mathbb{P}^3$ given by
\bea \label{Segre1}
 ([\bar{x_0}: \bar{x_1}], [\bar{y_0}: \bar{y_1}]) &\to&  \left[Z_0 : Z_1 : Z_2 : Z_3\right] \\
\label{Segre2}
&=& \left[x_0 y_0 : x_0 y_1 : x_1 y_0 : x_1 y_1\right].
\eea
It is immediate from this representation that the Segre map embeds $\mathbb{P}^1 \times \mathbb{P}^1$ as a quadratic surface in $\mathbb{P}^3$ whose equation is given by
\beann
F(\left[Z_0 : Z_1 : Z_2 : Z_3\right]) &=& Z_0 Z_3 - Z_1 Z_2 = 0.
\eeann
It is also straightforward to check that, for each value of $E$, the invariant curve (\ref{TInv}) corresponds to the intersection of the surface $\{F = 0\}$ with another quadric surface in $\mathbb{P}^3$ explicitly given by
\beann
G_E(\left[Z_0 : Z_1 : Z_2 : Z_3\right]) &=& \\
 (1+g)  Z_0 Z_3 &-& \left( Z_0 Z_1 + g Z_2 Z_3\right) - \left(  Z_0 Z_2 + g Z_1 Z_3\right)  + g^2 Z_3^2 - E Z_0^2 = 0.
\eeann
It is well known that the intersection of two smooth quadrics in $\mathbb{P}^3$ is, generically, a smooth elliptic curve: a smooth space curve of degree 4, which we will sometimes refer to as a space quartic.

Returning to energy parameter $e$, the invariant is seen to factor as 
\beann
G_e(\left[Z_0 : Z_1 : Z_2 : Z_3\right]) &=& \left( gZ_3 + Z_0\right) \left( g Z_3 + \frac1g Z_0 - Z_1 - Z_2\right) - e Z_0^2.
\eeann
We work with two additional models of the level set associated to this fundamnetal model of the elliptic space curve. These are based on two projections:
\beann
\pi_1: \mathbb{P}^3 &\to& \mathbb{P}^2\\
\left[Z_0 : Z_1 : Z_2 : Z_3\right] &\to & \left[Z_0 : Z_1 : Z_2 \right];
\eeann

\beann
\pi_2: \mathbb{P}^3 &\to & \mathbb{P}^1\\
\left[Z_0 : Z_1 : Z_2 : Z_3\right] &\to & \left[Z_0 : Z_3\right].
\eeann

To understand the geometric meaning of these  projections, set $t =  \left( gZ_3 + Z_0\right)$, and also note that from (\ref{Segre1} - \ref{Segre2}) one has
\bea \label{id1}
x  &=& \frac{Z_2}{Z_0}\\ \label{id2}
y &=&  \frac{Z_1}{Z_0}\\ \label{id3}
\frac{t-1}{g} &=& \frac{Z_3}{Z_0} = xy\\ \label{id4}
\frac{g t^2 + (1-g) t - eg}{gt} &=& \frac{Z_1 + Z_2}{Z_0} = x+y \\ \label{id5}
\Delta &\doteq&\frac{Z_2 - Z_1}{Z_0} = x - y.
\eea
(\ref{id1} - \ref{id2}) show that the image of the elliptic space curve under $\pi_1$ is nothing but the invariant level set in the original coordinates, (\ref{T-invariant}), completed in the projective plane $\mathbb{P}^2$. This image is, for generic values of $E$, a smooth quartic curve in the affine plane but with two double points on the line at infinity at $ [0:1:0]$ and $ [0:0:1]$ respectively. Since the space quartic is, generically, a smooth space curve, it follows that the pull-back $\pi^{-1}_1$ effectively desingularizes the plane quartic; in fact, it amounts to a blow-up of $\mathbb{P}^2$ at 
 $ [0:1:0]$ and $ [0:0:1]$.

On the other hand, (\ref{id3}) shows that $\pi_2$ is projection onto the $t$-line where $t = 1 + gxy$ which presents the space curve as a double cover of $\mathbb{P}^1$. To see this we note from (\ref{id3}) that fixing $t$ determines $\frac{Z_3}{Z_0} = xy$;  then, (\ref{id4}) determines 
$ \frac{Z_1 + Z_2}{Z_0} = x + y$. So then $Z_2/Z_0=x$ and $Z_1/Z_0=y$ are determined up to two choices by the sign of $\Delta$ in (\ref{id5}). This shows that the {\it elliptic involution} of the underlying abstract elliptic curve is realized on the space quartic by the rational involution which interchanges $Z_1$ and $Z_2$ in $\mathbb{P}^3$. (Consistent with this, we note that this involution preserves both $F$ and $G_E$ and hence must induce an involution of the space quartic.) We can make use of this involution to explicitly define the double cover of the $t$-line. 

Recall that the Weierstrass points of an elliptic curve are the fixed points of its elliptic involution.
This is realized in different ways in the various models of the curve that we have been describing. In the Segre (space) model we have just seen that the involution is realized by exchanging $Z_1$ and $Z_2$. Hence the fixed points in this model are the four points of intersection of the space quartic with the plane $Z_1 - Z_2 = 0$. Making the consequent substitutions $Z_2 = Z_1, Z_0 =1, Z_3 = Z_1^2$ in the factored form $G_e$, one reduces to 
\beann
(g Z_1^2 + 1) (g Z_1^2 - 2 Z_1+ 1/g) = e.
\eeann 
 Setting the first factor equal to $t$ and the second factor to $e/t$ and solving for $t$ in terms of $e$ and $g$ one derives the equation for the Weierstrass points in terms of the zeroes of
\bea \label{t-eqn}
\Delta(t) &=& g^2 t^4 - 2g(g+1) t^3 + ((1+g)^2 -2e g^2 ) t^2 + 2eg(g-1) t + e^2g^2.
\eea
More commonly these are referred to as the {\it branch points} of the projection $\pi_2$ from the space quartic onto the $t$-line. We have denoted the function in (\ref{t-eqn}) by $\Delta(t)$ because it is straightforward to check that this quartic is proportional to $\Delta$ defined in (\ref{id5}), consistent with the fact that the elliptic involution is given by interchanging $Z_1$ and $Z_2$. Indeed, the elliptic irrationality corresponds to $\sqrt{\Delta}$.

The elliptic involution in the plane curve model is realized by exchanging $x$ and $y$: 
$(x,y) \to (y,x)$ on the curve. The four Weierstrass points of the invariant curve in the plane are the fixed points determined by setting $x = y$ in (\ref{T-invariant}). So these fixed points are of the form $(x,x)$ where $x$ solves
\bea \label{Weierstrass}
g^2 x^4 - 2g x^3 + (1+g) x^2 - 2 x - E = 0.
\eea
We can relate these points to the zeroes of $\Delta(t)$ which we denote by $t_i, i = 0, \dots, 3$. Using \eqref{id4} these respective coordinates of the Weierstrass points can be related by setting $x = y$:
\beann
x_i = y_i &=& \frac12 \left( t_i + \frac1g - 1 - \frac{e}{t_i}\right),
\eeann
which then must also be the solutions of (\ref{Weierstrass}).

We may further make use of these observations about the elliptic involution to directly parametrize $x$ and $y$ in terms of elliptic functions. The function $Z_i$ restricted to the quartic curve which is the intersection $\{F=0\} \cap\{ G_E=0\}$ is necessarily a fourth order theta function on the curve \cite{FK01}. It has four zeroes on the curve which we can determine by algebra. We illustrate this in the case of $Z_0$. If $Z_0 = 0$, it follows from the form of $F$ that one must also have that either $Z_1 = 0$ or $Z_2 = 0$. It then also follows from the form of $G_e$ that either $Z_3 = 0$ or $gZ_3 - Z_1 - Z_2 = 0$. Hence the points of intersection in $\{Z_0 = 0\} \cap \{ F=0\} \cap\{ G_e=0\}$ are $[0:0:1:0] + [0:1:0:0] + [0:0:g:1] + [0:g:0:1]$ representing them as a formal linear combination of the intersection points.  We call this the divisor of $Z_0$ and denote it by $(Z_0)$. One can similarly work out the divisors of the other $Z_i$. Here are the results: 
\beann
(Z_0) &=& [0:0:1:0] + [0:1:0:0] + [0:0:g:1] + [0:g:0:1]\\
(Z_1) &=& 2 [0:0:1:0] + [0:0:g:1] + [1:0 : -E: 0]\\
(Z_2) &=& 2 [0:1:0:0] + [1 :-E : 0 : 0] + [0:g:0:1]\\
(Z_3) &=& [0:0:1:0] + [0:1:0:0] + [1 : -E : 0 :0] + [1 : 0 : -E :0]
\eeann
From (\ref{id1} - \ref{id2}) we can then also determine the divisors of $x$ and $y$ for which there is significant cancellation:
\bea \label{x-divisor}
(x) &=&  [0:1:0:0] + [1 : -E : 0 :0] - [0:0:1:0] - [0 : 0 : g :1]\\ \label{y-divisor}
(y) &=&  [0:0:1:0]  + [1 : 0 : -E :0] - [0:1:0:0] - [0 : g : 0 :1].
\eea
\medskip 

Now let us pass to the projection $\pi_2$ and consider the uniformizing variable $\xi$ given by the abelian integral of the first kind
\bea \label{abelint}
\xi &=& \int^t \frac{g d \tau}{\sqrt{\Delta(\tau)}}.
\eea
This sets up a mapping
\beann
(x,y) \to (gxy +1, sgn(x-y)) &=&  \left(t, sgn \sqrt{\Delta(t)}\right) \to \xi
\eeann
from a point on the space quartic  to the parameter $\xi$ on the universal cover of the elliptic curve. Modulo the periods of $\xi$, this map is a globally analytic homeomorphism.

By standard results on the Weierstrass sigma function, $\sigma$ \cite{Mar77}, the uniformization (\ref{abelint}) and the divisor information (\ref{x-divisor} - \ref{y-divisor}) it follows that the functions $x$ and $y$ can be paramerized as
\beann
x &=& c_1 \frac{\sigma(\xi -a_1) \sigma(\xi - a_2)}{\sigma(\xi - b_1) \sigma(\xi-b_2)}\\
y &=& c_2 \frac{\sigma(\xi-b_1) \sigma(\xi - a_3)}{\sigma(\xi - a_1) \sigma(\xi-b_3)}
\eeann
where we have the correspondence
\beann
a_1 &\leftrightarrow &  [0:1:0:0] \\
b_1 &\leftrightarrow &  [0:0:1:0] \\
a_2 &\leftrightarrow &  [1 : -E : 0 :0] \\
b_2 &\leftrightarrow &  [0 : 0 : g :1] \\
a_3 &\leftrightarrow &  [1 : 0 : -E :0] \\
b_3 &\leftrightarrow &  [0 : g : 0 :1]
\eeann
with $\xi$-coordinates on the left and points on the space quartic on the right. The $c_i$ are some constants to be determined. We mentioned earlier that the $\pi_1^{-1}$ desingularizes the plane quartic by pulling back to the Segre embedding. We can now make this more precise by observing from the above that it is the two points $a_1$ and $b_3$ that lie over $[0:1:0]$, and 
$b_1$ and $b_2$ that lie over $[0:0:1]$. (We will observe in the next paragraph that $b_1 = - a_1$ and $b_3 = - b_2$ so that these pairs are elliptic involutes of one another just as their base points are.) 

Using the fact that the elliptic involution on the space quartic is given by the interchange of $Z_1$ and $Z_2$, it follows that the following pairs are in involution on the $\xi$-plane
\beann
a_1 && b_1 \\
a_2 && a_3 \\
b_2 && b_3.
\eeann 
But in $\xi$ the elliptic involution is given by sign change, meaning, that 
\beann
b_1 &=& - a_1\\
a_3 &=& - a_2\\
b_3 &=& -b_2
\eeann
so that the sigma function representations reduce to the following fundamental parametrization
\bea \label{param1}
x(\xi) &=& c_1 \frac{\sigma(\xi-a_1) \sigma(\xi - a_2)}{\sigma(\xi + a_1) \sigma(\xi-b_2)}\\ \label{param2}
y(\xi) &=& c_2 \frac{\sigma(\xi + a_1) \sigma(\xi + a_2)}{\sigma(\xi - a_1) \sigma(\xi + b_2)}.
\eea
We also record here, for later use, an application of the fundamental divisor relation for elliptic functions \cite{Mar77}:
\bea \label{divreln}
2a_1 +a_2 - b_2 \equiv 0 \mod \{ 2 \omega_1, 2 \omega_2\}
\eea 
where $2\omega_1, 2\omega_2$ are fundamental periods for the elliptic space curve.

To help pin down the coefficients $c_i$, let us observe from the $\xi - (x,y)$ correspondence above that 
\beann
\xi = a_2 &\leftrightarrow& (x,y) = (0, -E)\\
\xi = - a_2 &\leftrightarrow& (x,y) = (-E, 0).
\eeann
Using the fact that $\sigma$ is an odd function in (\ref{param1} - \ref{param2}) one has
\beann
x(-\xi) &=& c_1 \frac{\sigma(-\xi-a_1) \sigma(-\xi - a_2)}{\sigma(-\xi + a_1) \sigma(-\xi-b_2)}\\
&=& c_1 \frac{\sigma(\xi+a_1) \sigma(\xi + a_2)}{\sigma(\xi - a_1) \sigma(\xi+b_2)}\\
&=& \frac{c_1}{c_2} y(\xi).
\eeann
Then setting $\xi = a_2$ in this relation and using the previous identifications yields
\beann
-E &=& x(-a_2) \\
&=& \frac{c_1}{c_2} y(a_2) = - \frac{c_1}{c_2} E.
\eeann
Hence, $c_1 = c_2 \doteq c$.  It follows that $x(\xi)$ and $y(\xi)$ are in fact elliptic involutes of each other. As a corollary we see that elliptic involution of the level set in the $(x,y)$-plane is induced by the coordinate exchange $(x,y) \to (y,x)$ and that the second component (\ref{projDS2}) of the dynamic map is itself induced by elliptic involution. This calculation shows that
\bea \label{c}
c= -\frac{c E}{x(-a_2)} &=& E \frac{\sigma(a_1 - a_2) \sigma(a_2 + b_2)}{\sigma(a_1 + a_2) \sigma(2 a_2)}.
\eea

Let us now relate the above observations to the parametrization of the map (\ref{projDS1} - \ref{projDS2}). The fact is that this dynamical system expresses an elegant geometric construction.

Consider an initial condition $(x_0, y_0)$ that corresponds to a point $\xi_0$ on the abstract elliptic curve and its first iterate $(x_1, y_1)$ corresponding to another point $\xi_1$ on the abstract elliptic curve. In other words
\beann
(x_0, y_0) &=& (x(\xi_0), y(\xi_0))\\
(x_1, y_1) &=& (x(\xi_1), y(\xi_1))
\eeann
where $x(\xi)$ is defined by (\ref{param1}).

Starting with the initial point one fixes the vertical line $x = x_0$. This line meets the plane quartic (\ref{T-invariant}) in four points, two of which are the (multiplicity 2) double point $[0:1:0]$ at infinity. A third point is, of course, the initial point $(x_0, y_0)$, which we know lies on the plane quartic. To find the remaining point is a matter of algebra; its $y$-coordinate is the other root of
\beann
gx_0(gx_0 -1) y^2 -(gx_0^2 - (1+g)x_0 +1)y - \left( gx_0(gx_0 -1) y_0^2 -(gx_0^2 - (1+g)x_0 +1)y_0\right).
\eeann
After cancellations we find that this root is simply
\beann
\frac{x_0 - 1}{gx_0} - y_0;
\eeann
So the residual point of intersection of the line $x = x_0$ with the invariant curve is
\beann
\left(x_0, \frac{x_0 - 1}{gx_0} - y_0\right).
\eeann
Finally, applying the elliptic involution to this point yields our map
\beann
(x_1, y_1) &=& \left( \frac{x_0 - 1}{gx_0} - y_0, x_0\right).
\eeann 
So, in summary, our dynamic map can be described in completely geometric terms: Starting with an initial point  $(x_0, y_0)$ on the planar invariant curve one finds the residual point of intersection of the line $x=x_0$ with the curve and then flips it across the diagonal $x=y$ to produce  $(x_1, y_1)$. Clearly this process can be arbitrarily iterated. It can also be reversed to yield the inverse map by starting with the vertical line $y = y_0$ and proceeding analogously. 

Hence our map determines an algebro-geometric addition law on each of the affinely smooth level sets of the invariant. (We shall see in the next section that this extends to singular affine level sets as well.) Indeed, this type of addition law has been studied before \cite{G76} and shown to linearize on the abstract elliptic curve. Hence, there is a phase $\nu$ such that $\xi_1 = \xi_0 + \nu$ and the higher iterates $x_n = x(\xi_n)$ are determined by $\xi_n = \xi_0 + n \nu$.
To pin down this phase we apply Abel's theorem, as described in \cite{G76}, which in our setting implies that if $A, B, C, D$ are four points of intersection of a line in the plane with the planar quartic (\ref{T-invariant}) then
\beann
\xi(A) + \xi(B) + \xi(C) + \xi(D) = K
\eeann
where $K$ is a constant  {\it independent} of the planar line one considers. So for our case, in choosing the line $x = x_0$ one has
\beann
\xi_0 + 2 \xi([0:1:0]) -\xi_1 = K.
\eeann
(The factor of 2 here comes from the fact that $[0:1:0]$ is a double point on the curve.)
But in fact, as we observed, this linear relation holds for any iterate and so one has in general
\bea \label{Abel}
\xi_n + 2 \xi([0:1:0]) -\xi_{n+1} = K.
\eea
To pin down the constant $K$ one can consider the line $x =0$ in $\mathbb{P}^2$. Starting with the standard projectivization in coordinates $[x:y;z]$ of (\ref{T-invariant}),
\bea \label{projinvt}
(1+g)xyz^2 - yz(z^2 + gx^2) -xz(z^2 + g y^2) + g^2 x^2 y^2 - E z^4 = 0.
\eea
Considering this in the vicinity of the point $[0:1:0]$ and setting $y=1$, the invariant has the form
\bea \label{doublept}
(1+g)xz^2 - z(z^2 + gx^2) -xz(z^2 + g ) + g^2 x^2 - E z^4 = 0.
\eea
To leading order near $[0:1:0]$ this has the form $0 = gx(gx -z)$ + higher order terms, and so this point is a double point of the curve (as we had already observed) and the two branches of the curve there are $x = 0$ and $z = gx$. Hence the line $x=0$ is tangent to the first branch and the line $x=0$ must meet this double point with multiplicity 3. Indeed, this is confirmed by setting $x=0$ in the LHS of (\ref{doublept}) which evaluates to $-z^3$. There is therefore just one further point of intersection of the line with the affine part of the curve (\ref{projinvt}). By setting 
$z=1$ and $x=0$ to get
\beann
y = -E,
\eeann
one sees that the third point of intersection is $[0:-E: 1]$ which corresponds to $a_2$ in the Segre model.
The points at infinity in the intersection of $x=0$ with the curve correspond to $2 a_1$, because of the tangency,  and $b_3$. It then follows from Abel's theorem that
\beann
2 a_1 + a_2 + b_3 &=& K\\
2 a_1 + a_2 -  b_2 &=& K\\
0 &\equiv & K.
\eeann
where the second line follows from the involution pairings and the third line follows from (\ref{divreln}). Therefore the RHS of (\ref{Abel}) vanishes modulo periods.

Putting all this together one has
\bea \label{phase}
\nu = \xi_{n+1} - \xi_n &\equiv& 2 \xi([0:1:0]) \\ \nonumber
&=& a_1 + b_3 \\ \nonumber
&=& a_1 - b_2 \\ \nonumber
&\equiv & -(a_1 + a_2) ,
\eea
which is clearly independent of $n$. The evaluations in the second and third equalities are made using the definitions of the $a_j, b_j$ and the relations among them are determined by involution and (\ref{divreln}). 
\bigskip

Finally, we are in a position to present an elliptic parametrization of the map (\ref{projDS1} - \ref{projDS2}):
\bea \label{addparam1}
x(\xi + \nu) &=& \frac1g - \frac1g  \frac1c \frac{\sigma(\xi+a_1) \sigma(\xi - b_2)}{\sigma(\xi - a_1) \sigma(\xi-a_2)} 
- c \frac{\sigma(\xi + a_1) \sigma(\xi + a_2)}{\sigma(\xi - a_1) \sigma(\xi + b_2)}\\ \label{addparam2}
y(\xi + \nu) &=& c \frac{\sigma(\xi-a_1) \sigma(\xi - a_2)}{\sigma(\xi + a_1) \sigma(\xi-b_2)}.
\eea
Making use of (\ref{param1}) in (\ref{addparam1}) we see that the poles on the LHS of (\ref{addparam1}) are located at $\xi = -\nu - a_1 = a_2$ and $\xi = -\nu + b_2 = 3 a_1 +2 a_2$ while those on the RHS must occur among the values $\xi = a_1, \xi = a_2$ and $\xi = - b_2$. (One of these latter values cannot occur as a pole in order to maintain the required balance between the number of poles on the two sides of equation (\ref{addparam1}); i.e. upon passing to a common denominator on the LHS, one of the zeroes of the numerator must cancel one these possible polar values.) So upon comparison we see that $a_2$ is a pole. On the other hand $a_1$ must be the potential pole on the RHS that gets cancelled, since otherwise we would deduce that $a_2 = - a_1$ which we know from our earlier analysis is not the case. Hence we may deduce that
\bea  \nonumber
3 a_1 + 2 a_2 &=& - b_2 \\ \nonumber 
3 a_1 + 2 a_2 &=& -2 a_1 - a_2 \\ \label{balanceformula}
5 a_1 + 3 a_2 &=& 0.
\eea

\subsection{Separatrix Degeneration to the Combinatorial Orbit}
We now specialize our considerations to the case of the invariant level curve which passes through the unique hyperbolic fixed point of the map (\ref{TMap}) which has energy level $e_1$. This energy is given as a function of $g$ in Table \ref{table:DegenerateEnergies}:
\beann
e &=& \dfrac{1 + 20 g - 8g^2 + (8g-1)\sqrt{1-8g}}{32g^2}
\eeann
which we will continue to denote by $e$ in this section. Setting $g=\dfrac{1- \gamma^2}{8}$ we may express this and other parameters relevant to the degenerate level set purely in terms of $\gamma$:
\bea
g &=& \frac{1-\gamma^2}{8}\\
\gamma &=& \sqrt{1 - 8 g}\\ \label{E-gamma}
e &=& \frac{(\gamma + 3)^3}{4(1 + \gamma)(1 - \gamma^2)}\\ 
x_* & = & \frac{1 - \sqrt{1 -8g}}{4g}\\
&=& \frac2{1 + \gamma}
\eea
where $x_*$ denotes the fixed point, (\ref{fixedpt}), of the separatrix. Given these definitions we also have a concise expression for the eigenvalues of the linearized combinatorial orbit at $x_*$
\bea \label{g-eigenvalue}
\chi^{\pm 1} &=& \frac{1 \pm \sqrt{1 - 4 g^2 x^4_*}}{2gx^2_*}\\ \label{eigenvalue}
&=& \left(\frac{1 + \sqrt{\gamma}}{1 - \sqrt{\gamma}}\right)^{\pm 1}.
\eea
In addition, the $t$-equation degenerates to
\beann
0 &=& [2(1-\gamma)t - (3-\gamma)]^2 
[4 (\gamma^2 - 1)^2 t^2 -4(1-\gamma^2) (\gamma - 3) (\gamma + 5)t + (\gamma - 3)^4]
\eeann
whose roots are
\beann
t_0, t_1 &=& \frac{\gamma - 3}{2 (\gamma^2-1)} \left( 5 + \gamma \pm 4\sqrt{ (1 + \gamma)} \right)\\
t_2 = t_3 &=&  \frac{3-\gamma}{2 (1 - \gamma)} = \frac12 \left( 1 + \frac2{1 - \gamma}\right).
\eeann
By completing the square in the quadratic factor, the equation for $t$ may be rewritten as 
\bea
0 &=& \left(z + r_1\right)^2 
\left(z^2 - r_2^2\right)\\
r_1 &=&  \frac6{1 - \gamma^2}\\
r^2_2 &=& 4\frac{(\gamma - 3)^2}{(\gamma - 1)^2 (1+\gamma)}\\
\sqrt{r_1^2 - r_2^2} &=& \frac{2 \sqrt{-\gamma(\gamma^2 -5\gamma +3)}}{1 - \gamma^2}\\ \label{zparam}
z &=& t - \frac12 \frac{(3-\gamma)(5 + \gamma)}{1 - \gamma^2}.
\eea
Here $\sigma_d$ denotes the generalized sigma function that $\sigma$ degenerates to on the singular curve that (\ref{T-invariant}) limits to. We need to determine the form of $\sigma_d$. We observe that the equation (\ref{Weierstrass}), determining the Weierstrass points, degenerates in this limit to
\beann
((1 + \gamma) x - 2)^2 \left((1-\gamma)^2 x^2 - 4(1 - \gamma^2) (\gamma + 3) x + 4(\gamma^2 + 10 \gamma + 5)\right) = 0.
\eeann 
The sigma function may be explicitly expressed in terms of the odd Jacobi theta function $\vartheta_1$ \cite{Mar77} as
\beann 
\sigma(\xi) &=& 2 \omega_1 \dfrac{\vartheta_1(v)}{\vartheta'_1(0)} e^{2 \omega_1 \eta_1 v^2}
\eeann
where $v= \frac{\xi}{2\omega_1}$ and $\vartheta_1$ depends on the modular parameter
$\tau = \omega_2/\omega_1$ where $\omega_1$ is real.  $\eta_1$ is an integration constant of the Weierstrass $\zeta$-function. The separatrix limit here in which $t_2$ and $t_3$ coalesce (or equivalently $x(\omega_2)$ and $x(\omega_3)$ coalesce) corresponds to the infinite period limit in which $\tau \to \infty$. In this limit, $\vartheta_1$ must converge to a linear combination of exponentials of the form $e^{s \xi}$ and $e^{-s \xi}$ which is odd since $\vartheta_1$ remains odd throughout the limit. Hence in fact it must be a multiple of $\sinh(s \xi)$. Then, setting $s = \lim \frac{1}{2\omega_1}$ and $\rho = \lim \frac{\eta_1}{2\omega_1}$, the limiting form is 
\bea \label{degsigma}
\sigma_d(\xi) &=& e^{\rho \xi^2} \frac{\sinh(s \xi)}{s}.
\eea 
One has the implicit integral inversion:
\bea \label{integral}
\xi  &=& \int\frac{dz}{(z + r_1)\sqrt{z^2 - r_2^2}}.
\eea
Setting $z = r_2 \cosh(x)$ this transforms to 
\beann
\xi &=& \int \frac{dx}{r_2 \cosh(x) + r_1}\\
&=& \frac2{r_2}\int \frac{d u}{u^2 + \frac{2r_1}{r_2} u +1}\\
&=& \frac2{ \sqrt{r_2^2 - r_1^2}}\tan^{-1} \frac{r_2 u + r_1}{\sqrt{r_2^2 - r_1^2}} + \alpha
\eeann
where $u = e^x$. Unravelling the substitutions and setting $z = r_2 w$ yields
\beann
\sqrt{1 - (r_1/r_2)^2} \tan\left(\frac{\sqrt{r_2^2 - r_1^2}}{2} (\xi - \alpha) \right) 
&=& (w + r_1/r_2) \pm \sqrt{w^2 - 1} \\
\sqrt{(r_1/r_2)^2 - 1} \tanh \left(\frac{\sqrt{r_1^2 - r_2^2}}{2} (\xi - \alpha) \right) - r_1/r_2
&=& w \pm \sqrt{w^2 - 1}\\
\frac{\sqrt{-\gamma(\gamma^2 -5\gamma +3)}}{(3-\gamma) \sqrt{1 + \gamma}} \tanh \left(\frac{ \sqrt{-\gamma(\gamma^2 -5\gamma +3)}}{1 - \gamma^2} (\xi - \alpha) \right) - \frac{3}{(\gamma-3)\sqrt{1 + \gamma}} &=&w \pm \sqrt{w^2 - 1}\\
\frac{1}{(3-\gamma)\sqrt{1 + \gamma}} \left(\sqrt{-\gamma(\gamma^2 -5\gamma +3)} \tanh \left( \frac{ \sqrt{-\gamma(\gamma^2 -5\gamma +3)}}{1 - \gamma^2} (\xi - \alpha) \right) +3\right) &=&w \pm \sqrt{w^2 - 1}
\eeann
This last expression presents $\xi$ as a {\it closed form} branched 2:1 cover of the $w$ plane or, by scaling and translation, the $t$-plane. We can also pin down the values at branch points:
\beann
z = - r_1 \iff  \xi - \alpha &=& \pm \infty \\
z = \pm r_2 \iff  \xi - \alpha  &= &\frac2{\sqrt{r_1^2 - r_2^2}}\tanh^{-1} \left(\frac{r_1 + r_2 }{r_1 - r_2}\right)^{\pm 1/2}.\\
\eeann
We can apply the above inversion analysis to the near-separatix linearization. Setting 
$z = -r_1 + q$ in (\ref{integral}) gives
\beann
\xi &=& \int_\epsilon \frac{dq}{q\sqrt{(r_1 - q)^2 - r_2^2}} \\
- \sqrt{r_1^2 - r_2^2} \xi &=& \log \epsilon (1 + \mathcal{O}(\epsilon/ \log \epsilon))\\
\epsilon &=& e^{- \sqrt{r_1^2 - r_2^2} \xi} (1 + o(1)).
\eeann
Comparing to the inversion of (\ref{integral}) detailed above, one may infer that
\bea
s &=& \frac12 \sqrt{r_1^2 - r_2^2}
\eea
and relate this to the eigenvalues at the linearization at the separatrix: the linearization  realized in (\ref{phase}) presents the dynamics as a phase increment on the covering space of the planar quartic. This fact extends to the degenerate limit of the separatrix. This flow is the tangent space flow near the fixed point and so must coincide with the linearized flow in the directions of the stable or the unstable manifolds. But these flows, we know, are just the power flows of the respective eigenvalues, which are the exponentials of these phase translations. It follows  that
\bea \label{chi}
\chi^\pm &=& e^{\mp 2 s \nu}.
\eea
\medskip

One can now say more about the dynamical system (\ref{projDS1} - \ref{projDS2}) on the separatrix level set. The parametrization on this level set becomes
\beann
x(\xi ) &=& c \,\, e^{\rho (a_2^2 - b_2^2)} e^{-2\rho(2a_1 -b_2+a_2)\xi}
\frac{\sinh(s(\xi - a_1)) \sinh(s(\xi - a_2))}{\sinh(s(\xi  + a_1)) \sinh(s(\xi - b_2))}\\ \nonumber
&=& c \,\, e^{\rho(a_2^2 - b_2^2)} e^{-2\rho(2a_1 -b_2+a_2)\xi} e^{- s (2 a_1 + a_2 - b_2)}
\frac{\left( 1 - e^{-2s(\xi - a_1)}\right) \left( 1 - e^{-2s(\xi - a_2)}\right)}{\left( 1 - e^{-2s(\xi + a_1)}\right) \left( 1 - e^{-2s(\xi - b_2)}\right))}\\
y(\xi) &=& x(-\xi)
\eeann
We note that $2a_1 -b_2+a_2$ is the limit of divisors of an elliptic function as noted in (\ref{divreln}) and therefore must equal the limit of a linear combination of periods. By re-centering the origin of $\xi$ we may assume this sum is zero. Independently, this is also verified from the dynamical systems perspective since, otherwise,  $x$ would blow up or vanish as $n$ goes to infinity which our phase plane analysis has shown is not the case.  (Indeed, this argument shows that the real part of $2a_1 -b_2+a_2$ must already be zero, before any centering.) Hence, the expression for the first component reduces (using $b_2 = 2a_1 + a_2$) to 
\beann 
x(\xi) &=& c \,\, e^{- 4\rho a_1 (a_1 + a_2)} 
\frac{\left( 1 - (e^{-2s\nu})^{\frac{\xi - a_1}{\nu}}\right) \left( 1 - (e^{-2s\nu})^{ \frac{\xi - a_2}{\nu}}\right)}{\left( 1 - (e^{-2s\nu})^{\frac{\xi + a_1}{\nu}}\right) \left( 1 - (e^{-2s\nu})^{ \frac{\xi - 2 a_1 - a_2}{\nu}}\right)}. \,\,\,\,\,\,\,\,\,\,\,\,\,\,\,\,
\eeann 
In the separatrix limit, as $\xi \to \infty$, we must have 
\bea \label{pre-ev}
c e^{- 4\rho a_1 (a_1 + a_2)} & = & x_*.
\eea
Hence, one finally has
\bea \label{degenorbit}
x(\xi) &=& x_*
\frac{\left( 1 - (e^{-2s\nu})^{\frac{\xi - a_1}{\nu}}\right) \left( 1 - (e^{-2s\nu})^{ \frac{\xi - a_2}{\nu}}\right)}{\left( 1 - (e^{-2s\nu})^{\frac{\xi + a_1}{\nu}}\right) \left( 1 - (e^{-2s\nu})^{ \frac{\xi - 2 a_1 - a_2}{\nu}}\right)} \\
y(\xi) &=& x(-\xi).
\eea
To establish (\ref{eq: T_j closed form}) it remains to impose the initial condition (\ref{eq: Tj rec derivation ic}). In our current framework this means that we need to find $\xi_0$ such that
$x(\xi_0 - \nu) = 0$. This is achieved in terms of (\ref{degenorbit}) by setting  
\beann
\xi_0 - \nu - a_1 &=& 0\\
\xi_0 + a_1 + a_2 - a_1 &=& 0\\
\xi_0 &=& - a_2.
\eeann
With this in place, the combinatorial orbit is given by
\beann
x(\xi_0 + n \nu) &=& x_*
\frac{\left( 1 - (e^{-2s\nu})^{n+1}\right) \left( 1 - (e^{-2s\nu})^{ n +\frac{2 a_2}{a_1 + a_2}}\right)}{\left( 1 - (e^{-2s\nu})^{n+\frac{a_2 - a_1}{a_1 + a_2}}\right) \left( 1 - (e^{-2s\nu})^{n + 2}\right)} \\
&=& x_*
\frac{\left( 1 - (e^{-2s\nu})^{n+1}\right) \left( 1 - (e^{-2s\nu})^{ n +\frac{a_2 - a_1}{a_1 + a_2} + 1}\right)}{\left( 1 - (e^{-2s\nu})^{n+\frac{a_2 - a_1}{a_1 + a_2}}\right) \left( 1 - (e^{-2s\nu})^{n + 2}\right)}\\
&=& x_*
\frac{\left( 1 - (e^{-2s\nu})^{n+1}\right) \left( 1 - (e^{-2s\nu})^{ n +5}\right)}{\left( 1 - (e^{-2s\nu})^{n+4}\right) \left( 1 - (e^{-2s\nu})^{n + 2}\right)}
\eeann
where the last line is justified by (\ref{balanceformula}) which persists under degeneration because Abel's theorem does. In particular one sees that $\frac{a_2 - a_1}{a_1 + a_2} = 4$ is equivalent to $5 a_1 + 3 a_2 = 0$. 

Finally we need to relate this to Bousquet-Melou's $Z$ in (\ref{eq: T_j closed form}). Comparing to (\ref{chi}) one has $\chi = \chi^+ = e^{-2s\nu}$ and so one may rewrite the combinatorial flow as
\beann
x(\xi_0 + n \nu) &=& x_*
\frac{\left( 1 - \chi^{n+1}\right) \left( 1 - \chi^{ n +5}\right)}{\left( 1 - \chi^{n+4}\right) \left( 1 - \chi^{n + 2}\right)}.
\eeann
It then suffices to show that the eigenvalue $\chi$ satisfies the equation (\ref{Zeqn}) for $Z$.
Starting with (\ref{g-eigenvalue}), and using the identity $2gx^2_* = x_* -1$ for the fixed point one has
\beann
\chi^\pm &=& \frac{1 \pm \sqrt{1 - (x_* - 1)^2}}{x_* -1}
\eeann
which is easily inverted by appropriate squaring to yield
\beann
x_* &=& \frac{(\chi + 1)^2}{\chi^2 +1}
\eeann
which finally establishes (\ref{eq: T_j closed form}).

We close with a summary of the key results in this section:

\begin{theorem}\label{thm: main theorem c=0}
The discrete integrable recurrence
\begin{equation}
x_n = 1 + gx_n(x_{n-1} + x_{n+1})
\end{equation}
with invariant
\begin{equation}
I(x_{n}, x_{n-1}) = x_nx_{n-1}\left(1 - g x_n\right) \left( 1 - g x_{n-1}\right) + g x_n x_{n-1} 
- x_n - x_{n-1}+ \frac{1}{g}
\end{equation}
and limit $\lim_{n\rightarrow\infty} x_n = x_-^*$ satisfying $x_-^* = 1 + 2gx_-^{*2}$ possesses a solution
\begin{equation}
x_n = x(\xi_0 + n\nu) = x_*
\frac{\left( 1 - \chi^{n+1}\right) \left( 1 - \chi^{ n +5}\right)}{\left( 1 - \chi^{n+4}\right) \left( 1 - \chi^{n + 2}\right)},
\end{equation} 
where $\nu = 2\xi([0:1:0])$ under the composite Abel map determined by (\ref{integral}), $\pi_1$ and $\pi_2$.  Here $\chi = (\frac{1 + \sqrt{\gamma}}{1 - \sqrt{\gamma}}) = e^{-2s\nu}$ where $\gamma = \sqrt{1 - 8g}$ and $s = \frac12 \sqrt{r_1^2 - r_2^2}$. $\xi_0$ is determined by the initial condition $x(\xi_0 - \nu) = 0$.

Finally, making the combinatorial identification $T_n(g) = x_n$ and $T(g) = x^*_-$ one has 

\begin{equation}
T_n = T \frac{(1-\chi^{n+1})(1-\chi^{n+5})}{(1-\chi^{n+4})(1-\chi^{n+2})}
\end{equation}
for $\chi$ satisfying $T = \frac{(1 + \chi)^2}{(1 + \chi^2)}$.
\end{theorem}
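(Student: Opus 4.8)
The statement collects the constructions of this subsection, so the plan is to assemble them into the four asserted facts: the sigma‑function parametrization of a generic orbit, its degeneration on the $e_1$ level set into a ratio of factors $1-\chi^{\bullet}$, the pinning‑down of the exponents by the boundary condition $T_{-1}=0$, and the identification $T=(1+\chi)^2/(1+\chi^2)$ with the variable $Z$ of \eqref{eq: T_j closed form}. First I would treat a generic smooth level curve $I(x,y)=e$. Passing to the Segre model $\{F=0\}\cap\{G_E=0\}\subset\mathbb{P}^3$, I would use the divisor computations \eqref{x-divisor}--\eqref{y-divisor} for $x=Z_2/Z_0$, $y=Z_1/Z_0$, together with the fact that the elliptic involution is the interchange of $Z_1$ and $Z_2$ and the divisor relation \eqref{divreln} $2a_1+a_2-b_2\equiv 0$, to obtain the parametrization \eqref{param1}--\eqref{param2} with a single constant $c$ as in \eqref{c}. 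For the dynamics I would use the geometric description of the map — residual intersection of the vertical line $x=x_0$ with the plane quartic, followed by the diagonal flip — so that Abel's theorem for that line gives $\xi_{n+1}-\xi_n=\nu\equiv 2\xi([0:1:0])=a_1-b_2\equiv-(a_1+a_2)$, independent of $n$ (equation \eqref{phase}), while a pole‑count on \eqref{addparam1} forces the balance relation \eqref{balanceformula} $5a_1+3a_2=0$.

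Next I would specialize to the energy $e_1$ through the hyperbolic fixed point, set $g=(1-\gamma^2)/8$, and follow the degeneration. On this level set the branch‑point quartic \eqref{t-eqn} acquires a double root, which is the $\tau\to\infty$ (infinite real period) limit of the curve; since $\vartheta_1$ stays odd throughout, it collapses to a multiple of $\sinh(s\xi)$, giving $\sigma_d(\xi)=e^{\rho\xi^2}\sinh(s\xi)/s$ as in \eqref{degsigma}. Substituting $\sigma_d$ into \eqref{param1}, the Gaussian prefactors recombine via the limiting form of \eqref{divreln} — which one may take to vanish by recentering $\xi$, consistent with $x_n$ having a finite nonzero limit — so that all $e^{\rho(\cdots)}$ factors cancel, and the remaining overall constant is fixed by the $\xi\to\infty$ limit to equal $x_*$; this is \eqref{degenorbit}. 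Matching the near‑separatrix linearization $\epsilon\sim e^{-\sqrt{r_1^2-r_2^2}\,\xi}$ against this inversion identifies $s=\tfrac12\sqrt{r_1^2-r_2^2}$, so that $\chi^{\pm}=e^{\mp 2s\nu}$ coincide with the Jacobian eigenvalues \eqref{g-eigenvalue}--\eqref{eigenvalue} at $x_*$ and $\chi=\tfrac{1+\sqrt\gamma}{1-\sqrt\gamma}$.

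Finally I would impose the combinatorial boundary condition. Solving $x(\xi_0-\nu)=0$ in \eqref{degenorbit} with $\nu\equiv-(a_1+a_2)$ gives $\xi_0=-a_2$, so $x(\xi_0+n\nu)$ carries the exponents $n+1$, $n+\tfrac{2a_2}{a_1+a_2}$, $n+\tfrac{a_2-a_1}{a_1+a_2}$, $n+2$; the relation $5a_1+3a_2=0$ — which persists under the degeneration because Abel's theorem does — is equivalent to $\tfrac{a_2-a_1}{a_1+a_2}=4$, yielding the exponents $n+1,n+5,n+4,n+2$ and the stated closed form for $x_n$. Taking $n\to\infty$ identifies $x_*=T$ and $x_n=T_n$. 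The last relation follows from $\chi^{\pm}=\frac{1\pm\sqrt{1-(x_*-1)^2}}{x_*-1}$ (using $2gx_*^2=x_*-1$): squaring to clear the radical gives $x_*=(\chi+1)^2/(\chi^2+1)$, which is \eqref{Zeqn}, so $\chi=Z$ and \eqref{eq: T_j closed form} is recovered.

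\emph{Main obstacle.} The delicate part is the bookkeeping that pins down the four shift‑exponents: establishing $\nu=2\xi([0:1:0])\equiv-(a_1+a_2)$, the exact cancellation of the Gaussian factors $e^{\rho(\cdots)}$ after the substitution of $\sigma_d$, and the balance relation $5a_1+3a_2=0$. Each of these is a linear identity among the divisor points $a_j,b_j$ coming from Abel's theorem on the (singular) plane quartic together with pole‑counting on the addition law, and the one genuine subtlety is checking that these identities survive verbatim in the $\tau\to\infty$ degeneration rather than acquiring extra period contributions. Everything else reduces to hyperbolic‑trigonometric algebra and two single‑variable inversions.
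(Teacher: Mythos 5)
Your proposal is correct and follows essentially the same route as the paper's own development in Section \ref{e-param0}: the Segre/divisor construction of \eqref{param1}--\eqref{param2}, Abel's theorem giving $\nu\equiv-(a_1+a_2)$ and the pole-count giving $5a_1+3a_2=0$, the $\tau\to\infty$ degeneration of $\sigma$ to $\sigma_d$, the identification $s=\tfrac12\sqrt{r_1^2-r_2^2}$ and $\chi^{\pm}=e^{\mp2s\nu}$ with the Jacobian eigenvalues, the choice $\xi_0=-a_2$ from $x(\xi_0-\nu)=0$, and the final inversion $x_*=(\chi+1)^2/(\chi^2+1)$. The exponent bookkeeping ($n+1$, $n+\tfrac{2a_2}{a_1+a_2}=n+5$, $n+\tfrac{a_2-a_1}{a_1+a_2}=n+4$, $n+2$) checks out exactly as in the paper.
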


\section{Concluding Remarks} \label{Conclusions}
In this last section we mention a few directions that are currently under investigation for building on what was derived in this paper.

\subsection{Combinatorial Problems in Non-Autonomous Extensions} \label{conclusions1}
The continuous Painlev\'e equations that give rise to Painlev\'e's famous six transcendents are non-autonomous differential equations. In the {\it autonomous limits} of these equations the transcendent limits to a classical function such as the elliptic functions discussed in this paper.

The discrete Painlev\'e equations also have natural non-autonomous extensions. In the case of 
dPI, we consider such an extension of \eqref{eq: discrete dynamics general c}, with $c=1$ having the general form
\bea \label{non-aut-dPI}
x_{n+1} + x_{n-1} &=& \frac{n}N \frac1{g x_n} - \frac{\zeta}{g} - x_n,
\eea 
$g, N$ and $\zeta$ are parameters.  Setting $\zeta = -1 = \frac{n}N$ one recovers our autonomous dPI equation. One may now ask, are there orbits of this non-autonomous system having combinatorial significance related to what we saw in the autonomous case? 
The answer is yes and it comes from the analysis of the asymptotics of orthogonal polynomials and their relation to the enumeration of quadrangulations of surfaces of general genus. (The combinatorial problem studied in the autonomous case was for just planar maps; i.e., maps of genus 0.) This connection is mediated by the analysis of the Riemann-Hilbert problem for orthogonal polynomials as it relates to hermitian random matrix models, carried out in \cite{EM03, EMP08, Er11}.  We briefly outline the essentials of this as it relates to \eqref{non-aut-dPI}. 

Define the orthogonal polynomials $p_n(\lambda) = \gamma_n \lambda^n + \cdots$ for positive $\gamma_n$ with respect to the exponential weight 
\beann
w_\zeta(\lambda) &=& \exp\left( - N \left( \frac{\zeta}{2} \lambda^2 + \frac{g}4 \lambda^4\right)\right),
\eeann
for $g > 0$, so that
\beann
\int p_n(\lambda) p_m(\lambda) w_\zeta(\lambda) d\lambda = \delta_{nm}
\eeann
for $n,m \geq 0.$ These polynomials satisfy the three-term recurrence relation
\beann
\lambda p_{n,N}(\lambda) = b_{n+1, N} p_{n+1}(\lambda) + a_{n,N} p_n(\lambda) + b_{n,N} p_{n-1}(\lambda)
\eeann
for $n \geq 0$ and $p_{-1} = 0$. Since the weight is even, the recurrence coefficients $a_{n,N}$  are all zero, and the polynomials are entirely defined by the recurrence coefficients $b_{n,N}$. These coefficients satisfy a {\it nonlinear} recurrence of their own: 
\bea \label{Freud}
g b^2_{n,N} \left( b^2_{n+1, N} + b^2_{n, N} + b^2_{n-1, N}\right) + \zeta b^2_{n, N} 
= \frac{n}{N}.
\eea
In approximation theory \eqref{Freud} is referred to as Freud's equation.
It is straightforward to see that this coincides with \eqref{non-aut-dPI} if one sets $x_n = b^2_{n, N}$. The initial conditions that then relate these orthogonal polynomials recurrence coefficients to an orbit within our non-autonomous dPI system, \eqref{non-aut-dPI}, are 
$x_0 = 0$ and $x_1 = \frac{c_2}{c_0}$ where
\beann
c_j &=& \int \lambda^j w_\zeta (\lambda) d\lambda
\eeann 
is the $j^{th}$ moment of the measure. The latter is derived from the first equation in the recurrence relations,
\beann
\lambda p_0(\lambda) &=& b_1 p_1(\lambda) + b_0 p_{-1}(\lambda)\\
\lambda \gamma_0 &=& b_1 p_1(\lambda)
\eeann
by the normalization requiremnt  that $p_1(\lambda) = \frac{\lambda \gamma_0}{b_1}$ has norm 1:
\beann
\int p_1(\lambda) p_1(\lambda) w_\zeta(\lambda) d\lambda &=& \int \left( \frac{\lambda \gamma_0}{b_1}\right)^2 w_\zeta(\lambda) d\lambda \\
&=& \frac{\gamma^2_0}{b^2_1} \int \lambda^2  w_\zeta(\lambda) d\lambda\\
&=& 1,
\eeann
so that
\beann
x_1 = b^2_1 &=& \gamma^2_0 \int \lambda^2  w_\zeta(\lambda) d\lambda.
\eeann
Similarly, it is required that $p_0(\lambda) \gamma_0$ have norm 1:
\beann
\int p_0(\lambda) p_0(\lambda) w_\zeta(\lambda) d\lambda 
&=& \gamma^2_0 \int  w_\zeta(\lambda) d\lambda\\
&=& 1.
\eeann
So finally one has the second initial condition:
\beann
x_1 &=& \frac{\int  \lambda^2 w_\zeta(\lambda) d\lambda}{\int  w_\zeta(\lambda) d\lambda}.
\eeann

The connection to the combinatorial problem of enumerating quadrangulations now stems from a result in \cite{EMP08} where it is shown that for $N$ large and with $X \doteq \frac{n}{N} \sim 1$, one has a full asymptotic expansion of $x_n = b^2_{n,N}$:
\beann
x_n \sim X \left( z_0(s) + \frac1{n^2} z_1(s) + \frac1{n^4} z_2(s) + \dots \right)
\eeann
where $s = -X\frac{g}4$. This expansion is uniformly valid on compact subsets of complex $s$ with $\Re s < 0$. The coefficients $z_g(s)$ are the generating functions for 4-valent maps (whose dual maps are the quadrangulations in question) of genus $g$. 

It will be of interest to determine how the combinatorial features just described are related to the 
dynamic properties of the orbit, corresponding to the orthogonal polynomial recurrence relations, of non-autonomous dPI. It will also be interesting to see if this can be related to the analysis of the autonomous case discussed in this paper. There  is already one indication that such relations do hold: the generating function, $z_0(s)$, which enumerates planar 4-valent maps solves the same functional equation, \eqref{eq: R functional equation ch2}, as does the limit $R(g)$ of the distance generating funcitons $R_n(g)$. Thus the non-autonomous $x_n$ and the autonomous $R_n$ agree at leading order in large $n$. These questions are currently under further investigation and will be reported on elsewhere. 

As indicated at the start of section \ref{triangulations} there is an analogous relation between the combinatorics of (Eulerian) triangulations and the $T_j$ generating functions \cite{BDG03a}.  The connection to recurrence relations for (generalized) orthogonal polynomials is more complicated on several levels in this case. For one thing, the coefficients $a_n$ will now no longer automatically be zero. However, the analysis in \cite{EP12} has shown how to handle this additional complication. Based on this we are currently exploring non-autonomous extensions of the $c = 0$ system. 

\subsection{Elliptic Combinatorics} \label{conclusions2}
The combinatorial focus of this paper has been on the separatrix orbit of \eqref{eq: discrete dynamics general c}. Is there a related combinatorial significance for the other, generic, elliptically parametrized orbits? One affirmative answer to this question has been provided in \cite{BDG03a}. There the study of random embedded  trees, corresponding to the sytem  \eqref{eq: discrete dynamics general c} with $c=0$ but with the added structure of {\it walls} is discussed. A wall introduces a conditioning on the random system that strictly bounds the size of the labels/ positions that the random process can attain. A ``one-wall" case which is that the half-line bounded below at 0 or -1  reproduces the model we have been studying in this paper (introduced  in section \ref{triangulations}) that corresponds to the separatrix orbit.  By contrast, a``two-wall" condiitoning would require labels to take their values in a finite subinterval of $\mathbb{Z}$; i.e., it would replace the asymptotic boundary condition (\ref{eq: Tj rec derivation ic} b) by another finite boundary condition. From the dynamical point of view this corresponds to a two-point {\it finite} boundary value problem. The one-wall case we studied in this paper corresponded to a two-point {\it semi-infinite} boundary value problem. 
One may derive recursive formulas for the generating functions of such trees in the general two point boundary value problem of a two-wall conditioning. The effect of this is to select one of the generic orbits we described in section \ref{e-param0} whose closed form solutions for the generating functions will now be in terms of elliptic functions. These generating functions may be used to study the continuum scaling limit in terms of the periods of the associated elliptic curve, which may be applied to the  probabilistic analysis of population spreading. We are studying how the geometric analysis developed in section \ref{e-param0} might be used in this application.


\begin{thebibliography}{99}
\bibitem{AD99}
D. Aldous  and  P. Diaconis,  {\em Longest Increasing Sequences: From Patience Sorting to the 
              Baik-Deift-Johansson Theorem}, Bulletin of the AMS, {\bf 36}, 413 - 432 (1999)

\bibitem{BDG03}
 J. Bouttier, P. Di Francesco  and E. Guitter,  {\em Geodesic distance in planar graphs}, Nuclear Phys. B, {\bf 663}, 535 - 567, (2003)

\bibitem{BDG03a}
J. Bouttier, P. Di Francesco and E. Guitter, {\em Random trees between two walls: exact partition function}, Journal of Physics A: Mathematical and General, {\bf 36}, 12349, (2003)

\bibitem{BDG03b} 
J. Bouttier, P. Di Francesco and E. Guitter, {\em Statistics of planar graphs viewed from a vertex: a study via labeled trees}, Nuclear Phys. B, {\bf 675}, 631 - 660, (2003)

\bibitem{BG12}
J. Bouttier and E. Guitter, {\em Planar maps and continued fractions}, Comm. Math. Phys., {\bf 309}, 623 - 662, (2012)

\bibitem{BJ02}
P. Bougerol  and T. Jeulin, {\em Paths in Weyl Chambers and Random Matrices}, Probab. Theory related Fields, {\bf 124}, 517 - 543, (2002)

\bibitem{Bou06a}
M. Bousquet-M{\'e}lou, {\em Limit laws for embedded trees: applications to the integrated
              super{B}rownian excursion}, Random Structures and Algorithms, {\bf 29}, 475 - 523 (2006)

\bibitem{BR86}
E. Bender and L. B. Richmond, {\em A survey of the asymptotic behaviour of maps}, 
Journal of Combinatorial Theory, Series B, {\bf 40}, 297 - 329, (1986)

\bibitem{BTR01}
M. Bernardo, T.T. Truong and G. Rollet, {\em The discrete Painlev{\'e} I equations: transcendental integrability and asymptotic solutions}, Journal of Physics A: Mathematical and General, {bf 34}, 3215, (2001)

\bibitem{CV81}
R. Cori and B. Vauquelin, {\em Planar maps are well labeled trees}, Canad. J. Math., {\bf 33},
1023 - 1042, (1981)

\bibitem{Dui10}
J.J. Duistermaat, Discrete integrable systems: QRT Maps and Elliptic Surfaces, Springer, New York, (2010)

\bibitem{EM03}
N. M. Ercolani and K. D. T.-R. McLaughlin, {\em Asymptotics of the partition function for random matrices via{R}iemann-{H}ilbert techniques and applications to graphical enumeration}, Int. Math. Res. Not., {\bf 14}, 755 - 820, (2003)

\bibitem{EMP08}
N. M. Ercolani, K. D. T.-R. McLaughlin, and V. U. Pierce, {\em Random matrices, graphical enumeration and the continuum limit of Toda lattices}, Commun. Math. Phys., {\bf 278}, 31--81, (2008)

\bibitem{EP12} 
N. M. Ercolani and V. U. Pierce, {\em The continuum limit of Toda lattices for random matrices with odd weights} , Commun. Math. Sci., {\bf 10}, 267--305, (2012)

\bibitem{Er11}
N. M. Ercolani, {\em Caustics, counting maps and semi-classical asymptotics},  Nonlinearity, {\bf 24}, 481--526, (2011)


\bibitem{FK01} 
H. Farkas and I. Kra, Theta Constants, Riemann Surfaces and the Modular Group, American Mathematical Society, (2001)

\bibitem{FS09}
P. Flajolet and R. Sedgewick, Analytic combinatorics, Cambridge University Press, Cambridge, (2009)

\bibitem{G76}
P. Griffiths, {\em Variations on a Theorem of Abel}, Inventiones Math., {\bf 35}, 321 - 390, (1976)

\bibitem{mpmath}
Fredrik Johansson and others, mpmath: a {P}ython library for arbitrary-precision floating-point arithmetic (version 0.18), (2013)

\bibitem{Jo98}
K. Johansson, {\em On fluctuations of eigenvalues of random Hermitian matrices}, Duke Mathematical Journal, {\bf 91}, 151 -- 204, (1998)

\bibitem{KNY17}
K. Kajiwara,  M. Noumi and Y. Yamada, {\em Geometric Aspects of Painlev\'e equations}, 
J. Phys. A: Math. Theor., {\bf 50}, 1 - 163, (2017)

\bibitem{Mar77} 
A. I. Markushevich, Theory of functions of a complex variable, Chelsea Publishing Co., New York, (1977)

\bibitem{OConn}
N. O'Connell, {\em Whittaker functions and related stochastic processes}, MSRI Publications, {\bf 65}, 385 - 409, (2014)

\bibitem{PS00}
P. Pr\"ahofer and H. Spohn, {\em Universal distributions for growth processes in 1+1 dimensions and random matrices}, Phys. Rev. Lett., {\bf 84}, 4882-4885, (2000)

\bibitem{QRT89}
G. R. W. Quispel, J. A. G. Roberts and C. J. Thompson, {\em Integrable mappings and soliton equations}, Physica D, {\bf 34}, 183 - 192, (1989)

\bibitem{RG96}
A. Ramani and B. Grammaticos, {\em Discrete Painlev\'e Equations: coalescences, limits and degeneracies}, Physica A: Statistical Mechanics and its Applications, {\bf 228}, 160 - 171, (1996)

\bibitem{Sch97}
{\em Bijective census and random generation of Eulerian planar maps with prescribed vertex degrees}, Electron. J. Combin, {\bf 4}, 20, (1997)

\bibitem{Tut62a}
W.T. Tutte, {\em A census of planar triangulations}, Canad. J. Math, {\bf 14}, 21 - 38, (1962)

\bibitem{Tut62b}
W.T. Tutte, {\em A census of Hamiltonian polygons}, Canad. J. Math, {\bf 14}, 402 - 417, (1962)

\bibitem{Tut62c}
W.T. Tutte, {\em A census of slicings}, Canad. J. Math, {\bf 14}, 708 - 722, (1962)

\bibitem{Tut63}
W.T. Tutte, {\em A census of planar maps}, Canad. J. Math, {\bf 15}, 249 - 271, (1963)










\end{thebibliography}

\end{document}